\newtheorem{theorem}{Theorem}[section]
\newtheorem{lemma}[theorem]{Lemma}
\newtheorem{corollary}[theorem]{Corollary}
\newtheorem{definition}[theorem]{Definition}
\newtheorem{observation}[theorem]{Observation}
\newtheorem{remark}[theorem]{Remark}
\newcommand{\sq}{\hbox{\rlap{$\sqcap$}$\sqcup$}}
\newcommand{\qed}{\hspace*{\fill}\sq}
\newenvironment{proof}{\noindent {\bf Proof.}\ }{\qed\par\vskip 4mm\par}
\newenvironment{proofof}[1]{\bigskip \noindent {\bf Proof of #1:}\quad }
{\qed\par\vskip 4mm\par}
\begin{document}

\begin{titlepage}

\title{On Dynamics in Selfish Network Creation\footnote{See \cite{KL13} for the original publication. Permission to make digital or hard copies of all or part of this work for personal or classroom use is granted without fee provided that copies are not made or distributed for profit or commercial advantage and that copies bear this notice and the full citation on the first page. Copyrights for components of this work owned by others than the author(s) must be honored. Abstracting with credit is permitted. To copy otherwise, or republish, to post on servers or to redistribute to lists, requires prior specific permission and/or a fee. Request permissions from permissions@acm.org. Copyright is held by the owner/author(s). Publication rights licensed to ACM.} \\
       {\large(full version)}}

\author{Bernd Kawald \\
   Department of Computer Science\\Humboldt-Universit\"at zu Berlin\\Berlin, Germany \\
   kawald@informatik.hu-berlin.de \\
   \and
   Pascal Lenzner \\
   Department of Computer Science\\Humboldt-Universit\"at zu Berlin\\Berlin, Germany \\
   lenzner@informatik.hu-berlin.de
   }

\date{}

\maketitle \thispagestyle{empty}


\begin{abstract}
\noindent We consider the dynamic behavior of several variants of the Network Creation Game, introduced by Fabrikant et al.~[PODC'03]. Equilibrium networks in these models have desirable properties like low social cost and small diameter, which makes them attractive for the decentralized creation of overlay-networks. Unfortunately, due to the non-constructiveness of the Nash equilibrium, no distributed algorithm for \emph{finding} such networks is known. We treat these games as sequential-move games and analyze whether (uncoordinated) selfish play eventually converges to an equilibrium state. Thus, we shed light on one of the most natural algorithms for this problem: distributed local search, where in each step some agent performs a myopic selfish improving move.

We show that fast convergence is guaranteed for all versions of Swap Games, introduced by Alon et al.~[SPAA'10], if the initial network is a tree. Furthermore, we prove that this process can be sped up to an almost optimal number of moves by employing a very natural move policy. Unfortunately, these positive results are no longer true if the initial network has cycles and we show the surprising result that even one non-tree edge suffices to destroy the convergence guarantee. This answers an open problem from Ehsani et al.~[SPAA'11] in the negative. Moreover, we show that on non-tree networks no move policy can enforce convergence. We extend our negative results to the well-studied original version, where agents are allowed to buy and delete edges as well. For this model we prove that there is no convergence guarantee -- even if all agents play optimally. Even worse, if played on a non-complete host-graph, then there are instances where no sequence of improving moves leads to a stable network. Furthermore, we 
analyze whether cost-sharing has positive impact on the convergence behavior. For this we consider a version by Corbo and Parkes~[PODC'05] where bilateral consent is needed for the creation of an edge and where edge-costs are shared equally among the involved agents. Quite surprisingly, we show that employing such a cost-sharing rule yields even worse dynamic behavior. 

Finally, we contrast our mostly negative theoretical results by a careful empirical study. Our simulations indicate two positive facts: (1) The non-convergent behavior seems to be confined to a small set of pathological instances and is unlikely to show up in practice. (2) In all our simulations we observed a remarkably fast convergence towards a stable network in $\mathcal{O}(n)$ steps, where $n$ is the number of agents.   
\end{abstract}

\bigskip

\centerline{{\bf Keywords}: Network Creation Games, Game Dynamics, Convergence, Stabilization, Distributed Local Search}

\end{titlepage}


\section{Introduction}
Understanding Internet-like networks and their implications on our life is a recent endeavor undertaken by researchers from different research communities. 
Such networks are difficult to analyze since they are created by a multitude of selfish entities (e.g. Internet Service Providers) which modify the infrastructure of parts of the network (e.g. their Autonomous Systems) to improve their service quality. The classical field of Game Theory provides the tools for analyzing such decentralized processes and from this perspective the Internet can be seen as an equilibrium state of an underlying game played by selfish agents. 

Within the last decade several such games have been proposed and analyzed. We will focus on the line of works which consider Network Creation Games, as introduced by Fabrikant et al.~\cite{Fab03}. These games are very simple but they contain an interesting trade-off between an agent's investment in infrastructure and her obtained usage quality. Agents aim to invest as little as possible but at the same time they want to achieve a good connection to all other agents in the network. Network Creation Games and several variants have been studied intensively, but, to the best of our knowledge, almost all these works exclusively focus on properties of the equilibrium states of the game. With this focus, the game is usually considered to be a one-shot simultaneous-move game. 
However, the Internet was not created in ``one shot''. It has rather evolved from an initial network, the ARPANET, into its current shape by repeated infrastructural changes performed by selfish agents who entered or left the stage at some time in the process. For this reason, we focus on a more dynamic point of view: We analyze the properties of the network creation \emph{processes} induced by the sequential-move version of the known models of selfish network creation. 

It is well-known that Network Creation Games have low price of anarchy, which implies that the social cost of the worst stable states arising from selfish behavior is close to the cost of the social optimum. Therefore these games are appealing for the decentralized and selfish creation of networks which optimize the service quality for all agents at low infrastructural cost, e.g. overlay networks created by selfish peers. But, to the best of our knowledge, it is not known how a group of agents can collectively \emph{find} such a desirable stable state. Analyzing the game dynamics of Network Creation Games is equivalent to analyzing a very natural search strategy: (uncoordinated) distributed local search, where in every step some agent myopically modifies the network infrastructure to better suit her needs. Clearly, if at some step in the process no agent wants to modify her part of the network, then a stable network has emerged.    

\subsection{Models and Definitions}
We consider several versions of a network creation process performed by $n$ selfish agents. In all versions we consider networks, where every node corresponds to an agent and undirected links connect network nodes. 
The creation process is based on an underlying Network Creation Game~(NCG) and can be understood as a dynamic process where agents sequentially perform strategy-changes in the NCG. In such games, the strategies of the agents determine which links are present in the network and any strategy-profile, which is a vector of the strategies of all $n$ agents, determines the induced network. But this also works the other way round: Given some network $G = (V,E,o)$, where $V$ is the set of $n$ vertices, $E$ is the set of edges and $o: E \to V$ is the \emph{ownership-function}, which assigns the ownership of an edge to one of its endpoints, then $G$ completely determines the current strategies of all $n$ agents of the NCG. Starting from a network $G_0$, any sequence of strategy-changes by agents can thus be 
seen as a sequence of networks $G_0,G_1,G_2,\dots$, where the network $G_{i+1}$ arises from the network $G_i$ by the strategy-change of exactly one agent. In the following, we will write $xy$ or $yx$ for the undirected edge $\{x,y\} \in E$. In figures we will indicate edge-ownership by directing edges away from their owner. 

The creation process starts in an initial state $G_0$, which we call the \emph{initial network}. A step from state $G_i$ to state $G_{i+1}$ consists of a \emph{move} by one agent. A move of agent $u$ in state $G_i$ is the replacement of agent $u$'s pure strategy in $G_i$ by another \emph{admissible} pure strategy of agent $u$. The induced network after this strategy-change by agent $u$ then corresponds to the state $G_{i+1}$. We consider only improving moves, that is, strategy-changes which strictly decrease the moving agent's cost. The cost of an agent in $G_i$ depends on the structure of $G_i$ and it will be defined formally below. If agent $u$ in state $G_i$ has an admissible new strategy which yields a strict cost decrease for her, then we call agent $u$ \emph{unhappy in network $G_i$} and we let $U_i$ denote the set of all unhappy agents in state $G_i$. Only one agent can actually move in a state of the process and this agent $u \in U_i$, whose move transforms $G_i$ into $G_{i+1}$, is called \emph{the 
moving agent in network $G_i$}. In any state of the process the \emph{move policy} determines which agent is the moving agent. The process stops in some state $G_j$ if no agent wants to perform a move, that is, if $U_j = \emptyset$, and we call the resulting networks \emph{stable}. Clearly, stable networks correspond to pure Nash equilibria of the underlying NCG. 

Depending on what strategies are admissible for an agent in the current state, there are several variants of this process, which we call \emph{game types}:
\begin{itemize}
 \item[$\bullet$] In the \emph{Swap Game}~(SG), introduced as ``Basic Network Creation Game'' by Alon et al.~\cite{ADHL10}, the strategy $S_u$ of an agent $u$ in the network $G_i$ is the set of neighbors of vertex $u$ in $G_i$. The new strategy $S_u^*$ is admissible for agent $u$ in state $G_i$, if $|S_u| = |S_u^*|$ and $|S_u \cap S_u^*| = |S_u|-1$. Intuitively, admissible strategies in the SG are strategies which replace one neighbor $x$ of $u$ by another vertex $y$. Note, that this corresponds to ``swapping'' the edge $ux$ from $x$ towards $y$, which is the replacement of edge $ux$ by edge $uy$. Furthermore, observe, that in any state both endpoints of an edge are allowed to swap this edge. Technically, this means that the ownership of an edge has no influence on the agents' strategies or costs.
 \item[$\bullet$] The \emph{Asymmetric Swap Game}~(ASG), recently introduced by Mihal{\'a}k and Schlegel~\cite{MS12}, is similar to the SG, but here the ownership of an edge plays a crucial role. Only the owner of an edge is allowed to swap the edge in any state of the process. The strategy $S_u$ of agent $u$ in state $G_i$ is the set of neighbors in $G_i$ to which $u$ owns an edge and the strategy $S_u^*$ is admissible for agent $u$ in state $G_i$, if $|S_u| = |S_u^*|$ and $|S_u \cap S_u^*| = |S_u|-1$. Hence, in the ASG the moving agents are allowed to swap one own edge.
 \item[$\bullet$] In the \emph{Greedy Buy Game}~(GBG), recently introduced by us~\cite{L12}, agents have more freedom to act. In any state, an agent is allowed to buy or to delete or to swap one own edge. Hence, the GBG can be seen as an extension of the ASG. The strategy $S_u$ of agent $u$ in state $G_i$ is defined as in the ASG, but the set of admissible strategies is larger: $S_u^*$ is admissible for agent $u$ in state $G_i$ if (1) $|S_u^*| = |S_u|+1$ and $S_u \subset S_u^*$ or (2) if $|S_u^*| = |S_u|-1$ and $S_u^* \subset S_u$ or (3) if $|S_u| = |S_u^*|$ and $|S_u \cap S_u^*| = |S_u|-1$.   
 \item[$\bullet$] The \emph{Buy Game}~(BG), which is the original version of an NCG and which was introduced by Fabrikant et al.~\cite{Fab03}, is the most general version. Here agents can perform arbitrary strategy-changes, that is, agents are allowed to perform any combination of buying, deleting and swapping of own edges. The strategy $S_u$ of agent $u$ in $G_i$ is defined as in the ASG, but an admissible strategy for agent $u$ is any set $S_u^* \subseteq V\setminus \{u\}$.      
\end{itemize}
The \emph{cost} of an agent $u$ in network $G_i$ has the form $c_{G_i}(u) = e_{G_i}(u) + \delta_{G_i}(u)$, where $e_{G_i}(u)$ denotes the \emph{edge-cost} and $\delta_{G_i}(u)$ denotes the \emph{distance-cost} of agent $u$ in the network $G_i$. 
Each edge has cost $\alpha >0$, which is a fixed constant, and this cost has to be paid fully by the owner, if not stated otherwise. Hence, if agent $u$ owns $k$ edges in the network $G_i$, then $e_{G_i}(u) = \alpha k$. 
In the (A)SG we simply omit the edge-cost term in the cost function. 

There are two variants of distance-cost functions which capture the focus on average or worst-case connection quality. In the \textsc{Sum}-version, we have $\delta_{G_i}(u) = \sum_{v \in V(G_i)}d_{G_i}(u,v)$, if the network $G_i$ is connected and $\delta_{G_i}(u) = \infty$, otherwise. In the \textsc{Max}-version, we have $\delta_{G_i}(u) = \max_{v \in V(G_i)}d_{G_i}(u,v)$, if $G_i$ is connected and $\delta_{G_i}(u) = \infty$, otherwise. In both cases $d_{G_i}(u,v)$ denotes the shortest path distance between vertex $u$ and $v$ in the undirected graph $G_i$.

The move policy specifies for any state of the process, which of the unhappy agents is allowed to perform a move. From a mechanism design perspective, the move policy is a way to enforce coordination and to guide the process towards a stable state. We will focus on the \emph{max cost policy}, where the agent having the highest cost is allowed to move and ties among such agents are broken arbitrarily. 
Sometimes we will assume that an adversary chooses the worst possible moving agent. Note, that the move policy only specifies who is allowed to move, not which specific move has to be performed. We do not consider such strong policies since we do not want to restrict the agents' freedom to act. 

Any combination of the four game types, the two distance functions and some move policy together with an initial network completely specifies a network creation process. We will abbreviate names, e.g. by calling the Buy Game with the \textsc{Sum}-version of the distance-cost the \textsc{Sum}-BG. If not stated otherwise, edge-costs cannot be shared. 

A cyclic sequence of networks $C_1,\dots,C_j$, where network $C_{i+1 \bmod j}$ arises from network $C_{i \bmod j}$ by an improving move of one agent is called a \emph{better response cycle}. If every move in such a cycle is a \emph{best response move}, which is a strategy-change towards an admissible strategy which yields the largest cost decrease for the moving agent, then we call such a cycle a \emph{best response cycle}. Clearly, a best response cycle is a better response cycle, but the existence of a better response cycle does not imply the existence of a best response cycle.  

\subsection{Classifying Games According to their Dynamics}
Analyzing the convergence processes of games is a very rich and diverse research area. We will briefly introduce two well-known classes of finite strategic games: \emph{games having the finite improvement property}~(FIPG)~\cite{MS96} and \emph{weakly acyclic games}~(WAG)~\cite{Y93}.

FIPG have the most desirable form of dynamic behavior: Starting from any initial state, every sequence of improving moves must eventually converge to an equilibrium state of the game, that is, such a sequence must have finite length. Thus, in such games distributed local search is guaranteed to succeed. It was shown by Monderer and Shapley~\cite{MS96} that a finite game is a FIPG if and only if there exists a \emph{generalized ordinal potential function} $\Phi$, which maps strategy-profiles to real numbers and has the property that if the moving agent's cost decreases, then the potential function value decreases as well. Stated in our terminology, this means that $\Phi: \mathcal{G}_n \to \mathbb{R}$, where $\mathcal{G}_n$ is the set of all networks on $n$ nodes, and we have $$c_{G_i}(u) - c_{G_{i+1}}(u) >0 \Rightarrow \Phi(G_i) - \Phi(G_{i+1}) >0,$$ if agent $u$ is the moving agent in the network $G_i$. 
Clearly, no FIPG can admit a better response cycle.
An especially nice subclass of FIPG are games that are guaranteed to converge to a stable state in a number of steps which is polynomial in the size of the game. We call this subclass poly-FIPG.

Weakly acyclic games are a super-class of FIPG. Here it is not necessarily true that \emph{any} sequence of improving moves must converge to an equilibrium but we have that from any initial state there exists \emph{some} sequence of improving moves which enforces convergence. Thus, with some additional coordination distributed local search may indeed lead to stable states for such games. A subclass of WAG are games where from any initial state there exists a sequence of best response moves, which leads to an equilibrium. We call those games \emph{weakly acyclic under best response}, BR-WAG for short. Observe, that if a game is not weakly acyclic, then there is \emph{no} way of enforcing convergence if agents stick to playing improving moves.

The above mentioned classes of finite strategic games are related as follows: $$\text{poly-FIPG} \subset \text{FIPG} \subset \text{BR-WAG} \subset \text{WAG}.$$ 
The story does not end here. Very recently, Apt and Simon~\cite{AS12} have classified WAG in much more detail by introducing a ``scheduler'', which is a moderating super-player who guides the agents towards an equilibrium. 

\subsection{Related Work}
The original model of Network Creation Games, which we call the \textsc{Sum}-BG, was introduced a decade ago by Fabrikant et al.~\cite{Fab03}. Their motivation was to understand the creation of Internet-like networks by selfish agents without central coordination. In the following years, several variants were proposed: The \textsc{Max}-BG~\cite{De07}, the \textsc{Sum}-SG and the \textsc{Max}-SG~\cite{ADHL10}, the \textsc{Sum}-ASG and the \textsc{Max}-ASG~\cite{MS12}, the \textsc{Sum}-GBG and the \textsc{Max}-GBG~\cite{L12}, a bounded budget version~\cite{Ehs11}, an edge-restricted version~\cite{De09,Bilo12}, a version with bilateral equal-split cost-sharing~\cite{CP05} and a version considering points in a metric space using a different distance measure~\cite{MSW06}. All these works focus on properties of stable networks or on the complexity of computing an agent's best response. To the best of our knowledge, the dynamic behavior of most of these variants, including best response dynamics in the well-studied 
original model, has not yet been 
analyzed. 

Previous work, e.g. \cite{Fab03,Al06,De07,MS10}, has shown that the price of anarchy for the \textsc{Sum}-BG and the \textsc{Max}-BG is constant for a wide range of $\alpha$ and in $2^{\mathcal{O}(\sqrt{\log n})}$ in general. For the \textsc{Sum}-(A)SG the best upper bound is in $2^{\mathcal{O}(\sqrt{\log n})}$ as well~\cite{ADHL10,MS12}, whereas the \textsc{Max}-SG has a lower bound of $\Omega(\sqrt{n})$~\cite{ADHL10}. Interestingly, if played on trees, then the \textsc{Sum}-SG and the \textsc{Max}-SG have constant price of anarchy~\cite{ADHL10}, whereas the \textsc{Sum}-ASG and the bounded budget version on trees has price of anarchy in $\Theta(\log n)$~\cite{Ehs11,MS12}. Moreover, it is easy to show that the \textsc{Max}-ASG on trees has price of anarchy in $\Theta(n)$. Thus, we have the desirable property that selfish behavior leads to a relatively small deterioration in social welfare for most of the proposed versions.   

In earlier work~\cite{L11} we studied the game dynamics of the \textsc{Sum}-SG and showed that if the initial network $G_0$ is a tree on $n$ nodes, then the network creation process is guaranteed to converge in $\mathcal{O}(n^3)$ steps. By employing the max cost policy, this process can be sped up significantly to $\mathcal{O}(n)$ steps, which is asymptotically optimal. For the \textsc{Sum}-SG on general networks we showed that there exists a best response cycle, which implies that the \textsc{Sum}-SG on arbitrary initial networks is not a FIPG. 

Very recently, Cord-Landwehr et al.~\cite{CHKS12} studied a variant of the \textsc{Max}-SG, where agents have communication interests, and showed that this variant admits a best response cycle on a tree network as initial network. Hence the restricted-interest variant of the \textsc{Max}-SG is not a FIPG -- even on trees. 

Brandes et al.~\cite{BHN08} were the first to observe that the \textsc{Sum}-BG is not a FIPG and they prove this by providing a better response cycle. Very recently, Bil{\`o} et al.~\cite{Bilo12} gave a better response cycle for the \textsc{Max}-BG which implies the same statement for this version. Note, that both proofs contain agents who perform a sub-optimal move at some step in the better response cycle. Hence, these two results do not address the convergence behavior if agents play optimally. 
\subsection{Our Contribution}
In this work, we study Network Creation Games, as proposed by Fabrikant et al.~\cite{Fab03}, and several natural variants of this model from a new perspective. Instead of analyzing properties of equilibrium states, we apply a more constructive point of view by asking if and how fast such desirable states can be found by selfish agents. For this, we turn the original model and its variants, which are originally formulated as one-shot simultaneous-move games, into more algorithmic models, where moves are performed sequentially.


For the \textsc{Max} Swap Game on trees, we show that the process must converge in $\mathcal{O}(n^3)$ steps, where $n$ is the number of agents. Furthermore, by introducing a natural way of coordination we obtain a significant speed-up to $\Theta(n\log n)$ steps, which is almost optimal. We show that these results, combined with results from our earlier work~\cite{L11}, give the same bounds for the Asymmetric Swap Game on trees in both the \textsc{Sum}- and the \textsc{Max}-version. 

These positive results for initial networks which are trees are contrasted by several strong negative results on general networks. We show that the \textsc{Max}-SG, the \textsc{Sum}-ASG and the \textsc{Max}-ASG on general networks are \emph{not} guaranteed to converge if agents repeatedly perform best possible improving moves and, even worse, that \emph{no} move policy can enforce convergence. We show that these games are not in FIPG, which implies that there cannot exist a generalized ordinal potential function which ``guides'' the way towards an equilibrium state. For the \textsc{Sum}-ASG we show the even stronger negative result that it can happen that \emph{no} sequence of best response moves may enforce convergence, that is, the \textsc{Sum}-ASG is not even weakly acyclic under best response. If not all possible edges can be created, that is if we have a non-complete \emph{host graph} \cite{De09,Bilo12}, then we show that the \textsc{Sum}-ASG and the \textsc{Max}-ASG on non-tree networks is not weakly 
acyclic. Moreover, we map the 
boundary between convergence and non-convergence in ASGs and show the surprising result that cyclic behavior can already occur in $n$-vertex networks which have $n$ edges. That is, even one non-tree edge suffices to completely change the dynamic behavior of these games. In our constructions we have that every agent owns exactly one edge, which is equivalent to the uniform-budget case introduced by Ehsani et al.~\cite{Ehs11}. In their paper~\cite{Ehs11} the authors raise the open problem of determining the convergence speed for the bounded-budget version. Thus, our results answer this open problem -- even for the simplest version of these games -- in the negative, since we show that no convergence guarantee exists. 

We provide best response cycles for all versions of the Buy Game, which implies that these games have no convergence guarantee -- even if agents have the computational resources to repeatedly compute best response strategies. To the best of our knowledge, the existence of \emph{best} response cycles for all these versions was not known before. 
Furthermore, we investigate the version where bilateral consent is needed for edge-creation and where the edge-cost is shared equally among its endpoints. We show that this version exhibits a similar undesirable dynamic behavior as the unilateral version. Quite surprisingly, we can show an even stronger negative result in the \textsc{Sum}-version which implies the counter-intuitive statement that cost-sharing may lead to worse dynamic behavior. Our findings nicely contrast a result of Corbo and Parkes~\cite{CP05} who show guaranteed convergence if agents repeatedly play best response strategies against \emph{perturbations} of the other agents' strategies. We show, that these perturbations are necessary for achieving convergence.

Finally, we present a careful empirical study of the convergence time in the ASG and in the GBG. Interestingly, our simulations show that our negative theoretical results seem to be confined to a small set of pathological instances. Even more interesting may be that our simulations show a remarkably fast convergence towards stable networks in $\mathcal{O}(n)$ steps, where $n$ is the number of agents. This indicates that despite our negative results distributed local search may be a suitable method for selfish agents for collectively finding equilibrium networks.

\section{Dynamics in \textsc{Max} Swap Games}\label{sec_maxsg}
In this section we focus on the game dynamics of the \textsc{Max}-SG. Interestingly, we obtain results which are very similar to the results shown in our earlier work~\cite{L11} but we need entirely different techniques to derive them. Omitted proofs can be found in the Appendix. 
\subsection{Dynamics on Trees}\label{sec_maxsg_tree}
We will analyze the network creation process in the \textsc{Max}-SG when the initial network is a tree. We prove that this process has the following desirable property:
\begin{theorem}\label{thm_tree_pot}
The \textsc{Max}-SG on trees is guaranteed to converge in $\mathcal{O}(n^3)$ steps to a stable network. That is, the \textsc{Max}-SG on trees is a poly-FIPG. 
\end{theorem}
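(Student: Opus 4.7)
The plan is to exhibit a generalized ordinal potential $\Phi$ that takes only $\mathcal{O}(n^3)$ distinct values and strictly decreases on every improving move. First I observe that every improving swap preserves the tree structure: if agent $u$ swaps the edge $ux$ for an edge $uy$ with $y$ lying in $u$'s own component of $T - ux$, the graph becomes disconnected and $u$'s cost jumps to $\infty$, contradicting strict improvement. Writing $T_x$ for the other component of $T - ux$ and $T_u$ for $u$'s component, agent $u$'s eccentricity equals $\max(\epsilon_{T_u}(u),\, 1 + \epsilon_{T_x}(x))$ before the swap and $\max(\epsilon_{T_u}(u),\, 1 + \epsilon_{T_x}(y))$ after, so the move is improving iff $\epsilon_{T_x}(y) < \epsilon_{T_x}(x)$.

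The key structural fact is that the diameter $D(T)$ is non-increasing. For any $v \in T_u$ and $w \in T_x$ we have $d_{\text{new}}(v,w) = d_{T_u}(v,u) + 1 + d_{T_x}(y,w) \le \epsilon_{T_u}(u) + 1 + \epsilon_{T_x}(y) < \epsilon_{T_u}(u) + 1 + \epsilon_{T_x}(x) \le D(T)$, so every cross-component distance is strictly below $D(T)$ after the move, while intra-component distances are unchanged. A short case analysis also gives that every $v \in T_u$ weakly lowers its eccentricity. In the boundary case $D(T_{\text{new}}) = D(T)$, the diameter must be realised by a pair inside $T_x$, because realisation inside $T_u$ combined with the general bound $T_u\text{-}\mathrm{diameter} \le 2\epsilon_{T_u}(u)$ would force $\epsilon_{T_u}(u) \ge 1 + \epsilon_{T_x}(x)$ and thereby block $u$'s improvement.

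I then take the lexicographic potential $\Phi(T) = (D(T),\, S(T))$ with $S(T) = \sum_v \epsilon_T(v) \le n(n-1)$. If $D$ strictly decreases, $\Phi$ strictly decreases. When $D$ is preserved, $T_x$ carries a diameter-realising path, so every $v \in T_x$ whose old eccentricity was realised inside $T_x$ keeps the same eccentricity after the swap; the remaining $v \in T_x$ change their eccentricity by at most $|d_{T_x}(y,v) - d_{T_x}(x,v)|$, and I balance these residual changes against the guaranteed decrease of at least $\epsilon_{T_x}(x) - \epsilon_{T_x}(y) \ge 1$ coming from $u$ (plus the weak decreases across all of $T_u$) to obtain $S(T_{\text{new}}) < S(T)$. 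Since $D \in \{0,\dots,n-1\}$ and $S \in \{0,\dots,n(n-1)\}$, the potential has $\mathcal{O}(n^3)$ distinct values, yielding the claimed bound.

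The hard part is making the accounting in the diameter-preserving case rigorous: a vertex $v \in T_x$ whose old eccentricity was ``through-$u$''-dominated can see $\epsilon(v)$ rise by up to $d_{T_x}(x,y)$, and the plain sum $\sum_v \epsilon_T(v)$ can indeed grow under some improving moves (harmlessly, because those moves strictly decrease $D$). I expect the clean execution to mirror the Wiener-index identity used in Lenzner's \textsc{Sum}-SG proof~\cite{L11}, writing each eccentricity change in terms of the projection of $v$ onto the $xy$-path inside $T_x$ and pairing off the positive and negative contributions using $\epsilon_{T_x}(x) - \epsilon_{T_x}(y) \ge 1$. If this identity does not quite close the accounting, the fallback is to refine the second coordinate of $\Phi$ to the lex-sorted multiset of eccentricities, which still has $\mathrm{poly}(n)$ range and preserves poly-FIPG membership.
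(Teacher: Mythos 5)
Your overall architecture is sound and genuinely different from the paper's: you propose a single lexicographic potential $\bigl(D(T),\sum_v \epsilon_T(v)\bigr)$ with $\mathcal{O}(n^3)$ range, whereas the paper first proves FIPG membership via the lexicographically sorted cost vector (its Lemmas on the subtrees $A$ and $B$) and then obtains the $\mathcal{O}(n^3)$ bound by a separate counting argument bounding the number of moves per diameter value. Your preliminary claims check out: the diameter is non-increasing (the cross-distance bound $\epsilon_{T_u}(u)+1+\epsilon_{T_x}(x)\le D(T)$ is a realized distance, so the inequality chain is valid), and in the diameter-preserving case the diameter must be realized inside $T_x$. However, the decisive step --- that $S(T)=\sum_v\epsilon_T(v)$ strictly decreases whenever $D$ is preserved --- is asserted, not proved, and you yourself flag it as ``the hard part.'' Your sketch of how to close it (pairing rises of eccentricities in $T_x$ against the drop at $u$) does not amount to an argument, and your fallback is actually broken: replacing the second coordinate by the lex-sorted multiset of eccentricities does \emph{not} keep a polynomial range (there are $\binom{2n-1}{n}$ such multisets), so it only yields FIPG, not poly-FIPG. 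This is precisely why the paper, which uses exactly that sorted-vector potential, must add its separate diameter-phase lemma to recover $\mathcal{O}(n^3)$; falling back to it silently discards the quantitative bound you are trying to prove.

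The good news is that the gap is closable, and more cleanly than your ``balancing'' plan suggests: when $D$ is preserved, $T_x$ contains a diametral path $P=p_0\cdots p_D$ of $T$. Writing $w'$ and $y'$ for the projections of $w\in T_x$ and of the new endpoint $y$ onto $P$, one has $\epsilon_{T_x}(w)\ge d(w,w')+\max\bigl(d(w',p_0),d(w',p_D)\bigr)$ and $d_{T_x}(w,y)\le d(w,w')+d(w',y')+d(y,y')$, while the constraints $\epsilon_{T_x}(x)+1+\epsilon_{T_u}(u)\le D$ and $\epsilon_{T_x}(y)<\epsilon_{T_x}(x)$ force $\min\bigl(d(y',p_0),d(y',p_D)\bigr)\ge \epsilon_{T_u}(u)+2+d(y,y')$. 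A short case analysis on the position of $w'$ relative to $y'$ then shows $d_{T_x}(w,y)+1+\epsilon_{T_u}(u)\le\epsilon_{T_x}(w)$ for \emph{every} $w\in T_x$; that is, no eccentricity in $T_x$ rises at all, and $S$ drops by at least $|T_u|\ge 1$ because every agent in $T_u$ strictly improves (the paper's Lemma~\ref{lem_tree_subtree} and Corollary~\ref{cor_tree_subtree}). As submitted, though, the proposal leaves its central inequality unestablished and its stated escape route fails to deliver the polynomial bound, so the proof is incomplete.
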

Before proving Theorem~\ref{thm_tree_pot}, we analyze the impact of a single edge-swap. Let $T = (V,E)$ be a tree on $n$ vertices and let agent $v$ be unhappy in network~$T$. Assume that agent $v$ can decrease her cost by performing the edge-swap $vu$ to $vw$, for some $u,w \in V$. This swap transforms $T$ into the new network $T' = (V,(E\setminus\{vu\})\cup\{vw\})$. Let $c_T(v) = \max_{x\in V(T)}d_T(v,x)$ denote agent $v$'s cost in the network $T$. Let $c_{T'}(u)$ denote her respective cost in $T'$. Let $A$ denote the tree of $T'' = (V,E\setminus\{vu\})$ which contains $v$ and let $B$ be the tree of $T''$ which contains $u$ and $w$. It is easy to see, that we have $d_T(x,y) = d_{T'}(x,y)$, if $x,y \in V(A)$ or if $x,y \in V(B)$. 
\begin{lemma}\label{lem_tree_subtree}
For all $x\in V(A)$ there is no $y\in V(A)$ such that $c_T(x) = d_T(x,y)$. 
\end{lemma}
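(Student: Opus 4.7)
The plan is to exploit the fact that $v$'s swap is strictly improving. Since only edges incident to $v$ are modified, the distances within $A$ and within $B$ are unaffected; in particular $v$'s maximum distance inside $A$ is the same in $T$ and $T'$. So for the swap to reduce $c_T(v) = \max(\mathrm{ecc}_A(v),\, 1+\mathrm{ecc}_B(u))$, it must be that $\mathrm{ecc}_A(v) < c_T(v)$, which forces $c_T(v) = 1 + \mathrm{ecc}_B(u)$. Here I write $\mathrm{ecc}_A(v) = \max_{a \in V(A)} d_T(v,a)$ and $\mathrm{ecc}_B(u) = \max_{b \in V(B)} d_T(u,b)$. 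This inequality is the only property of the move that I will need.

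Now fix an arbitrary $x \in V(A)$ and set $d_v := d_T(x,v)$. For any $y \in V(A)$ the triangle inequality gives
\[
d_T(x,y) \;\le\; d_T(x,v) + d_T(v,y) \;\le\; d_v + \mathrm{ecc}_A(v).
\]
On the other hand, pick $b^{*} \in V(B)$ achieving $\mathrm{ecc}_B(u)$. Because $T$ is a tree and $vu$ is the unique edge joining $V(A)$ to $V(B)$, the $x$-to-$b^{*}$ path in $T$ must traverse $vu$, so
\[
d_T(x, b^{*}) \;=\; d_T(x,v) + 1 + d_T(u,b^{*}) \;=\; d_v + 1 + \mathrm{ecc}_B(u) \;=\; d_v + c_T(v).
\]
Combining the two displays with $\mathrm{ecc}_A(v) < c_T(v)$ yields $d_T(x,y) < d_T(x,b^{*}) \le c_T(x)$ for every $y \in V(A)$, which is precisely the claim.

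The main thing to get right is the first paragraph: one has to notice that the swap from $vu$ to $vw$ leaves every $A$-to-$A$ distance invariant, so the improvement in $v$'s eccentricity can only come from the $B$ side, forcing the strict inequality $\mathrm{ecc}_A(v) < c_T(v)$. Once this is in hand, the rest is a one-line triangle inequality plus the fact that $vu$ is a cut edge of $T$. Boundary cases (e.g.\ $A = \{v\}$, so $\mathrm{ecc}_A(v) = 0$ and $x = v$) are handled by the same estimates without modification.
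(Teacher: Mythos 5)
Your proof is correct and follows essentially the same route as the paper: both arguments hinge on observing that the improving swap forces $v$'s eccentricity within $A$ to be strictly below $c_T(v)$, and then conclude via the triangle inequality that any $A$-to-$A$ distance from $x$ is strictly smaller than the distance from $x$ to the farthest $B$-vertex. The only cosmetic difference is that you bound $d_T(x,b^{*})\le c_T(x)$ where the paper asserts equality, which is immaterial to the conclusion.
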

\begin{proof}
By assumption, agent $v$ can decrease her cost by swapping the edge $vu$ to edge $vw$, where $u,w \in V(B)$. We have that $d_T(x,v) < c_T(v)$, for all $x \in V(A)$, since otherwise this swap would not change agent $v$'s cost. It follows that for arbitrary $x,y \in V(A)$ we have $d_T(x,y) \leq d_T(x,v) + d_T(v,y) < d_T(x,v) + c_T(v).$ Let $z \in V(B)$ be a vertex having maximum distance to $v$ in $T$, that is, $c_T(v) = d_T(v,z)$. The above implies that $d_T(x,y) < d_T(x,z) = c_T(x)$, for all $x,y \in V(A)$.
\end{proof}
\noindent Lemma~\ref{lem_tree_subtree} directly implies the following statement:
\begin{corollary}\label{cor_tree_subtree}
For all $x \in V(A)$, we have $c_T(x) > c_{T'}(x)$.
\end{corollary}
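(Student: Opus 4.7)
The plan is to show that every vertex of $A$ sees its eccentricity strictly decrease when the swap is performed, by controlling separately the distances to vertices inside $A$ and to vertices inside $B$. Throughout I keep the notation of the excerpt: $v$ swaps the edge $vu$ for $vw$, the components of $T''=(V,E\setminus\{vu\})$ are $A\ni v$ and $B\ni u,w$, and by assumption $c_{T'}(v)<c_T(v)$.

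First I would invoke Lemma~\ref{lem_tree_subtree} to assert that, for any fixed $x\in V(A)$, the eccentricity $c_T(x)$ is attained by some vertex $y^\star\in V(B)$ (no $A$-vertex can realize it). Since the $v\text{--}y^\star$ path in $T$ uses the edge $vu$, we have the identity
\[
c_T(x)\;=\;d_T(x,v)+1+d_T(u,y^\star)\;=\;d_T(x,v)+1+\max_{y\in V(B)}d_T(u,y),
\]
where the last equality holds because, for any other $y\in V(B)$, $d_T(x,y)=d_T(x,v)+1+d_T(u,y)\le c_T(x)$, forcing $d_T(u,y^\star)=\max_{y\in V(B)}d_T(u,y)$.

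Next I would compute $c_{T'}(x)$ by splitting on whether the witness lies in $A$ or in $B$. Distances inside $A$ are preserved, so $\max_{y\in V(A)}d_{T'}(x,y)=\max_{y\in V(A)}d_T(x,y)<c_T(x)$ by Lemma~\ref{lem_tree_subtree}. For $y\in V(B)$ the unique $x\text{--}y$ path in $T'$ uses the new edge $vw$, giving $d_{T'}(x,y)=d_T(x,v)+1+d_T(w,y)$. Comparing with the displayed expression for $c_T(x)$, showing the second maximum is strictly smaller reduces to the single inequality
\[
\max_{y\in V(B)}d_T(w,y)\;<\;\max_{y\in V(B)}d_T(u,y).
\]

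The main obstacle, and the only place the improving nature of $v$'s swap is used, is establishing this last inequality. The argument is: for any $y\in V(B)$, $d_{T'}(v,y)=1+d_T(w,y)$, so $\max_{y\in V(B)}d_{T'}(v,y)=1+\max_{y\in V(B)}d_T(w,y)$. Since the Lemma's proof already shows $d_T(v,x)<c_T(v)$ for every $x\in V(A)$, the eccentricity of $v$ in $T$ is realized at a vertex of $V(B)$, namely $z$ with $c_T(v)=d_T(v,z)=1+d_T(u,z)$, and $d_T(u,z)=\max_{y\in V(B)}d_T(u,y)$. The improvement hypothesis $c_{T'}(v)<c_T(v)$ therefore gives $1+\max_{y\in V(B)}d_T(w,y)<1+\max_{y\in V(B)}d_T(u,y)$, which is the required strict inequality. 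Combining the two cases yields $c_{T'}(x)<c_T(x)$, completing the argument.
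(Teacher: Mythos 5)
Your argument is correct and follows the route the paper intends: it uses Lemma~\ref{lem_tree_subtree} to place the eccentricity witness of each $x\in V(A)$ in $B$, splits distances across the swapped edge, and extracts $\max_{y\in V(B)}d_T(w,y)<\max_{y\in V(B)}d_T(u,y)$ from the improvement hypothesis $c_{T'}(v)<c_T(v)$. The paper states the corollary as a direct consequence of Lemma~\ref{lem_tree_subtree} without writing out these details, and the same comparison $d_T(w,y)<d_T(u,z)$ appears explicitly in its proof of Lemma~\ref{lem_tree_costincrease}, so you have simply filled in the omitted steps faithfully.
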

\noindent Hence, we have that agent $v$'s improving move decreases the cost for all agents in $V(A)$. For agents in $V(B)$ this may not be true: The cost of an agent $y \in V(B)$ can increase by agent $v$'s move. Interestingly, the next result guarantees that such an increase cannot be arbitrarily high.
\begin{lemma}\label{lem_tree_costincrease}
Let $x \in V(A)$ and $y \in V(B)$ such that $d_{T'}(x,y) = c_{T'}(y)$. It holds that $c_T(x)>c_{T'}(y)$. 
\end{lemma}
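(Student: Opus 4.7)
The plan is to combine three simple facts: the path from $x$ to $y$ in $T'$ passes through the new edge $vw$; the swap strictly decreased $v$'s eccentricity; and any vertex in $A$ inherits an ``amplified'' lower bound on its eccentricity via the farthest vertex from $v$.

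First I would write out the $T'$-distance explicitly. Since $x\in V(A)$, $y\in V(B)$, and the only edge between $A$ and $B$ in $T'$ is $vw$, the unique $x$-$y$ path in $T'$ satisfies
\[
c_{T'}(y) \;=\; d_{T'}(x,y) \;=\; d_T(x,v)+1+d_T(w,y),
\]
using that distances inside $A$ and inside $B$ are the same in $T$, $T'$ and $T''$.

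Next I would exploit that $v$'s move was improving, i.e.\ $c_{T'}(v)<c_T(v)$. In $T'$ the distance from $v$ to $y$ is $d_{T'}(v,y)=1+d_T(w,y)$, so
\[
1+d_T(w,y)\;=\;d_{T'}(v,y)\;\le\;c_{T'}(v)\;<\;c_T(v).
\]

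The third ingredient is a lower bound on $c_T(x)$. Let $z\in V(T)$ attain $c_T(v)=d_T(v,z)$. By Lemma~\ref{lem_tree_subtree} applied at $v$ (or by direct inspection, since a swap that moves only edges inside $A\cup\{vu\}$ to $A\cup\{vw\}$ cannot change $v$'s cost unless the farthest vertex lies in $B$) we have $z\in V(B)$, and so the path from $x$ to $z$ in $T$ traverses the edge $vu$:
\[
c_T(x)\;\ge\;d_T(x,z)\;=\;d_T(x,v)+d_T(v,z)\;=\;d_T(x,v)+c_T(v).
\]

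Finally I would chain the two inequalities:
\[
c_T(x)\;\ge\;d_T(x,v)+c_T(v)\;>\;d_T(x,v)+1+d_T(w,y)\;=\;c_{T'}(y),
\]
which is the desired strict inequality. The only substantive step is the third one, and even it is just a matter of picking $z$ to be a vertex realizing $v$'s eccentricity and observing, using Lemma~\ref{lem_tree_subtree}, that such a $z$ necessarily lies on the far side of the swapped edge so that the path from $x$ goes through $v$; the rest is straightforward bookkeeping of tree distances.
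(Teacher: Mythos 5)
Your proof is correct and follows essentially the same route as the paper's: decompose $d_{T'}(x,y)$ across the new edge $vw$, use $c_{T'}(v)<c_T(v)$ to bound $1+d_T(w,y)$ by $c_T(v)=1+d_T(u,z)$, and lower-bound $c_T(x)$ via the path through $v$ to the vertex $z\in V(B)$ realizing $v$'s eccentricity. The only cosmetic difference is that the paper writes both sides as exact equalities and subtracts, while you chain inequalities; the ingredients are identical.
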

\begin{proof}
In tree $T$ we have $c_T(x) = d_T(x,v) + d_T(u,z) + 1$. Furthermore, in tree $T'$ we have $c_{T'}(y) = d_{T'}(x,v) + d_{T'}(w,y) + 1$. Since $c_T(v)>c_{T'}(v)$, it follows that $d_T(w,y) < d_T(u,z)$, where $z\in V(B)$ is a vertex having maximum distance to $v$ in $T$. Hence, we have $c_T(x) - c_{T'}(y) = d_T(u,z) - d_T(w,y) > 0.$
\end{proof}
\noindent Towards a generalized ordinal potential function we will need the following:
\begin{definition}[Sorted Cost Vector and Center-Vertex]
Let $G$ be any network on $n$ vertices. The \emph{sorted cost vector} of $G$ is $\overrightarrow{c_G} = (\gamma_G^1,\dots,\gamma_G^n)$, where $\gamma_G^i$ is the cost of the agent, who has the $i$-th highest cost in the network $G$. An agent having cost $\gamma_G^n$ is called \emph{center-vertex} of $G$.
\end{definition}
\begin{lemma}\label{lem_tree_potfunct}
Let $T$ be any tree on $n$ vertices. The sorted cost vector of $T$ induces a generalized ordinal potential function for the \textsc{Max}-SG on $T$. 
\end{lemma}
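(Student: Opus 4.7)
The plan is to prove that any improving swap $T \to T'$ strictly decreases the sorted cost vector in lexicographic order, i.e., $\overrightarrow{c_{T'}} <_{\mathrm{lex}} \overrightarrow{c_T}$. Encoding the decreasingly-sorted vector as a single real, for instance as $\Phi(T) = \sum_{i=1}^n \gamma_T^i \cdot N^{n-i}$ for any $N$ exceeding the diameter, then turns the sorted cost vector into a bona fide generalized ordinal potential. In fact I would establish the stronger pointwise statement that for every threshold value $c$,
\[
|\{y \in V : c_{T'}(y) \geq c\}| \;\leq\; |\{y \in V : c_T(y) \geq c\}|,
\]
with strict inequality at $c = c_T(v)$, where $v$ denotes the moving agent. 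Pointwise domination of these threshold-count functions is equivalent to pointwise domination of the decreasingly-sorted vectors and, combined with the strict gap, immediately gives the required lex decrease.

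First I would discharge the $V(A)$ contribution via Corollary~\ref{cor_tree_subtree}: every $y \in V(A)$ satisfies $c_{T'}(y) < c_T(y)$, so $V(A) \cap S_c^{T'} \subseteq V(A) \cap S_c^T$ for every $c$, and the moving agent $v$ alone witnesses strict inclusion at $c = c_T(v)$. The substantive part is the $V(B)$ contribution, for which the decisive decomposition is
\[
c_G(y) \;=\; \max\bigl(R_B(y),\; d_G(y,v) + R_A\bigr), \qquad y \in V(B),
\]
where $R_B(y)$ (the eccentricity of $y$ inside the subtree $B$) and $R_A = \max_{x \in V(A)} d_T(v,x)$ are invariant under the swap, while $d_T(y,v) = d_B(y,u)+1$ changes to $d_{T'}(y,v) = d_B(y,w)+1$. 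The improving-move hypothesis, together with Lemma~\ref{lem_tree_subtree} applied to the moving agent, immediately collapses to $R_B(w) < R_B(u)$, so the cross term shrinks in the worst case.

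Lemma~\ref{lem_tree_costincrease} is then the key lever: any $y \in V(B)$ whose $T'$-eccentricity is realized across the swap (by some $x \in V(A)$) satisfies $c_T(x) > c_{T'}(y)$. Using it, I would construct, for each threshold $c$, an injection $\phi : S_c^{T'} \hookrightarrow S_c^T$ that fixes any agent whose cost did not rise past $c$ and reroutes each upward crosser in $V(B)$ (an agent $y$ with $c_T(y) < c \leq c_{T'}(y)$) to a downward crosser in $V(A)$ (an agent $x$ with $c_{T'}(x) < c \leq c_T(x)$) supplied by the lemma. Summing the contributions of both sides yields the desired threshold inequality, and the moving agent already delivers strictness at $c = c_T(v)$.

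The principal obstacle is the injectivity of this routing, since many $V(B)$-upward-crossers a priori point to the same cross-witness family, namely the vertices maximizing $d_T(v,x)$ in $V(A)$. The plan is to resolve this by a counting argument keyed on the positive quantity $\Delta := R_B(u) - R_B(w)$: every cross-dominated $V(A)$-vertex has its cost drop by exactly $\Delta$, producing a dense band of $V(A)$-agents that drop out of $S_c^T$ whose cardinality matches (or exceeds) the number of $V(B)$-upward-crossers at level $c$, the latter being simultaneously capped by Lemma~\ref{lem_tree_costincrease}. Once this balance is in place for every $c$, pointwise threshold domination together with the strict gap at $c = c_T(v)$ combines to give $\overrightarrow{c_{T'}} <_{\mathrm{lex}} \overrightarrow{c_T}$, completing the potential-function argument.
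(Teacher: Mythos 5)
Your overall reduction (encode the decreasingly sorted vector as a real, then show the sorted vector drops lexicographically, using Corollary~\ref{cor_tree_subtree} for $V(A)$ and Lemma~\ref{lem_tree_costincrease} for $V(B)$) is the paper's route, but the specific path you take to the lexicographic drop is blocked: the ``stronger pointwise statement'' $|\{y : c_{T'}(y)\ge c\}| \le |\{y : c_T(y)\ge c\}|$ for \emph{every} threshold $c$ is false. Concretely, let $T$ consist of the path $a_3a_2a_1v$, the edges $vu$ and $uw$, three paths of length $2$ attached at $u$ with endpoints $y_1,y_2,y_3$, and one path $wz_1z_2z_3$ attached at $w$. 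The swap $vu \to vw$ is improving ($c_T(v)=5$, $c_{T'}(v)=4$), but at threshold $c=7$ the set $\{x : c_T(x)\ge 7\}$ is $\{a_2,a_3,z_2,z_3\}$ (four agents) while $\{x : c_{T'}(x)\ge 7\}$ is $\{a_3,z_3,y_1,y_2,y_3\}$ (five agents): each $y_i$ jumps from $6$ to $7$, yet only $a_2$ and $z_2$ drop below $7$. Equivalently $\gamma_T^5=6<7=\gamma_{T'}^5$, so the sorted vectors are \emph{not} pointwise dominated, even though the lexicographic drop still holds (via $\gamma_T^1=8>7=\gamma_{T'}^1$). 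Hence the injection $\phi : S_c^{T'}\hookrightarrow S_c^T$ cannot exist for every $c$, and the ``dense band'' count is genuinely deficient: here all of $V(A)$ drops by exactly $\Delta=1$, so only one $A$-agent leaves any given level, while arbitrarily many $B$-agents can enter it.

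There is also a local reason the proposed routing must fail: for an upward crosser $y\in V(B)$ with $c_T(y)<c\le c_{T'}(y)$, Lemma~\ref{lem_tree_costincrease} supplies $x\in V(A)$ with $d_{T'}(x,y)=c_{T'}(y)$ and $c_T(x)>c_{T'}(y)\ge c$; but this very $x$ satisfies $c_{T'}(x)\ge d_{T'}(x,y)=c_{T'}(y)\ge c$, so $x$ is still in $S_c^{T'}$ and is not a downward crosser at level $c$. What the lemma actually buys is weaker and is exactly what the paper's proof exploits: every cost that \emph{increases} ends up strictly below $c_T(x)$ for some $x\in V(A)$, hence strictly below $M:=\max_{x\in V(A)}c_T(x)$, while every $A$-agent's cost strictly decreases. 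Consequently the largest value attained by any agent whose cost changes is attained only as an \emph{old} cost; the threshold counts therefore agree strictly above that value and strictly drop at it, which is precisely the lexicographic decrease (no claim is needed, or true, at lower thresholds). Your decomposition $c_G(y)=\max\bigl(R_B(y),\,d_G(y,v)+R_A\bigr)$ and the reduction to $R_B(w)<R_B(u)$ are correct and useful, but you should replace the pointwise-domination plan by this ``largest changed value'' argument.
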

\begin{proof}
Let $v$ be any agent in $T$, who performs an edge-swap which strictly decreases her cost and let $T'$ denote the network after agent $v$'s swap. We show that $c_T(v) - c_{T'}(v) > 0$ implies $\overrightarrow{c_T} >_{\text{lex}} \overrightarrow{c_{T'}},$ where $>_\text{lex}$ is the lexicographic order on $\mathbb{N}^n$. The existence of a generalized ordinal potential function then follows by mapping the lexicographic order on $\mathbb{N}^n$ to an isomorphic order on $\mathbb{R}$.   

Let the subtrees $A$ and $B$ be defined as above and let $c_T(v) - c_{T'}(v) > 0$. By Lemma~\ref{lem_tree_subtree} and Lemma~\ref{lem_tree_costincrease}, we know that there is an agent $x \in V(A)$ such that $c_T(x) > c_{T'}(y)$, for all $y \in V(B)$. By Lemma~\ref{lem_tree_subtree} and Corollary~\ref{cor_tree_subtree}, we have that $c_T(x) > c_{T'}(x)$, which implies that $\overrightarrow{c_T} >_{\text{lex}} \overrightarrow{c_{T'}}$. 
\end{proof}
In the following, a special type of paths in the network will be important.
\begin{definition}[Longest Path]
Let $G$ be any connected network. Let $v$ be any agent in $G$ having cost $c_G(v) = k$. Any simple path in $G$, which starts at $v$ and has length $k$ is called a \emph{longest path of agent $v$}.
\end{definition}
\noindent As we will see, center-vertices and longest paths are closely related. 
\begin{lemma}\label{lem_tree_center}
Let $T$ be any connected tree and let $v^*$ be a center-vertex of $T$. Vertex $v^*$ must lie on all longest paths of all agents in $V(T)$. 
\end{lemma}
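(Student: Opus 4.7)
The plan is to argue by contradiction. Suppose some agent $u$ has a longest path $P$ from $u$ to some $w$ with $d_T(u, w) = c_T(u)$, yet $v^* \notin V(P)$. Because $T$ is a tree, all shortest paths from $v^*$ into $P$ meet $P$ at a single common vertex $q$; set $k := d_T(v^*, q) \geq 1$. Writing $a := d_T(u, q)$ and $b := d_T(q, w)$ so that $a + b = c_T(u)$, the $v^*$-to-endpoint paths traverse $q$, giving $d_T(v^*, u) = k + a$ and $d_T(v^*, w) = k + b$.

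Next I would exploit the center-vertex property. Root $T$ at $v^*$ and let $T_1$ be the child-subtree of $v^*$ containing $q$; by construction $u, q, w \in V(T_1)$, so the depth of $T_1$ measured from $v^*$ is at least $k + b$. The key input is the standard fact that \emph{a minimum-eccentricity vertex in a tree has at least two children whose subtree-depths are within $1$ of the maximum}. Concretely, if $h_{\max}$ is the largest subtree height at $v^*$ and $h_{\text{2nd}}$ the next largest, then $c_T(v^*) = 1 + h_{\max}$, and the inequality $c_T(c_{j^*}) \ge c_T(v^*)$ applied to the deepest child $c_{j^*}$ forces $h_{\text{2nd}} \geq h_{\max} - 1$. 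Applying this yields a child $c_j$ of $v^*$ lying \emph{outside} $V(T_1)$ whose subtree contains a vertex $z$ with $d_T(v^*, z) \geq k + b - 1$.

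The finishing computation is immediate: since $u$ and $z$ lie in different child-subtrees of $v^*$, the unique $u$--$z$ path in $T$ passes through $v^*$, and hence
\[
d_T(u, z) \;=\; d_T(u, v^*) + d_T(v^*, z) \;\geq\; (k+a) + (k+b-1) \;=\; c_T(u) + (2k-1) \;>\; c_T(u),
\]
since $k \geq 1$. This contradicts $c_T(u) = \max_{y \in V(T)} d_T(u, y)$.

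The main obstacle is the clean case analysis behind the extraction of the off-$T_1$ subtree: if $T_1$ is not itself a deepest subtree of $v^*$, any deepest subtree works; if $T_1$ is the unique deepest, the center property supplies a strictly second-deepest subtree; and if several subtrees share the maximum depth, at least one of them is not $T_1$. A small degenerate case is when $v^*$ has degree one; then $v^*$ is a leaf, and a short calculation shows that its unique neighbor has strictly smaller eccentricity whenever $|V(T)| \geq 3$, so this case cannot arise unless $|V(T)| \leq 2$, where the statement is trivial.
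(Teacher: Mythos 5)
Your proof is correct, but it takes a genuinely different route from the paper's. The paper also argues by contradiction, but it stays with the branch vertex (your $q$, their $z$) where the paths from $v^*$ and from the two endpoints of the longest path meet, and shows directly that this branch vertex has \emph{strictly smaller} eccentricity than $v^*$: any vertex at distance at least $c_T(v^*)$ from the branch vertex would either be seen by $v^*$ at distance exceeding $c_T(v^*)$ or be seen by an endpoint of the longest path at distance exceeding its eccentricity, and both are impossible. That contradicts the minimality of $c_T(v^*)$ without ever rooting the tree or invoking any structural fact about centers. You instead root $T$ at $v^*$, import the standard lemma that a minimum-eccentricity vertex of a tree has a second child-subtree whose depth is within one of the deepest, and derive a contradiction with the maximality of $d_T(u,w)$ by exhibiting a strictly farther vertex in a subtree other than $T_1$; the slack $2k-1\geq 1$ coming from $k=d_T(v^*,q)\geq 1$ is exactly what absorbs the possible loss of $1$ in the second subtree's depth, and your closing case analysis (deepest vs.\ second-deepest vs.\ $v^*$ a leaf) is the right bookkeeping to make the lemma applicable. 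The trade-off: your argument leans on an auxiliary fact that itself needs a proof (you sketch the correct one via $c_T(c_{j^*})\geq c_T(v^*)$) and some degenerate-case handling, whereas the paper's argument is shorter and self-contained but hides a slightly delicate ``where can a far vertex hide'' step. Both are sound proofs of the statement.
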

\begin{proof}
Let $P_{xy}$ denote the path from vertex $x$ to vertex $y$ in $T$. We assume towards a contradiction that there are two vertices $v,w \in V(T)$, where $c_T(v) = d_T(v,w)$, and that $v^* \notin V(P_{vw})$. Let $z \in V(T)$ be the only shared vertex of the three paths $P_{vv^*}, P_{wv^*}, P_{vw}$. We have $d_T(v,z) < d_T(v,v^*)\leq c_T(v^*)$ and $d_T(w,z) < d_T(w,v^*) \leq c_T(v^*)$. We show that $c_T(z) < c_T(v^*)$, which is a contradiction to $v^*$ being a center-vertex in $T$.

Assume that there is a vertex $u \in V(T)$ with $d_T(u,z) \geq c_T(v^*)$. It follows that $V(P_{vz}) \cap V(P_{zu}) = \{z\}$, since otherwise $d_T(v^*,u) = d_T(v^*,z) + d_T(z,u) > c_T(v^*)$. But now, since $d_T(z,w) < c_T(v^*) \leq d_T(z,u)$, we have $d_T(v,u) > c_T(v)$, which clearly is a contradiction. Hence, we have $d_T(z,u) < c_T(v^*)$, for all $u \in V(T)$, which implies that $c_T(z) < c_T(v^*)$.
\end{proof}
Lemma~\ref{lem_tree_center}, leads to the following observation.
\begin{observation}\label{obs_cost_vector}
Let $G$ be any connected network on $n$ nodes and let $\overrightarrow{c_G} = (\gamma_G^1,\dots,\gamma_G^n)$ be its sorted cost vector. We have $\gamma_G^1 = \gamma_G^2$ and $\gamma_G^n = \left\lceil \frac{\gamma_G^1}{2} \right\rceil$.
\end{observation}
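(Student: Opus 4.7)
In this setting the cost of an agent coincides with her eccentricity, so $\gamma_G^1$ is the diameter of $G$ and $\gamma_G^n$ its radius. I would start with $\gamma_G^1=\gamma_G^2$, which does not even require the tree hypothesis. Let $v$ be any agent achieving $c_G(v)=\gamma_G^1$ and pick $w$ with $d_G(v,w)=\gamma_G^1$. By symmetry of shortest path distances, $c_G(w)\ge d_G(w,v)=\gamma_G^1$, and by maximality $c_G(w)\le\gamma_G^1$, so $c_G(w)=\gamma_G^1$. Hence two distinct agents attain the top entry of the sorted cost vector, giving $\gamma_G^1=\gamma_G^2$.

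For $\gamma_G^n=\lceil\gamma_G^1/2\rceil$ I would prove the two inequalities separately. The lower bound follows directly from Lemma~\ref{lem_tree_center}. Let $v^*$ be a center-vertex and let $v,w$ be as above. By Lemma~\ref{lem_tree_center}, $v^*$ lies on a longest path of $v$ ending at $w$, so $d_G(v,v^*)+d_G(v^*,w)=\gamma_G^1$. Consequently, $\gamma_G^n=c_G(v^*)\ge\max\{d_G(v,v^*),d_G(v^*,w)\}\ge\lceil\gamma_G^1/2\rceil$.

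The matching upper bound $\gamma_G^n\le\lceil\gamma_G^1/2\rceil$ is the main step, and the one that genuinely uses the tree structure (e.g.\ on an even cycle the identity fails). I would exhibit a concrete vertex of small eccentricity: let $m$ be the vertex on the $v$-$w$ path at distance $\lfloor\gamma_G^1/2\rfloor$ from $v$, hence at distance $\lceil\gamma_G^1/2\rceil$ from $w$. For any agent $x$, the two paths from $x$ to $v$ and from $x$ to $w$ in the tree share a common initial segment ending at a unique vertex $y$ on the $v$-$w$ path, so that $d_G(x,v)=d_G(x,y)+d_G(y,v)$, $d_G(x,w)=d_G(x,y)+d_G(y,w)$, and $d_G(x,m)=d_G(x,y)+d_G(y,m)$. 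A short case analysis on whether $y$ lies on the $v$-$m$ or the $m$-$w$ subpath, combined with $d_G(x,v),d_G(x,w)\le\gamma_G^1$ (maximality of the diameter), then yields $d_G(m,x)\le\lceil\gamma_G^1/2\rceil$ in either case. This gives $\gamma_G^n\le c_G(m)\le\lceil\gamma_G^1/2\rceil$ and completes the proof. The delicate part is precisely the decomposition through the branch point $y$, which is what forces $G$ to be a tree.
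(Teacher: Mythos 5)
Your proof is correct and, in fact, more than the paper supplies: the paper offers no argument at all for this observation, merely asserting that it ``follows from'' Lemma~\ref{lem_tree_center}. Your treatment of $\gamma_G^1=\gamma_G^2$ via the symmetry of the distance function is exactly right and holds in any connected graph. For the lower bound $\gamma_G^n\ge\lceil\gamma_G^1/2\rceil$ you route through Lemma~\ref{lem_tree_center}, which works, though it is worth noting that the plain triangle inequality $d_G(u,v)+d_G(u,w)\ge d_G(v,w)=\gamma_G^1$ already gives $c_G(u)\ge\lceil\gamma_G^1/2\rceil$ for \emph{every} vertex $u$ in \emph{every} connected graph, with no appeal to the lemma. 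The upper bound via the midpoint $m$ of a diametral path, decomposed through the branch point $y$, is the standard radius--diameter argument for trees and is exactly the step the paper leaves implicit; I verified your two cases and they close correctly ($d_G(x,m)\le\lfloor\gamma_G^1/2\rfloor$ when $y$ lies between $v$ and $m$, and $d_G(x,m)\le\lceil\gamma_G^1/2\rceil$ when $y$ lies between $m$ and $w$). You are also right to flag that the second identity is \emph{false} for general connected networks (an even cycle has radius equal to its diameter), so the observation as literally stated is imprecise and must be read with the tree hypothesis under which the paper actually applies it; your proof makes the dependence on the tree structure explicit, which is a genuine improvement over the source.
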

Now we are ready to provide the key property which will help us upper bound the convergence time.
\begin{lemma}\label{lem_tree_diameter}
Let $T = (V,E)$ be a connected tree on $n$ vertices having dia\-meter $D\geq 4$. After at most $\tfrac{nD-D^2}{2}$ moves of the \textsc{Max}-SG on $T$ one agent must perform a move which decreases the diameter. 
\end{lemma}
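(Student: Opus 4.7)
My plan is a potential-function argument. Define $\Phi(T) := \sum_{v \in V(T)} c_T(v)$, the sum of all eccentricities in $T$. I will prove two claims:
\begin{itemize}
\item[(i)] If $T \to T'$ is an improving swap (regardless of whether the diameter changes), then $\Phi(T') \leq \Phi(T) - 1$.
\item[(ii)] For any tree on $n$ vertices with diameter exactly $D$, the potential $\Phi(T)$ lies in an interval of length at most $(nD - D^2)/2$.
\end{itemize}
Together these imply the lemma immediately: starting from a tree $T$ of diameter $D$, at most $(nD - D^2)/2$ consecutive improving swaps can keep the diameter at $D$.

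For (i), I adopt the setup of Lemmas~\ref{lem_tree_subtree}--\ref{lem_tree_costincrease}. Let $v$ perform the swap $vu \to vw$, and let $A$ and $B$ be the two components of $T \setminus \{vu\}$ with $v \in V(A)$ and $\{u, w\} \subseteq V(B)$. Writing $c_T^A(x)$ and $c_T^B(y)$ for the eccentricities within the respective subtrees, Lemma~\ref{lem_tree_subtree} yields $c_T^A(x) < c_T(x)$ for every $x \in V(A)$, whence $c_T(x) = d_T(x, v) + 1 + c_T^B(u)$. Applying this to $v$ itself and using $c_{T'}(v) < c_T(v)$ forces $c_T^B(w) \leq c_T^B(u) - 1$. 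Substituting back into $c_{T'}(x) = \max\bigl(c_T^A(x),\, d_T(x,v) + 1 + c_T^B(w)\bigr)$ gives $c_{T'}(x) \leq c_T(x) - 1$ for every $x \in V(A)$, contributing a decrease of at least $|V(A)|$ to $\Phi$. For $y \in V(B)$, either the new eccentricity is achieved inside $B$ (so $c_{T'}(y) \leq c_T(y)$), or it is achieved at some $x_y \in V(A)$, in which case Lemma~\ref{lem_tree_costincrease} delivers $c_{T'}(y) < c_T(x_y)$. I will charge each increase at such a $y$ to the strict decrease at its witness $x_y$, establishing a net change of at most $-1$.

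For (ii), I will compute both extremes explicitly. The upper bound on $\Phi$ is attained by a ``near-endpoint caterpillar,'' where all $n-D-1$ non-path vertices are attached as pendant leaves of the path-vertex adjacent to an endpoint of a diametral path; this yields $\Phi = nD - D^2/4$ for even $D$. The lower bound is attained by the ``center broom,'' where the non-path vertices are pendant leaves of the center; evaluating using Lemma~\ref{lem_tree_center} (center lies on every longest path) and the fact that the position-$i$ vertex of any diametral path has eccentricity $\geq \max(i, D-i)$, one obtains $\Phi = (D^2 + 2nD - 2D + 4n - 4)/4$. Subtracting gives $\Phi_{\max} - \Phi_{\min} = (nD-D^2)/2 - (n - D/2 - 1) \leq (nD-D^2)/2$, and the odd-$D$ case is analogous with $\lceil D/2 \rceil$ replacing $D/2$.

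The main obstacle is the charging step in (i): Lemma~\ref{lem_tree_costincrease} individually controls each $y \in V(B)$ whose eccentricity increases, but a priori multiple such $y$'s could share the same witness $x_y \in V(A)$, spoiling a naive one-to-one charging. The resolution should come from a more careful witness assignment --- e.g., sending $y$ to the endpoint of its longest path in $T'$ --- combined with structural information on $A$ and $B$ near the swapped edge. A fallback, sufficient for the $(nD-D^2)/2$ bound, is to charge the \emph{total} increase across $V(B)$ against the entire decrease across $V(A)$ and check that the latter dominates the former by at least~$1$, using that every $y$-side increase is strictly smaller than the corresponding $x_y$-side cost in $T$.
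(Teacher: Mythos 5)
Your claim~(ii) is essentially sound: once you fix a diametral path and observe that its vertex at position $i$ always has eccentricity exactly $\max(i,D-i)$ in any tree of diameter $D$, the range of $\Phi$ reduces to the same count $(n-D-1)(D-\lceil D/2\rceil-1)<(nD-D^2)/2$ that the paper uses. The fatal problem is claim~(i): the sum of eccentricities is \emph{not} a generalized ordinal potential for the \textsc{Max}-SG on trees, so no charging scheme can establish it. Concrete counterexample on $13$ vertices: take the path $p_0p_1p_2p_3p_4$, attach five leaves $y_1,\dots,y_5$ to $p_3$, and attach $v$ to $p_3$ with a pendant path $v a_1 a_2$ hanging from $v$. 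Agent $v$ has eccentricity $4$ (realized at $p_0$), and the swap $vp_3\to vp_2$ is improving (new eccentricity $3$). The eccentricities of $(a_2,a_1,v,p_3,p_2,p_1,p_0,p_4,y_1,\dots,y_5)$ change from $(6,5,4,3,4,5,6,4,4,4,4,4,4)$ to $(5,4,3,4,3,4,5,5,5,5,5,5,5)$: the seven vertices $p_3,p_4,y_1,\dots,y_5$ each gain $1$ while only six vertices lose $1$, so $\Phi$ increases from $57$ to $58$. This is exactly the failure mode you flagged --- all seven increasing $B$-side vertices share the single witness $a_2\in V(A)$ --- and it also kills your fallback, since the aggregate $B$-side increase ($+7$) strictly exceeds the aggregate decrease ($-6$). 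This is precisely why the paper's Lemma~\ref{lem_tree_potfunct} resorts to the lexicographic order on the sorted cost vector rather than the sum.

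With claim~(i) false, the plan of bounding the number of moves by the range of $\Phi$ collapses, because $\Phi$ can go up along the trajectory. The paper circumvents this by arguing only about the moves that occur \emph{before} the diameter drops: it fixes the longest-surviving diametral path $P$, argues that these moves are made by agents off $P$ with $V(A)\cap V(P)=\emptyset$, and then uses the center-vertex Lemma~\ref{lem_tree_center} to show that such moves increase \emph{nobody's} eccentricity; only then does the ``at least $1$ decrease per move over a range of $(n-D-1)(D-\lceil D/2\rceil-1)$'' count go through. In my counterexample the offending swap happens to decrease the diameter (from $6$ to $5$), which is consistent with the paper's restricted claim; but your claim~(i) is asserted for all improving swaps and your proposed proof makes no use of diameter preservation, so the argument cannot be repaired without importing essentially the paper's structural analysis.
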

\begin{proof}
Let $v,w \in V$ such that $d_T(v,w) = D \geq 4$ and let $P_{vw}$ be the path from $v$ to $w$ in $T$. Clearly, if no agent in $V(P_{vw})$ makes an improving move, then the diameter of the network does not change. On the other hand, if the path $P_{vw}$ is the unique path in $T$ having length $D$, then any improving move of an agent in $V(P_{vw})$ must decrease the diameter by at least $1$. The network creation process starts from a connected tree having diameter $D\geq 4$ and, by Lemma~\ref{lem_tree_potfunct}, must converge to a stable tree in a finite number of steps. Moreover, Lemma~\ref{lem_tree_potfunct} guarantees that the diameter of the network cannot increase in any step of the process. It was shown by Alon et al.~\cite{ADHL10} that any stable tree has diameter at most~$3$. Thus, after a finite number of steps the diameter of the network must strictly decrease, that is, on all paths of length $D$ some agent must have performed an improving move which reduced the length of the respective path. We fix 
the path $P_{vw}$ to be the path of length $D$ in the network which survives longest in this process.       

It follows, that there are $|V\setminus V(P_{vw})| = n - (D+1)$ agents which can perform improving moves without decreasing the diameter. We know from Observation~\ref{obs_cost_vector} and Lemma~\ref{lem_tree_center} that each one of those $n - (D+1)$ agents can decrease her cost to at most $\left\lceil \frac{D}{2}\right\rceil + 1$ and has to decrease her cost by at least $1$ for each edge-swap. We show that an edge-swap of such an agent does not increase the cost of any other agent and use the minimum possible cost decrease per step to 
conclude 
the desired bound.  

Let $u \in V(T)\setminus V(P_{vw})$ be an agent who decreases her cost by swapping the edge $ux$ to $uy$ and let $T'$ be the tree after this edge-swap. Let $a,b \in V(T)$ be arbitrary agents. Clearly, if $\{u,y\} \not\subseteq V(P_{ab})$ in $T'$, then $d_T(a,b) = d_{T'}(a,b)$. Let $A$ be the tree of $T'' = (V,E\setminus\{uy\})$ which contains $u$ and let $B$ be the tree of $T''$ which contains $y$. W.l.o.g. let $a \in V(A)$ and $b \in V(B)$. By Corollary~\ref{cor_tree_subtree}, we have $c_T(z) > c_{T'}(z)$ for all $z \in V(A)$ and it follows that $V(A) \cap V(P_{vw}) = \emptyset$. Hence, it remains to analyze the change in cost of all agents in $V(B)$. 

If no vertex on the path $P_{ab}$ is a center-vertex in $T'$, then, by Lemma~\ref{lem_tree_center}, we have that $d_{T'}(a,b) < c_{T'}(b)$. It follows that every longest path of agent $b$ in $T'$ lies entirely in subtree $B$ which implies that $c_{T'}(b) \leq c_T(b)$.

If there is a center-vertex of $T'$ on the path $P_{ab}$ in $T'$, then let $v^*$ be the last such vertex on this path. We have assumed that the diameters of $T'$ and $T$ are equal, which implies that $P_{vw}$ is a longest path of agent $v$ in $T'$. Since, by Lemma~\ref{lem_tree_center}, any center-vertex of $T'$ must lie on all longest paths, it follows that $v^*$ is on the path $P_{vw}$ and we have $v^* \in V(B)$. W.l.o.g. let $d_{T'}(v,b) \geq d_{T'}(w,b)$. We have $d_{T'}(a,b) = d_{T'}(a,v^*) + d_{T'}(v^*,b) \leq d_{T'}(v,v^*) + d_{T'}(v^*,b)$. Hence, we have $d_{T'}(a,b) \leq c_{T'}(b)$. Since the path $P_{bv}$ is in subtree $B$, we have $c_{T'}(b) \leq c_T(b)$.

Now we can easily conclude the upper bound on the number of moves which do not decrease the diameter of $T$. Each of the $n - (D+1)$ agents with cost at most $D$ may decrease their cost to $\left\lceil\frac{D}{2}\right\rceil + 1$. If we assume a decrease of $1$ per step, then this yields the following bound: 
 $$(n - (D+1)) \left(D - \left(\left\lceil \frac{D}{2}\right\rceil +1  \right) \right) < (n-D)\frac{D}{2} = \frac{nD - D^2}{2}.$$
\end{proof}

\begin{proofof}{Theorem~\ref{thm_tree_pot}}
By Lemma~\ref{lem_tree_potfunct}, we know there exists a generalized ordinal potential function for the \textsc{Max}-SG on trees. Hence, we know that this game is a FIPG and we are left to bound the maximum number of improving moves needed for convergence. It was already shown by Alon et al.~\cite{ADHL10}, that the only stable trees of the \textsc{Max}-SG on trees are stars or double-stars. Hence, the process must stop at the latest when diameter $2$ is reached. 

Let $N_n(T)$ denote the maximum number of moves needed for convergence in the \textsc{Max}-SG on the $n$-vertex tree $T$. Let $D(T)$ be the diameter of $T$. 
Let $D_{i,n}$ denote the maximum number of steps needed to decrease the diameter of any $n$-vertex tree having diameter $i$ by at least $1$. Hence, we have $$N_n(T) \leq \sum_{i=3}^{D(T)} D_{i,n} \leq \sum_{i=3}^{n-1} D_{i,n},$$ since the maximum diameter of a $n$-vertex tree is $n-1$. By applying Lemma~\ref{lem_tree_diameter} and adding the steps which actually decrease the diameter, this yields $$N_n(T) \leq \sum_{i=3}^{n-1} D_{i,n} < \sum_{i=3}^{n-1} \left(\frac{ni-i^2}{2} + 1 \right) < n + \frac{n}{2} \left(\sum_{i=1}^{n}i\right) - \frac{1}{2}\left(\sum_{i=1}^n i^2\right) \in \mathcal{O}(n^3).$$
\end{proofof}
The following result shows that we can speed up the convergence time by employing a very natural move policy. The speed-up is close to optimal, since it is easy to see that there are instances in which $\Omega(n)$ steps are necessary. The proof can be found in the Appendix. 
\begin{theorem}\label{thm_mcbrd_tree}
The \textsc{Max}-SG on trees with the max cost policy converges in $\Theta(n\log n)$ moves.
\end{theorem}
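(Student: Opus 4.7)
The plan is to prove both the upper bound $O(n\log n)$ and the matching lower bound $\Omega(n\log n)$. For the upper bound I would partition the execution into phases, where a phase is a maximal sub-sequence of moves during which the diameter $D$ of the current tree stays constant. By Lemma~\ref{lem_tree_potfunct} the diameter is monotonically non-increasing, so phases form a strictly decreasing sequence of $D$-values. I would then establish two claims: (i) each phase contains at most $n$ moves, and (ii) between consecutive phases the diameter drops from $D$ to at most $\lceil D/2\rceil + O(1)$.

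For (i), observe that under the max cost policy the moving agent in $G_i$ is necessarily a peripheral vertex (its cost equals the current diameter $D$). By Corollary~\ref{cor_tree_subtree} every vertex in the subtree $A$ containing the mover strictly decreases its cost, and by the case analysis in the proof of Lemma~\ref{lem_tree_diameter} no vertex in $V(B)$ can attain cost larger than $D$ after the swap. Hence the set of peripheral vertices strictly shrinks by at least one per move, so a phase contains at most $n$ moves before no peripheral vertex remains and the diameter must drop. For (ii), I would combine Lemma~\ref{lem_tree_center} with Observation~\ref{obs_cost_vector}: once every vertex has cost strictly less than $D$, each vertex must lie within distance at most roughly $D/2$ of some center-vertex of the current tree, and this geometric constraint forces the new diameter to be at most $\lceil D/2\rceil + O(1)$. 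Combining (i) and (ii) yields $O(\log n)$ phases of length $O(n)$ each, giving $O(n\log n)$ moves in total.

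For the matching lower bound I would construct a complete balanced binary tree on $n$ vertices (initial diameter $\Theta(\log n)$) together with an adversarial schedule of improving moves consistent with the max cost policy. At each step the adversary lets the max-cost agent perform a sub-optimal but improving swap whose cost decreases by only~$1$, e.g.\ reattaching a leaf to its grandparent rather than directly to the root. One can then argue that each of the $\Theta(n)$ leaves must be processed through $\Theta(\log n)$ consecutive swaps before a stable star or double-star is reached, giving $\Omega(n\log n)$ moves in total.

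The main obstacle I anticipate is proving claim (ii): showing that the diameter is approximately halved (rather than merely decremented by one) when a phase ends requires a careful structural analysis of the configuration at the moment the last peripheral vertex disappears, exploiting the fact that every vertex must then lie close to a center-vertex of the current tree. If (ii) turns out to be too strong in this form, a weaker statement showing a constant-factor decrease of $D$ after $O(n)$ moves would still suffice and can be obtained by the same Lemma~\ref{lem_tree_center}--Observation~\ref{obs_cost_vector} analysis applied at a slightly later point in the phase sequence.
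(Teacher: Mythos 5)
Your claim (i) is sound: under the max cost policy the mover is a leaf whose cost equals the current diameter $D$, its cost strictly drops, and no other agent's cost increases, so the set of cost-$D$ vertices shrinks by one per move and a phase has at most $n$ moves. But claim (ii) is false, and the upper bound collapses with it. Take the path $P_n$: the max-cost agent $v_1$ reattaches to a center of $v_2\dots v_n$, and the new tree still contains the path $v_2\dots v_n$, so the diameter drops from $n-1$ to $n-2$ --- by exactly $1$, not to $\lceil D/2\rceil+O(1)$ --- and this repeats, producing $\Theta(n)$ phases. Your decomposition then yields only $O(n^2)$, and the fallback claim of a constant-factor decrease of $D$ after $O(n)$ moves fails on the same example. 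The accounting that works is per agent, not per diameter value: the paper observes that the mover is always a leaf (Observation~\ref{obs_tree_leaf}) whose best response attaches it to a center-vertex of the remaining tree, so its own cost drops from $c$ to at most $\lceil c/2\rceil+1$ (Observation~\ref{obs_tree_br}); since leaf moves never increase anyone else's cost, each agent moves only $O(\log n)$ times, giving $O(n\log n)$ in total (Lemma~\ref{lem_tree_mc_upper}).

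Your lower bound is also flawed, and it is in direct tension with your upper bound. It relies on scheduling sub-optimal improving swaps that decrease the mover's cost by only $1$. If such moves are permitted, then on $P_n$ each agent can be made to move $\Theta(n)$ times and the process takes $\Omega(n^2)$ steps, contradicting the claimed $O(n\log n)$ upper bound; the theorem only holds if moving agents play best responses, which is what both directions of the paper's proof assume. Under best responses your balanced binary tree does not work: its diameter is $\Theta(\log n)$, a best response of a max-cost leaf roughly halves its cost, and costs never increase, so each agent moves only $O(\log\log n)$ times and you get $O(n\log\log n)$, not $\Omega(n\log n)$. The paper instead takes $P_n$ as the initial tree (Lemma~\ref{lem_tree_mc_lower}): there the initial costs are $\Theta(n)$, each best response of $v_i$ still only halves its cost, so $v_i$ must move $\Theta(\log (n-i))$ times before reaching cost $3$, and summing over $i$ gives $\Omega(n\log n)$ with best responses throughout.
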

\begin{proof}
We prove Theorem~\ref{thm_mcbrd_tree}, by proving the lower and the upper bound separately, starting with the former. Since we analyze the max cost policy, we need two additional observations. 
\begin{observation}\label{obs_tree_leaf}
An agent having maximum cost in a tree $T$ must be a leaf of $T$.
\end{observation}
\begin{observation}\label{obs_tree_br}
Let $u$ be an unhappy agent in $T = (V,E)$ and let $u$ be a leaf of $T$ and let $v$ be $u$'s unique neighbor. Let $B$ be the tree of $T' = (V,E\setminus\{uv\})$ which contains $v$. The edge-swap $uv$ to $uw$, for some $w\in V(B)$ is a best possible move for agent $u$ if $w$ is a center-vertex of~$B$.   
\end{observation}

\begin{lemma}\label{lem_tree_mc_lower}
There is a tree $T$ on $n$ vertices where the \textsc{Max}-SG on $T$ with the max cost policy needs $\Omega(n\log n)$ moves for convergence.  
\end{lemma}
\begin{proof}
We consider the path on $n$-vertices $P_n = v_1v_2\dots v_n$ of length $n-1$. We apply the max cost policy and for breaking ties we will always choose the vertex having the smallest index among all vertices having maximum cost. If a maximum cost vertex has more than one best response move, then we choose the edge-swap towards the new neighbor having the smaller index. With these assumptions and with Observation~\ref{obs_tree_leaf} and Observation~\ref{obs_tree_br}, we have that the center-vertex having the smallest index will ``shift'' towards a higher index, from $v_{\lceil n/2 \rceil}$ to $v_{n-2}$. Finally, agent $v_n$ is the unique agent having maximum cost and her move transforms the tree to a star. See Fig.~\ref{fig:mc_tree_example} for an illustration for $n=9$.
\begin{figure}[!h]
  \centering
  \includegraphics[width=\textwidth]{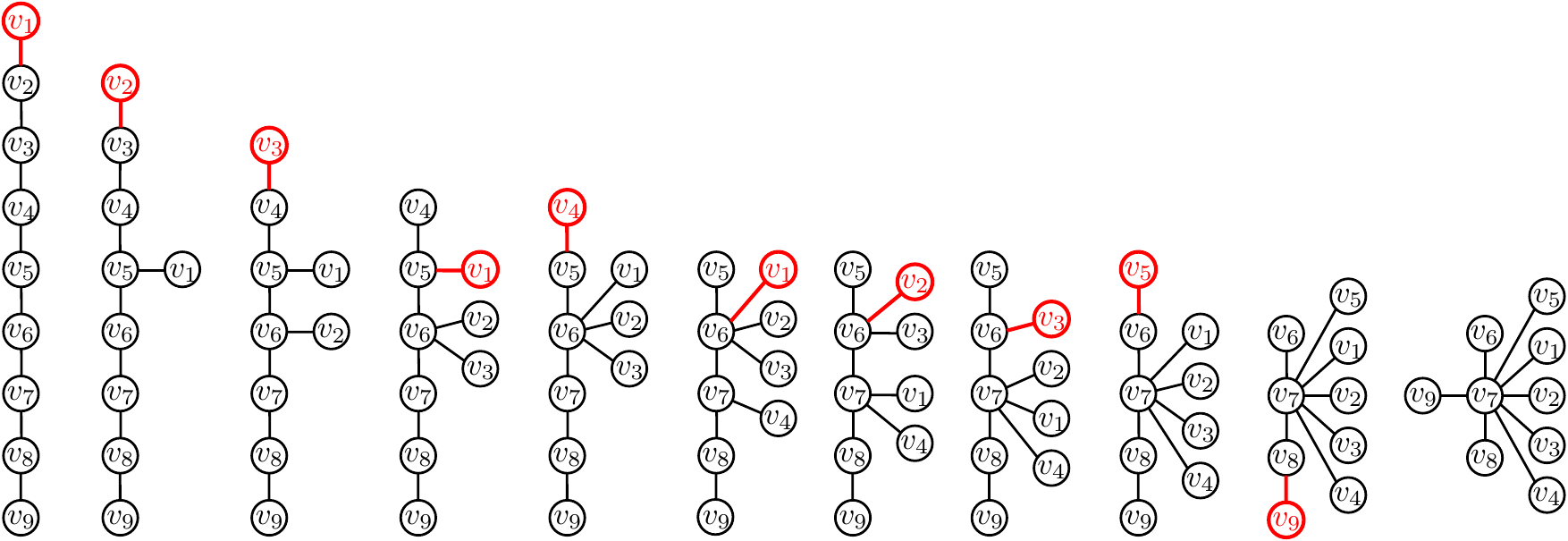}
  \caption{The convergence process for with $n=9$ in the \textsc{Max}-SG on $P_n$.}
  \label{fig:mc_tree_example}
\end{figure}

\noindent We start by analyzing the change in costs of agent $v_1$. Clearly, $c_0 = c_{P_n}(v_1) = n-1$. By Observation~\ref{obs_tree_br}, we know that $v_1$'s best swap connects to the minimum index center-vertex of the tree without vertex $v_1$. Hence after the best move of $v_1$ this agent has cost $c_1 = \left\lceil \frac{c_0-1}{2} \right\rceil + 1 > \frac{c_0}{2}$.  When $v_1$ is chosen to move again, her cost can possibly decrease to $\left\lceil\frac{c_1 - 1}{2}\right\rceil +1 > \frac{c_0}{4}$. After the $i$-th move of $v_1$ her cost is at least $\left\lceil\frac{c_{i-1}-1}{2}\right\rceil+1 > \frac{c_0}{2^i}$. Thus, the max cost policy allows agent $v_1$ to move at least $\log \frac{c_0}{3}$ times until she is connected to vertex $v_{n-2}$, the center of the final star, where she has cost~$3$.

The above implies, that the number of moves of every agent allowed by the max cost policy only depends on the cost of that agent when she first becomes a maximum cost agent. Moreover, since all moving agents are leaves, no move of an agent increases the cost of any other agent. By construction, the cost of every moving agent is determined by her distance towards vertex $v_n$. Since agent $v_n$ does not move until in the last step of the process, we have that a move of agent $v_i$ does not change the cost of any other agent $v_j \neq v_n$ who moves after $v_i$. It follows, that we can simply add up the respective lower bounds on the number of moves of all players, depending on the cost when they first become maximum cost agents. It is easy to see, that agent $v_i$ becomes a maximum cost agent, when the maximum cost is $n-i$. Let $M(P_n)$ denote the number of moves of the \textsc{Max}-SG on $P_n$ with the max cost policy and the above tie-breaking rules. This yields $$M(P_n) > \sum_{c_0 = n-1}^4 \log \frac{c_0}
{
3} \in \Omega(n\log n).$$ 
\end{proof}
\begin{lemma}\label{lem_tree_mc_upper}
The \textsc{Max}-SG on a $n$-vertex tree $T$ with the max cost policy needs $\mathcal{O}(n\log n)$ moves to converge to a stable tree.
\end{lemma}
\begin{proof}
Consider any tree $T$ on $n$ vertices. By Observation~\ref{obs_tree_leaf}, we know that only leaf-agents are allowed to move by the max cost policy, which implies that no move of any agent increases the cost of any other agent. Observation~\ref{obs_tree_br} guarantees that the best possible move of a leaf-agent $u$ having maximum cost $c$ decreases agent $u$'s cost to at most $\left\lceil\frac{c}{2}\right\rceil+1$. Hence, after $\mathcal{O}(\log n)$ moves of agent $u$ her cost must be at most $3$. If the tree converges to a star, then agent $u$ may move one more time. If we sum up over all $n$ agents, then we have that after $\mathcal{O}(n\log n)$ moves the tree must be stable.
\end{proof}
This concludes the proof.
\end{proof}

\subsection{Dynamics on General Networks}\label{sec_maxsg_nontree}
In this section we show that allowing cycles in the initial network completely changes the dynamic behavior of the \textsc{Max}-SG. The proof can be found in the Appendix.
\begin{theorem}\label{thm_maxsg_brcycle}
The \textsc{Max}-SG on general networks admits best response cycles. Moreover, no move policy can enforce convergence. The first result holds even if agents are allowed to perform multi-swaps.
\end{theorem}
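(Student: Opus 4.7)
The plan is to exhibit an explicit small initial network $G_0$ and a cyclic sequence of networks $G_0, G_1, \dots, G_k = G_0$ in which each transition $G_i \to G_{i+1}$ is caused by a strictly improving best-response edge-swap of a single agent. Because the Max-SG has no edge costs and agents only care about their eccentricity, I would aim for a highly symmetric construction (for instance a small cycle augmented by a few attached paths or pendants, chosen so that the unhappy agents all sit on the cycle and each has exactly two ``cycle edges'' available to swap). The symmetry is important: once one transition is verified, the rest can be obtained by the automorphism of the construction, so only one detailed eccentricity calculation is needed.

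First I would fix the candidate graph $G_0$, compute the eccentricity $c_{G_0}(v)$ of every vertex, and identify an unhappy agent $u$. Then I would enumerate all swaps available to $u$ (there are only a few once we fix the small number of neighbors of $u$) and verify that swapping a specific cycle edge $uv$ to another vertex $w$ strictly minimizes $u$'s new eccentricity, making this swap a best response. Checking that the resulting graph $G_1$ is isomorphic to $G_0$ via the symmetry of the construction, and that under this isomorphism the new moving agent is mapped from $u$, closes the cycle. This establishes the first claim and, because each step is also a better-response move, also gives a better-response cycle with multi-swaps (a multi-swap that changes only one edge is still admissible).

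For the second claim, that no move policy can enforce convergence, I would argue that from $G_0$ every improving move (not merely the best responses on the cycle) lands in some network belonging to a finite set $\mathcal{S}$ of configurations, all of which are again non-stable. Concretely, I would enumerate: (i) the set of unhappy agents in $G_0$, (ii) for each such agent, all swaps that strictly decrease her eccentricity, and (iii) verify that each resulting network is isomorphic to some $G_j$ in the cycle or to another non-stable configuration from which the same analysis applies. Since $\mathcal{S}$ contains no stable network, no scheduler can guide the players out of $\mathcal{S}$ while sticking to improving moves, which implies the game is not weakly acyclic from $G_0$. The multi-swap extension is handled by re-running the same enumeration allowing the moving agent to swap several of her incident edges at once, and checking that the resulting eccentricities still leave some agent unhappy.

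The main obstacle is the design of $G_0$: the construction must be small enough for the case analysis to be tractable yet rich enough that (a) each best response strictly improves one agent's eccentricity, (b) every competing swap is verifiably worse so the move really is a best response, and (c) no ``shortcut'' improving move exists that escapes the cyclic set $\mathcal{S}$. Carefully choosing the length of the underlying cycle and the attached pendants so that the diameter and the center-vertex structure shift by exactly the right amount after each swap is the delicate part, and is where most of the technical work of the actual proof will be concentrated.
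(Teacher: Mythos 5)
Your plan for the first claim matches the paper's proof in spirit: the paper's instance is a small vertex-transitive-looking gadget in which the cycle $G_1\to G_2\to G_3\to G_1$ is closed by isomorphism, exactly as you propose, and only one transition needs a detailed eccentricity check. However, there is a concrete gap in your handling of the multi-swap extension. The theorem asserts that the \emph{best response} cycle survives when multi-swaps are allowed, but you only conclude that ``each step is also a better-response move, [which] also gives a better-response cycle with multi-swaps.'' Observing that a single-edge swap is still an admissible multi-swap does not show it remains a \emph{best} response once the strategy space is enlarged: a multi-swap could in principle achieve a strictly larger eccentricity decrease, in which case the cycle would no longer consist of best responses. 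The paper closes this by designing the instance so that the moving agent has cost $3$ and the single swap already attains eccentricity $2$, which is the minimum achievable by \emph{any} combination of swaps (the agent's degree stays fixed, so she cannot become adjacent to all other vertices); hence no multi-swap can outperform the single swap. You need an analogous argument, and your construction must be engineered to support it.

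For the second claim your route is genuinely different from, and strictly more ambitious than, the paper's. The paper arranges that in \emph{every} state of the cycle exactly one agent is unhappy; since a move policy only selects \emph{who} moves (not \emph{which} improving move is played), every policy is forced to pick that agent, who can then choose the cycle-continuing best response --- so no policy can enforce convergence. You instead propose to show that \emph{every} improving move from every reachable state stays inside a closed set $\mathcal{S}$ of non-stable networks, i.e.\ that the game is not weakly acyclic from $G_0$. If your enumeration succeeds this implies the theorem, but it is a much heavier verification burden (you must control all improving moves of all unhappy agents, not just exhibit one per state), and the paper does not claim --- and its construction is not verified to satisfy --- non-weak-acyclicity for the \textsc{Max}-SG on a complete host graph. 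If a single improving move escapes $\mathcal{S}$, your argument for the second claim collapses, whereas the paper's unique-unhappy-agent argument would still stand. I would recommend adopting the lighter invariant (exactly one unhappy agent per state) as the design target for $G_0$, since it is both easier to certify and exactly what the statement requires.
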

\begin{proof}
We prove the theorem by showing that there exists an initial network which induces a best response cycle and where in every step of the cycle exactly one agent is unhappy. The existence of the best response cycle shows that the \textsc{Max}-SG on this instance does not have the finite improvement property. The fact that in every step of the cycle exactly one agent is unhappy shows that no move policy can avoid that cyclic behavior. In every step, swapping one edge suffices to achieve the best possible cost decrease for the moving agent. Hence, there exists a best response cycle even if agents are allowed to perform multi-swaps. However, note that with multi-swaps it is no longer true that there is only one unhappy agent in every step.
\begin{figure}[!h]
  \centering
  \includegraphics[width=13cm]{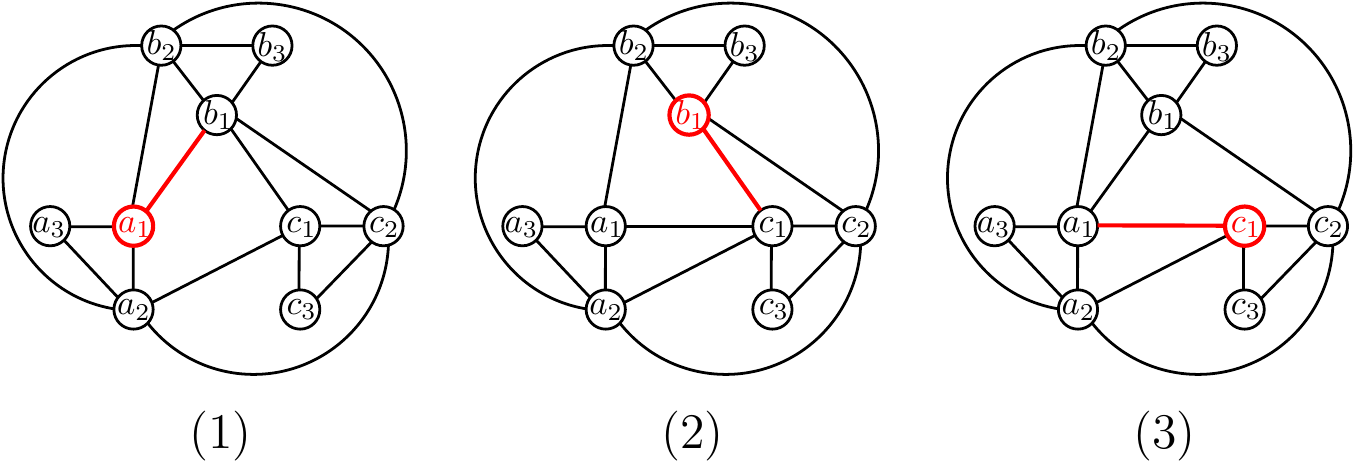}
  \caption{The steps of a best response cycle for the \textsc{Max}-SG on general networks.}
  \label{fig:maxsgbr}
\end{figure}

Consider the initial network $G_1$ which is depicted in Fig.~\ref{fig:maxsgbr}~(1). Note, that only agents $a_1,a_3,b_3$ and $c_3$ have cost $3$ while all other agents have cost $2$. Clearly, agents having cost $2$ cannot improve on their current situation. Agents $a_3,b_3,c_3$ cannot perform an improving move, since all of them have exactly two vertices in distance $3$ and there is no vertex which is a neighbor of both of them. This leaves agent $a_1$ as the only possible candidate for an improving edge-swap. A best possible move for agent $a_1$ is the swap $a_1b_1$ to $a_1c_1$, which yields a distance decrease of $1$ which is clearly optimal. This swap transforms $G_1$ into $G_2$, which is depicted in Fig.~\ref{fig:maxsgbr}~(2). Observe, that $G_2$ is isomorphic to $G_1$, with agent $b_1$ facing exactly the same situation as agent $a_1$ in $G_1$. Agent $b_1$ has the swap $b_1c_1$ to $b_1a_1$ as best response move and we end up in network $G_3$, shown in Fig.~\ref{fig:maxsgbr}~(3). Again, $G_3$ is 
isomorphic to $G_1$, now with agent $c_1$ being the unique unhappy agent. Agent $c_1$'s best possible swap transforms $G_3$ back into $G_1$. 
\end{proof}


\section{Dynamics in Asymmetric Swap Games}\label{sec_asg}
In this section we consider the \textsc{Sum}-ASG and the \textsc{Max}-ASG. Note, that now we assume that each edge has an owner and only this owner is allowed to swap the edge. We show that we can directly transfer the results from above and from \cite{L11} to the asymmetric version if the initial network is a tree. On general networks we show even stronger negative results. Omitted proofs can be found in the Appendix.

Observe, that the instance used in the proof of Theorem~\ref{thm_maxsg_brcycle} and the corresponding instance in \cite{L11} show that best response cycles in the Swap Game are not necessarily best response cycles in the Asymmetric Swap Game. We will show the rather counter-intuitive result that this holds true for the other direction as well. 
\subsection{Asymmetric Swap Games on Trees}\label{sec_asg_tree}
The results in this section follow from the respective theorems in \cite{L11} and from the results in Section~\ref{sec_maxsg_tree} and are therefore stated as corollaries.
\begin{corollary}\label{cor_polyfipg}
 The \textsc{Sum}-ASG and the \textsc{Max}-ASG on $n$-vertex trees are both a poly-FIPG and both must converge to a stable tree in $\mathcal{O}(n^3)$ steps.
\end{corollary}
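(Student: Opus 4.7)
The plan is to reduce the dynamics of the ASG on a tree to those of the corresponding SG on the same underlying tree, and then pull back the convergence results already established for the SG. Two structural observations set up this reduction cleanly. First, in the (A)SG the cost function contains only the distance term, so it depends solely on the undirected graph structure, not on the ownership function $o$. Second, every admissible improving move by an agent $u$ in the ASG is an edge-swap replacing an own edge $ux$ by a new edge $uy$; the very same operation is also an admissible edge-swap for agent $u$ in the SG on the underlying unowned graph, producing an identical undirected network and identical agent costs. In particular, the admissible improving ASG moves on a tree embed into the admissible improving SG moves on the same tree.

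Next I would apply this embedding to transfer the results. Any sequence of improving ASG moves $G_0, G_1, G_2, \dots$ starting from a tree is simultaneously a sequence of improving SG moves starting from the same (unowned) tree. Hence any generalized ordinal potential function for the SG on trees is automatically a generalized ordinal potential function for the ASG on trees: for the \textsc{Max} case, Lemma~\ref{lem_tree_potfunct} supplies such a function via the sorted cost vector, and for the \textsc{Sum} case the analogous potential is given in~\cite{L11}. This already yields the FIPG conclusion for both variants. To upgrade to poly-FIPG, the same embedding says that any upper bound on the length of an improving SG sequence on an $n$-vertex tree is also an upper bound for the ASG on that tree. Theorem~\ref{thm_tree_pot} provides the $\mathcal{O}(n^3)$ bound in the \textsc{Max} case, and~\cite{L11} provides the matching $\mathcal{O}(n^3)$ bound in the \textsc{Sum} case.

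There is no real obstacle; the only delicate point is keeping the direction of the reduction straight. Restricting admissibility (from SG to ASG) can only \emph{shorten} each improving sequence, so upper bounds on convergence time transfer from SG to ASG, even though the set of stable networks of the ASG is in general a strict superset of the set of stable networks of the SG. That asymmetry is harmless for an upper bound, which is precisely why the statement can be recorded as a corollary rather than requiring fresh combinatorial work.
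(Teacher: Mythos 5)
Your proposal is correct and follows essentially the same route as the paper: both arguments observe that every improving ASG move is an improving SG move on the same underlying (unowned) tree, so the potential functions and the $\mathcal{O}(n^3)$ bounds from Theorem~\ref{thm_tree_pot} and \cite{L11} transfer directly. Your explicit remark that ASG-stable trees may form a strict superset of SG-stable trees, and that this is harmless for an upper bound, is a small clarification the paper leaves implicit.
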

\begin{proof}
It was shown in \cite{L11} that the \textsc{Sum}-SG on trees is an ordinal potential game, where the social cost, which is the sum of all agent's costs, serves as ordinal potential function. Furthermore, it was shown that the \textsc{Sum}-SG on $n$-vertex trees must converge in $\mathcal{O}(n^3)$ steps. Note, that ordinal potential games are a subclass of FIPG~\cite{MS96}.   

The only difference in the \emph{asymmetric} version of this game is that edges have owners and only the respective owner is allowed to swap an edge. Clearly, since any improving swap decreases the value of the (generalized) potential function, this is independent of the edge-ownership. Furthermore, with edges having owners, we have that in each network less moves are possible and every moving agent has, compared with the Swap Game, at most the same number of admissible strategies in any step. Thus, the convergence process cannot be slower. The results from \cite{L11} and Theorem~\ref{thm_tree_pot} then yields the desired statement.
\end{proof}
\begin{corollary}\label{cor_mcbrd}
Using the max cost policy and assuming a $n$-vertex tree as initial network, we have that\vspace*{-0.2cm}
\begin{itemize}
 \item[$\bullet$] the \textsc{Sum}-ASG converges in $\max\{0,n-3\}$ steps, if $n$ is even and in $\max\{0,n+\lceil n/2 \rceil -5\}$ steps, if $n$ is odd. Moreover, both bounds are tight and asymptotically optimal.
 \item[$\bullet$] the \textsc{Max}-ASG converges in $\Theta(n\log n)$ steps.
\end{itemize}
\end{corollary}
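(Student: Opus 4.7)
The plan is to obtain the statement by reduction to the analogous results for the \textsc{Sum}-SG and the \textsc{Max}-SG on trees with the max cost policy, namely the theorem in~\cite{L11} (for the \textsc{Sum}-version) and Theorem~\ref{thm_mcbrd_tree} (for the \textsc{Max}-version), following the same restriction-based argument already used for Corollary~\ref{cor_polyfipg}. For the upper bounds, I would first observe that every admissible strategy-change in the ASG is also an admissible strategy-change in the corresponding SG, and that the potential-decrease arguments underlying the SG analyses (social cost for the \textsc{Sum}-version, the lexicographically ordered sorted cost vector of Lemma~\ref{lem_tree_potfunct} for the \textsc{Max}-version) depend only on the topology of the current network and not on the ownership-function. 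Hence the same potential strictly decreases on every improving move of the ASG, and the network creation process on an $n$-vertex tree terminates in a star or double-star.

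Next I would transfer the per-step cost-decrease bounds on which the max cost policy analyses rely. Observation~\ref{obs_tree_leaf} and Observation~\ref{obs_tree_br} are purely topological: whichever unhappy agent is picked by the max cost policy is a leaf, and its best possible swap reduces its own cost to at most $\lceil c/2\rceil + 1$. Since imposing an ownership-function can only remove candidates from the set of unhappy agents (an agent owning no edge has no admissible move, and is thus considered happy in the ASG), the max cost agent selected in the ASG has cost no larger than in the SG. Therefore the move-counting arguments of Lemma~\ref{lem_tree_mc_upper} and of the corresponding \textsc{Sum}-SG upper bound in~\cite{L11} go through unchanged, yielding the claimed upper bounds of $\max\{0,n-3\}$ or $\max\{0,n+\lceil n/2\rceil-5\}$ for the \textsc{Sum}-ASG and $\mathcal{O}(n\log n)$ for the \textsc{Max}-ASG.

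For the lower bounds I would explicitly lift the tight instances from the SG analyses to the ASG by picking an appropriate ownership-function. For the \textsc{Max}-ASG I take the path $P_n=v_1v_2\dots v_n$ of Lemma~\ref{lem_tree_mc_lower} and orient each edge $v_iv_{i+1}$ so that the endpoint with smaller index owns it; then in every step of the bad sequence described in the proof of Lemma~\ref{lem_tree_mc_lower} the moving leaf owns its unique edge, so the same sequence of edge-swaps is admissible in the ASG and the $\Omega(n\log n)$ bound transfers. An entirely analogous orientation of the tight \textsc{Sum}-SG instance from~\cite{L11} makes every step of its worst-case run a valid swap of an owned edge, matching the exact values $\max\{0,n-3\}$ and $\max\{0,n+\lceil n/2\rceil-5\}$. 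Asymptotic optimality in the \textsc{Sum}-case is then immediate, since starting from a path one needs $n-\mathcal{O}(1)$ swaps to reach any star- or double-star configuration.

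The main obstacle I anticipate is the bookkeeping needed to verify that the exact (non-asymptotic) constants in the \textsc{Sum}-ASG bound survive the reduction: one has to check that the chosen ownership does not accidentally enable a faster move sequence and that the tie-breaking used in~\cite{L11} remains realizable under the ownership-constraints of the ASG. This is a careful case analysis along the steps of the SG worst-case run rather than a new idea, so I would treat it as the most laborious but conceptually routine part of the proof.
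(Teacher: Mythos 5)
Your proposal follows essentially the same route as the paper's proof: the upper bounds transfer because every admissible ASG move is also an admissible SG move (so the same potential and move-counting arguments apply and convergence cannot be slower), and the lower bounds transfer by assigning ownership of each swapped edge to the agent who repeatedly swaps it in the tight SG instances. The only cosmetic difference is your explicit ``smaller index owns'' orientation of $P_n$, which hands the edge $v_{n-1}v_n$ to $v_{n-1}$ and thereby forfeits $v_n$'s single final move; the paper's rule (assign each swapped edge to the agent who performs the swaps) avoids even this, but in either case the stated bounds are unaffected.
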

\begin{proof}
 We can carry over the results from \cite{L11} and Section~\ref{sec_maxsg_tree} about speeding up the convergence process by a suitable move policy. The reason for this is that in all used lower bound constructions it holds that whenever an edge is swapped more than once, then it is the same incident agent who moves again. Hence, we can assign the edge-ownership to this agent and get the same lower bounds in the asymmetric version. The upper bounds carry over trivially, since agents cannot have more admissible new strategies in any step in the asymmetric version compared to the version without edge-owners.
\end{proof}

\subsection{Asymmetric Swap Games on General Graphs}\label{sec_asg_nontree}
If we move from trees to general initial networks, we get a very strong negative result for the \textsc{Sum}-ASG: There is no hope to enforce convergence if agents stick to playing best responses even if multi-swaps are allowed. 
\begin{theorem}\label{thm_sumasymswap}
  The \textsc{Sum}-ASG on general networks is not weakly acyclic under best response. Moreover, this result holds true even if agents can swap multiple edges in one step. 
\end{theorem}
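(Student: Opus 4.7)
The goal is to exhibit an initial network $G_0$ from which \emph{every} maximal sequence of best-response moves fails to reach a stable network. Since the state space is finite, it suffices to construct $G_0$ together with a set $\mathcal{R}$ of networks such that (i) $G_0 \in \mathcal{R}$, (ii) no network in $\mathcal{R}$ is stable, and (iii) for every $G \in \mathcal{R}$ and every unhappy agent $u$ in $G$, every best-response move of $u$ (including multi-swaps of her owned edges) produces a network that again lies in $\mathcal{R}$. I would aim for $|\mathcal{R}|$ to be small and highly symmetric, so that the verification of (iii) reduces to a single case.

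Concretely, I would mimic the cyclic construction used in the proof of Theorem~\ref{thm_maxsg_brcycle}, but now enrich it with a carefully chosen ownership function $o$ that is preserved under the same rotational symmetry used in the Swap-Game cycle. The template is a graph with three ``branches'' $a$, $b$, $c$ (each of a few vertices) so that the automorphism $a \to b \to c \to a$ maps $G_i$ to $G_{i+1 \bmod 3}$. Ownership should be assigned so that in $G_0$ exactly one agent, say $a_1$, owns an edge whose swap strictly decreases her sum-distance, while for every other agent every admissible swap of one of her own edges weakly increases the sum of distances. The distance-cost drop for $a_1$ must be computed so that the chosen swap is a (unique, up to symmetry) best response, and that the resulting network $G_1$ is isomorphic to $G_0$ with the roles rotated. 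Iterating gives $\mathcal{R} = \{G_0, G_1, G_2\}$, none of which is stable, and no best-response move escapes.

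Two points require care. First, I must verify (iii) in the presence of multi-swaps: I need to rule out that swapping several of agent $a_1$'s owned edges simultaneously yields a strictly greater sum-distance decrease than the designed single swap. The cleanest way is to enumerate $a_1$'s own edges and show, using the symmetry of the construction, that each additional simultaneous swap either leaves the sum of distances unchanged or strictly worsens it, so the unique optimum is attained already by the one-edge swap (up to isomorphism). Second, I must confirm that in each $G_i$ the prescribed agent is indeed the \emph{only} unhappy agent: every other agent must have no improving swap of any of her own edges, whether single or multiple. This is a finite but tedious case check, done by computing all pairwise shortest-path distances in $G_0$ (and using symmetry for $G_1,G_2$).

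The main obstacle is the combination of two features at once: asymmetric ownership (which constrains which swaps are admissible for which agent) and the \textsc{Sum} distance cost (which is globally sensitive to every structural change, unlike \textsc{Max}). The Theorem~\ref{thm_maxsg_brcycle} instance is not directly usable because reassigning ownership typically destroys the property that the designated agent has an improving swap while no other agent does. I therefore expect to need a strictly larger graph than the one in Theorem~\ref{thm_maxsg_brcycle}, where the extra vertices act as ``counterweights'' that pin down the sum-distance of every non-designated agent to a local minimum over her own-edge swaps. Once such a base instance is found, multi-swap robustness and verification of the closed-cycle property $\mathcal{R}$ follow from the rotational symmetry by a single explicit distance computation in $G_0$.
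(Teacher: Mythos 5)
Your high-level strategy is exactly the one the paper uses: exhibit an explicit initial network together with a closed set $\mathcal{R}$ of states, show that in every state exactly one agent is unhappy, that her best response (also among multi-swaps of her owned edges) is unique and leads back into $\mathcal{R}$, and conclude that no best-response sequence can escape. You have also correctly anticipated that the symmetric Swap-Game cycles cannot simply be reused by decorating them with an ownership function; the paper makes the same point (and even remarks that its ASG cycle is \emph{not} a best-response cycle of the \textsc{Sum}-SG, because the moving agent $f$ would prefer a swap of an edge she does not own).

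The genuine gap is that the witness is the entire content of this theorem, and you have not produced it. Your proposal defers the construction ("Once such a base instance is found\dots"), so as written it proves nothing; there is no a priori reason such an instance exists, and the verification of conditions (ii) and (iii) for a concrete graph is where all the work lies. Moreover, the specific template you commit to --- a three-branch graph with a rotational automorphism $a \to b \to c \to a$ carrying $G_i$ to $G_{i+1 \bmod 3}$, with ownership preserved by the rotation --- is not what the paper's construction looks like, and it is not clear it can be made to work for the \textsc{Sum}-cost. The paper's cycle has four states $G_1,\dots,G_4$ and is driven by two agents $b$ and $f$ who alternate, each owning exactly one ``free'' non-bridge edge; the cost decreases along the cycle are $4,1,1,3$, i.e., deliberately asymmetric, and the remaining agents are pinned down by bridges, leaves that own nothing, and edges that are already optimally placed. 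Under your symmetric template each designated agent would gain the same amount $\delta$ per move and would have to lose exactly $\delta$ from the other two moves combined; with the \textsc{Sum}-cost, which reacts globally to every structural change, engineering this while keeping every non-designated agent at a strict local optimum over all multi-swaps of her own edges is precisely the difficulty, and asserting that ``counterweight'' vertices will resolve it is a hope, not an argument. To close the gap you must exhibit the graph, the ownership function, and carry out the finite case analysis you correctly identify as necessary.
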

\begin{proof}
 We give a network which induces a best response cycle. Additionally, we show that in each step of this cycle exactly one agent can decrease her cost by swapping an edge and that the best possible swap for this agent is unique in every step. Furthermore, we show that the moving agent cannot outperform the best possible single-swap by a multi-swap. This implies that if agents stick to best response moves then \emph{no} best response dynamic can enforce convergence to a stable network and allowing multi-swaps does not alter this result.
 \begin{figure}[!h]
  \centering
  \includegraphics[width=\textwidth]{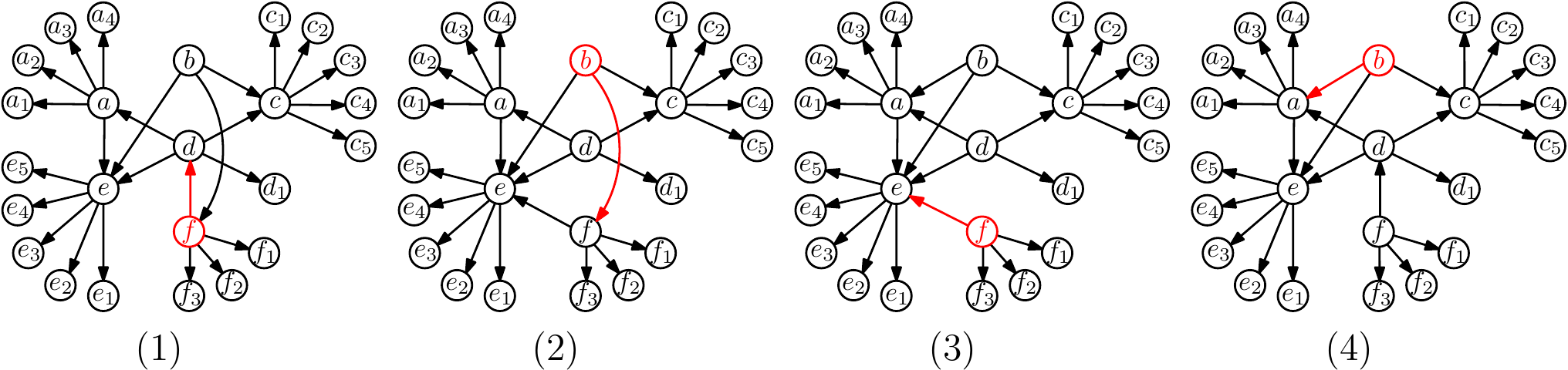}
  \caption{The steps of a best response cycle for the \textsc{Sum}-ASG on general networks. Note, that edge directions indicate edge-ownership. All edges are two-way.}
  \label{fig:asymswapbr}
\end{figure}

 The best response cycle consists of the networks $G_1, G_2, G_3$ and $G_4$ given in Fig.~\ref{fig:asymswapbr}. We begin with showing that in $G_1,\dots,G_4$ all agents, except agent $b$ and agent $f$, cannot perform an improving strategy change even if they are allowed to swap multiple edges in one step.  

 In $G_1,\dots,G_4$ all leaf-agents do not own any edges and the agents $c$ and $e$ cannot swap an edge since otherwise the network becomes disconnected. For the same reason, agent~$d$ cannot move the edge towards $d_1$. Agent~$d$ owns three other edges, but they are optimally placed since they are connected to the vertices having the most leaf-neighbors. It follows, that agent~$d$ cannot decrease her cost by swapping one edge or by performing a multi-swap. Note, that this holds true for all networks $G_1,\dots,G_4$, although the networks change slightly. Agent~$a$ cannot move her edges towards $a_i$, for $1\leq i \leq 4$. On the other hand, it is easy to see that agent~$a$'s edge towards vertex $e$ cannot be swapped to obtain a strict cost decrease since the most promising choice, which is vertex $c$, yields the same cost in $G_1$ and $G_4$ and even higher cost in $G_2$ and $G_3$. Trivially, no multi-swap is possible for agent~$a$.
 
 Now, we consider agent $b$ and agent $f$. First of all, observe that in $G_1,\dots,G_4$ agent~$f$ owns exactly one edge which is not a bridge. Thus, agent $f$ cannot perform a multi-swap in any step of the best response cycle. Agent~$b$, although owning three edges, is in a similar situation: Her edges to vertex $c$ and $e$ can be considered as fixed, since swapping one or both of them does not yield a cost decrease in $G_1,\dots,G_4$. Hence, agent $b$ and agent $f$ each have one ``free'' edge to operate with. In $G_1$ agent $b$'s edge towards $f$ is placed optimally, since swapping towards $a$ or $d$ does not yield a cost decrease. In $G_3$, agents $b$'s edge towards $a$ is optimal, since swapping towards $d$ or $f$ does not decrease agent $b$'s cost. Analogously, agent~$f$'s edge towards $e$ in $G_2$ and her edge towards $d$ in $G_4$ are optimally placed. 

 Last, but not least, we describe the best response cycle: In $G_1$ agent~$f$ can improve and her unique best possible edge-swap in $G_1$ is the swap from $d$ to $e$, yielding a cost decrease of~$4$. In $G_2$ agent~$b$ has the swap from $f$ to $a$ as unique best improvement which yields a cost decrease of $1$. In $G_3$ have agent~$f$ being unhappy with her strategy and the unique best swap is the one from $e$ to $d$ yielding an improvement of $1$. In $G_4$ it is agent~$b$'s turn again and her unique best swap is from $a$ to $f$ which decreases her cost by $3$. After agent~$b$'s swap in $G_4$ we arrive again at network $G_1$, hence $G_1,\dots,G_4$ is a best response cycle where in each step exactly one agent has a single-swap as unique best possible improvement.
\end{proof}
\begin{remark}
 Note, that the best response cycle presented in the proof of Theorem~\ref{thm_sumasymswap} is not a best response cycle in the \textsc{Sum-SG}. The swap $fb$ to $fe$ of agent $f$ in $G_1$ yields a strictly larger cost decrease than her swap $fd$ to $fe$.
\end{remark}

Compared to Theorem~\ref{thm_sumasymswap}, we show a slightly weaker negative result for the max-version.  
\begin{theorem}\label{thm_maxasg_brcycle}
 The \textsc{Max}-ASG on general networks admits best response cycles. Moreover, no move policy can enforce convergence.
\end{theorem}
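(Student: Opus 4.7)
My plan is to prove both assertions in tandem by constructing an explicit initial network yielding a best response cycle $G_1 \to G_2 \to \cdots \to G_k \to G_1$ in which \emph{exactly one} agent is unhappy in every state. Uniqueness of the unhappy agent simultaneously delivers both conclusions: the cycle shows that \textsc{Max}-ASG fails the finite improvement property, and the fact that only one agent can move in each state forces every move policy to pick that agent, so the cyclic behavior is unavoidable regardless of the policy.

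The basic template is the one used in the proofs of Theorem~\ref{thm_maxsg_brcycle} and Theorem~\ref{thm_sumasymswap}: design a small graph with a ``core'' of bridge edges (whose owners cannot swap without disconnecting the graph and incurring infinite cost), pendant leaves that own no edges, and a handful of ``movable'' edges whose best response swaps cycle through a short sequence of isomorphic states. The new issue in the asymmetric setting is an ownership-consistency constraint: the agent scheduled to swap in state $G_i$ must own the corresponding edge in $G_i$. A naive attempt to reuse the three-fold cyclic construction of Theorem~\ref{thm_maxsg_brcycle} breaks down precisely here, because after step $i$ the newly created edge is owned by the step-$i$ mover, and this clashes with which edge the step-$(i+1)$ mover needs to own under the cyclic rotation.

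To circumvent the conflict, I would use a two-agent ``ping-pong'' structure analogous to the one in the proof of Theorem~\ref{thm_sumasymswap}: only two designated agents each permanently own one movable edge, and the cycle has length four, alternating best responses between them. Since each of these two agents always swaps the same incident edge (only the other endpoint changes), ownership is automatically consistent throughout the cycle. The task is then to instantiate this ping-pong under the \textsc{Max}-eccentricity measure rather than the \textsc{Sum}-distance measure, tuning the peripheral path lengths so that in every state exactly one of the two movable-edge owners has strictly maximum eccentricity together with a unique improving swap back to the symmetric position.

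The main obstacle will be the finite but delicate case analysis verifying, in each of the four states, that (i)~no structural or leaf agent has an improving move, (ii)~the non-active of the two movable-edge owners is not unhappy either, and (iii)~the active owner's best response swap is unique and returns the next network in the cycle. Because under the \textsc{Max}-measure an agent's cost is controlled by a single farthest vertex, each of these checks reduces to comparing a handful of eccentricities among candidate farthest endpoints; the two-fold symmetry of the construction halves the work, so in principle only two of the four states need to be analyzed explicitly, with the other two obtained by relabeling.
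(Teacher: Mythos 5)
Your high-level plan is sound and you correctly diagnose the ownership obstruction that prevents the three-fold rotation of Theorem~\ref{thm_maxsg_brcycle} from carrying over to the asymmetric setting. However, the proposal stops exactly where the proof has to begin: the entire mathematical content of a statement of this kind is the explicit instance together with its verification, and you exhibit neither. You assert that a four-state ping-pong between two fixed edge owners can be ``tuned'' so that in every state exactly one agent is unhappy and has a unique improving swap, but you give no graph, no path lengths, and no argument that such a tuning exists. This is not a routine instantiation. Under the \textsc{Max} measure an agent's cost changes only when the swap affects her (unique) farthest vertex, so forcing \emph{every} other agent to be happy in \emph{all four} states while still letting the two designated agents alternate strict improvements is a genuinely delicate balancing act, and there is no a priori reason it is achievable with your template. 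The paper's own construction is evidence of the difficulty: it uses a six-state cycle in which two of the states ($G_3$ and $G_6$) have \emph{two} unhappy agents, and it recovers the ``no move policy can enforce convergence'' claim by a supplementary argument, namely that whichever of the two unhappy agents the policy selects at those states, that agent has a best response leading back into the cycle.

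Two further points. First, your insistence on a unique unhappy agent in every state is stronger than what the theorem needs; as the paper shows, it suffices that at every state each unhappy agent has a best response staying inside a closed family of networks, which gives you considerably more slack in the construction. Second, your claim that a two-fold symmetry would let you check only two of the four states is optimistic: in the analogous \textsc{Sum} ping-pong of Theorem~\ref{thm_sumasymswap} the four states are not pairwise isomorphic (the cost decreases are $4,1,1,3$), and the asymmetry between ``swap out'' and ``swap back'' states generally forces you to verify all of them. Until you produce the concrete network and carry out the per-state eccentricity checks, the argument has a hole precisely at its load-bearing step.
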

\begin{proof}{Theorem~\ref{thm_maxasg_brcycle}}
We show that there exists a best response cycle for the \textsc{Max}-ASG, where no move policy can enforce convergence. Our cycle, shown in Fig.~\ref{fig:maxasgbr}, has six steps $G_1,\dots,G_6$. In $G_3$ and $G_6$ there are two unhappy agents whereas in the other steps there is exactly one unhappy agent. It turns out, that independently which one of the two agent moves in $G_3$ or $G_6$, there is a best response move which leads back to a network in the cycle. This implies, that no move policy may enforce convergence. 
\begin{figure}[!h]
  \centering
  \includegraphics[width=14cm]{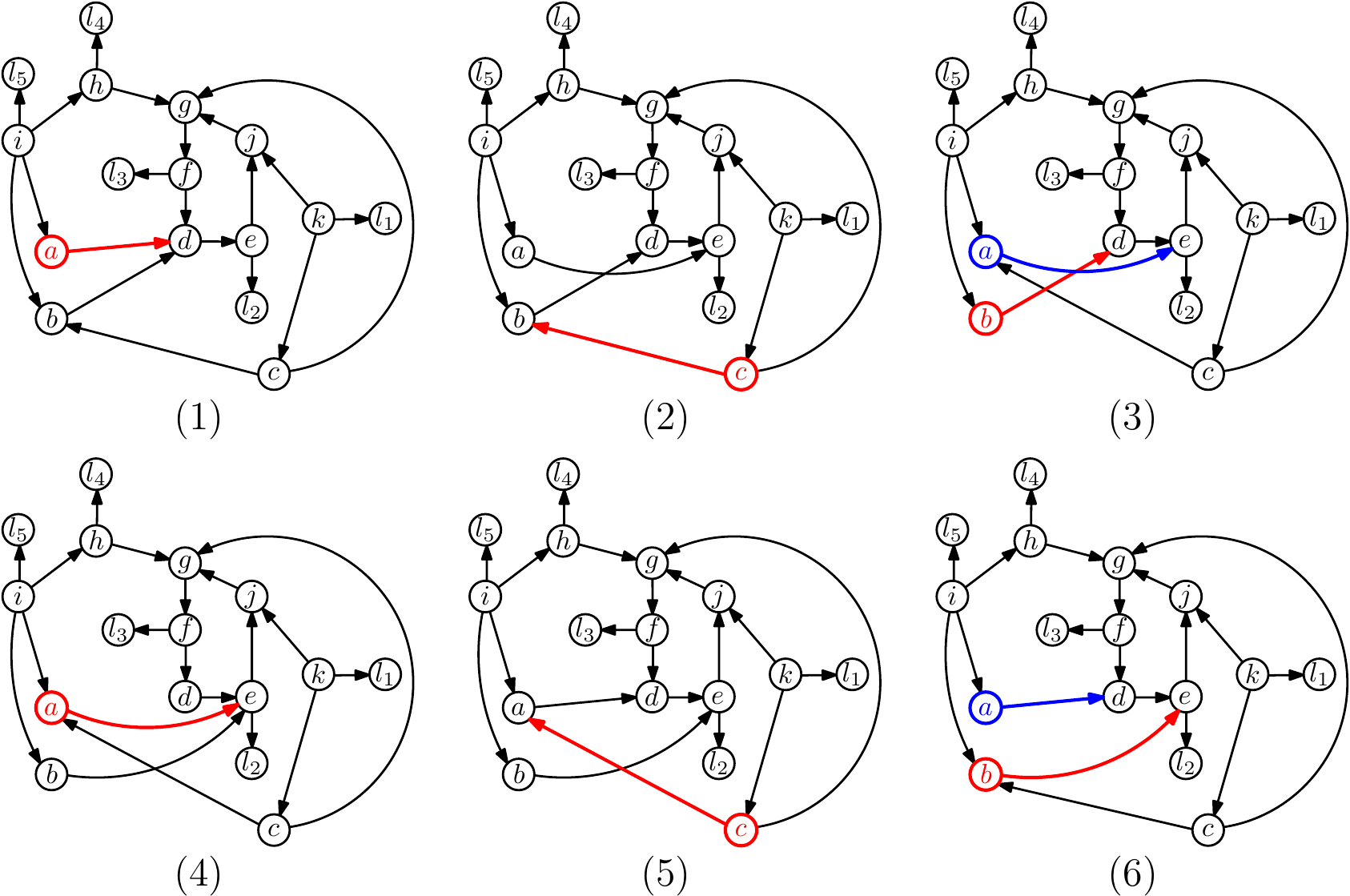}
  \caption{The steps of a best response cycle for the \textsc{Max}-ASG on general networks.}
  \label{fig:maxasgbr}
\end{figure} 

\noindent First of all, note that the networks $G_2$ and $G_5$ are isomorphic. The same holds true for the networks $G_3$ and $G_6$. We start by showing that in the networks $G_1,\dots,G_4$ only the highlighted agents are unhappy. Then we will analyze the best response moves of the unhappy agents in the respective networks. 

Consider any $G_i$, where $1 \leq i \leq 4$. Clearly, no leaf-agent of $G_i$ can perform any swap. Agent $g$ has cost $3$ in $G_i$. We have $d_{G_i}(g,l_1) = d_{G_i}(g,l_5)$ and there is no vertex in $G_i$ which is a neighbor to both $l_1$ and $l_5$. Thus, agent $g$ cannot achieve cost $2$, which implies that agent $g$ does not want to perform a move in~$G_i$. By the same argument, it follows that any agent having cost $3$ must be happy. Hence, we have that agent $b$ in $G_1$ and $G_2$ and agent $c$ in $G_3$ and $G_4$ do not swap. 

Agent $d$, having cost $4$ in $G_i$, cannot perform an improving move in $G_i$, since $d_{G_i}(d,l_1) = d_{G_i}(d,l_4) = 4$ and any such improving move must connect to a vertex which has distance at most $2$ to vertex $l_1$. These vertices are $c,j,k$ and $l_1$ but any of them has distance at least $3$ towards $l_4$ in~$G_i$. Agents $f$ and $i$, both having cost $4$ in $G_i$, each face an analogous situation, since $d_{G_i}(f,l_1) = d_{G_i}(i,l_1) = d_{G_i}(f,l_5) = d_{G_i}(i,l_3) = 4$ and since vertices $c,j,k,l_1$ have distance at least $3$ to $l_5$ or $l_3$ in any $G_i$. 

Agent $h$, having cost $4$ in $G_i$ has both $l_1$ and $l_2$ in distance $4$. The only vertex which has distance at most $2$ to both of them is vertex $j$. But if $h$ swaps towards vertex $j$, then this yields distance $4$ towards towards vertex $l_3$. 

Now we consider agent $e$. Any strategy yielding cost at most $3$ for agent $e$ must connect to $c,j,k$ or $l_1$, since otherwise vertex $l_1$ would be in distance $4$. But, since $d_{G_i}(e,l_4) = 4$ and since $c,j,k,l_1$ all have distance at least $3$ towards $l_4$, no such strategy can exist. For agent $j$, having cost $4$ in $G_i$. the situation is similar. Any strategy having cost at most $3$ for agent $j$ must connect to $a,b,h,i$ or $l_5$, since otherwise $j$'s distance towards $l_5$ would be $4$. But if $j$ swaps away from $g$ to any vertex in $\{a,b,h,i,l_5\}$, then her distance to $l_3$ increases to $4$. 

Agent $k$, having cost $4$ in $G_i$, has vertex $l_4$ and $l_5$ in distance $4$. Thus, to achieve as cost of at most $3$, agent $k$ must connect to vertex $h$ or $i$. Both $h$ and $i$ have distance at least $3$ towards $l_3$. Since $d_{G_i}(k,l_3) = 4$, it follows, that agent $k$ cannot perform an improving move. 

Now, we are left with agent $c$ in $G_1$, agent $a$ in $G_2$ and agent $b$ in $G_4$, all having cost $4$. We have $d_{G_1}(c,l_2) = 4$. Thus, any strategy which yields cost at most $3$ must connect to $d,e,j$ or $l_2$. If $c$ swaps away from $g$, then her distance to $l_4$ increases to $4$. If $c$ swaps away from $b$, then her distance to $l_5$ increases to $4$. Hence, agent $c$ cannot swap an edge to decrease her cost in $G_1$. For agent $a$ in $G_2$ and agent $b$ in $G_4$ the situation is similar. We have $d_{G_2}(a,l_1) = d_{G_2}(a,l_3) = d_{G_4}(b,l_1) = d_{G_4}(b,l_3) = 4$. Both agents cannot decrease their cost by swapping an edge in the respective network, since all vertices which have distance at most $2$ to $l_1$ have distance at least $3$ to $l_3$. 

Finally, we analyze the best response moves of all unhappy agents in any $G_i$. In $G_1$ only agent $a$, having cost $5$, is unhappy. There is only one vertex having distance $5$ to $a$, which is $l_1$. Thus, agent could swap towards any vertex having distance at most $3$ to $l_1$ to improve on this distance. Possible target-vertices are $b,c,e,g,j,k$ and $l_1$. But, after any such swap, agent $a$ must have distance $4$ to vertex $l_3$, which implies, that agent $a$ cannot decrease her cost by more than $1$. Furthermore, swapping towards $c,k$ or $l_1$ yields distance $5$ towards $l_3$, which rules out these target vertices. If agent $a$ performs the swap $ad$ to $ae$ in $G_1$, then we obtain network $G_2$. In $G_2$ only vertex $c$, having cost $4$, is unhappy. Her unique vertex in distance $4$ is $l_2$. Thus, a swap towards $a,d,e,j$ or $l_2$ would reduce this distance. However, only the swap towards $a$ is an improving move, since in all other cases agent $c$'s distance to $l_6$ increases to $4$. This swap 
transforms $G_2$ into $G_3$. In $G_3$ we have that agent $a$ and agent $b$ are unhappy. Agent $b$ in $G_3$ is in a 
similar situation as agent $a$ in $G_1$. Her only vertex in distance $5$ is the leaf $l_1$ and by swapping towards $a,c,e,g,j,k$ or $l_1$ this distance can be reduced. But any such swap yields that agent $b$'s distance to $l_3$ increases to at least $4$, which implies that a cost decrease by $1$ is optimal. Furthermore, the swaps towards $c,k$ or $l_1$ are not improving moves since these yield distance $5$ towards $l_3$. If agent $b$ performs the swap $bd$ to $be$ we obtain network $G_4$. Agent $a$ in $G_3$ has $l_3$ as her only vertex in distance $4$. There is exactly one swap for agent $a$, which decreases this distance to $3$ without increasing any other distance to more than $3$, and this is the swap $ae$ to $ad$. Note, that this swap of agent $a$ transforms $G_3$ to a network which is isomorphic to network $G_1$. Finally, we argue for agent $a$ in $G_4$. This agent is in a similar situation as agent $a$ in $G_3$. Her only vertex in distance $4$ is $l_3$ and the swap towards $d$ is the only swap which 
does not increase any other distance to more than $3$. Thus, this move is agent $a$'s unique best response in $G_4$ and 
this move leads to network $G_5$. 
\end{proof} 

If played on a non-complete host-graph, then we get the worst possible dynamic behavior. 
\begin{corollary}\label{cor_asg_host}
 The \textsc{Sum}-ASG and the \textsc{Max}-ASG on a non-complete host graph are not weakly acyclic.
\end{corollary}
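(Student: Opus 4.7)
The plan is to leverage the best response cycles already constructed in Theorem~\ref{thm_sumasymswap} (for \textsc{Sum}) and Theorem~\ref{thm_maxasg_brcycle} (for \textsc{Max}) and turn them into non-weakly-acyclic instances by restricting the host graph so that every potential escape route is simply not an admissible edge.

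Concretely, I would take the cycle of networks $G_1,\ldots,G_k$ from the relevant theorem (with $k=4$ for \textsc{Sum}, $k=6$ for \textsc{Max}) and define the host graph $H$ on the same vertex set by $E(H)=\bigcup_{i=1}^k E(G_i)$. Each $G_i$ is then a legal state, and the admissible strategies of every agent are restricted to the neighbors reachable via edges of $H$. The initial network of the process is chosen to be $G_1$.

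The key step is then to check that, starting from any $G_i$, every improving move admissible in $H$ produces another $G_j$ in the cycle. The proofs of Theorems~\ref{thm_sumasymswap} and \ref{thm_maxasg_brcycle} already establish that in every $G_i$ the set of unhappy agents is limited to at most the one or two highlighted agents, and they also identify the unique best response of each unhappy agent. What must be added is that, under the host-graph restriction, an unhappy agent has \emph{no other} improving swap: any alternative target either (a) would require an edge not present in $H$ (and is thus forbidden) or (b) appears in some $G_j$ and can be checked by the same distance inequalities used in the original proofs to be non-improving. Since the best response of the unhappy agent lies in $H$ by construction, the resulting state is precisely the next network $G_{i+1\bmod k}$ on the cycle. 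For the \textsc{Max}-version, the argument needs to be repeated for both unhappy agents in the two steps where two agents are simultaneously unhappy, but the analysis already given in the proof of Theorem~\ref{thm_maxasg_brcycle} handles both cases.

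Once this verification is complete, the conclusion is immediate: starting from $G_1$, every maximal sequence of improving moves stays inside $\{G_1,\ldots,G_k\}$, none of which is stable; hence no such sequence reaches a Nash equilibrium, and the \textsc{Sum}-ASG and \textsc{Max}-ASG on $H$ are not weakly acyclic. The main obstacle is the case analysis in the second step: one must rule out every alternative admissible swap for each unhappy agent in each $G_i$, which is tedious but amounts to restating and reusing the inequalities already justified in the proofs of the two underlying best-response-cycle theorems.
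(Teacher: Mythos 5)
Your proposal is correct and follows essentially the same strategy as the paper: reuse the best response cycles from Theorems~\ref{thm_sumasymswap} and~\ref{thm_maxasg_brcycle} and restrict the host graph so that every improving move from every cycle state leads back into the cycle. The only difference is the choice of host graph: the paper keeps $H$ as close to complete as possible (removing just the single edge $\{a,f\}$ in the \textsc{Sum}-case and five edges in the \textsc{Max}-case, which yields a somewhat stronger example), whereas you take $H$ to be the union of the cycle edges; since your $H$ is a subgraph of the paper's, every improving move admissible in your instance is admissible in theirs, so the same finite verification goes through.
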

\begin{proof}

 \textbf{\textsc{Sum}-version:} We use the best response cycle $G_1,\dots, G_4$ shown in Fig.~\ref{fig:asymswapbr} and let the host graph $H$ be the complete graph but without the edge $\{a,f\}$. In this case, agent $f$'s best response move in $G_1$ is her only possible improving move. For the networks $G_2, G_3$ and $G_4$ it is easy to check, that the respective moving player has exactly one possible improving move. 
 
 \textbf{\textsc{Max}-version:} We use the best response cycle $G_1,\dots,G_6$ from Fig.~\ref{fig:maxasgbr}. As host graph $H$ we use the complete graph, but without edges $\{a,b\}, \{a,g\}, \{a,j\}, \{b,g\}, \{b,j\}$. By inspecting the proof of Theorem~\ref{thm_maxasg_brcycle}, it is easy to see that in each step of the cycle the moving agent has exactly one improving move. 
 \end{proof}

\subsection{The Boundary between Convergence and Non-Convergence}\label{sec_asg_boundary}
In this section we explore the boundary between guaranteed convergence and cyclic behavior. Quite surprisingly, we can draw a sharp boundary by showing that the undesired cyclic behavior can already occur in $n$-vertex networks having exactly $n$ edges. Thus, one non-tree edge suffices to radically change the dynamic behavior of Asymmetric Swap Games. Our constructions are such that each agent owns exactly one edge, which corresponds to the uniform unit budget case, recently introduced by Ehsani et al.~\cite{Ehs11}. Hence, even if the networks are build by identical agents having a budget the cyclic behavior may arise. This answers the open problem raised by Ehsani et al.~\cite{Ehs11} in the negative.  

\begin{theorem}\label{thm_1edge_brcycle}
 The \textsc{Sum}-ASG and the \textsc{Max}-ASG admit best response cycles on a network where every agent owns exactly one edge. 
\end{theorem}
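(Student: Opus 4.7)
The plan is to adapt the best response cycles from Theorems~\ref{thm_sumasymswap} and \ref{thm_maxasg_brcycle} into instances in which every agent owns exactly one edge, so that the network has exactly $n$ edges and the construction falls within the uniform unit budget regime of Ehsani et al. In the original constructions, a few hub agents own many edges while leaves and other passive vertices own none, so the task is to redistribute ownership without destroying the cyclic improvement structure. The key observation that makes this possible is that every edge in those constructions which is never swapped during the cycle is a bridge: its owner cannot swap it without disconnecting the network and incurring infinite distance-cost, hence any such owner is trivially happy in every configuration.

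Accordingly, I would first re-direct ownership of each non-swapped edge onto the currently non-owning endpoint, typically a leaf. Every such newly-appointed owner sits on a bridge and is thus happy throughout the cycle. Second, for any hub agent that still owns more than one edge after the first step, I would attach a fresh pendant vertex for each surplus edge and let that new pendant own the attaching edge, which is again a bridge. Each gadget is attached identically in every network of the cycle, so the perturbation to distance-costs is the same constant in every step; every strict cost inequality used in the proofs of Theorems~\ref{thm_sumasymswap} and \ref{thm_maxasg_brcycle} carries through unchanged. After this rebalancing, the resulting graph has one edge per agent and a single non-tree edge (the one edge that was moved in the original cycle), which is exactly the boundary situation the theorem asserts.

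The final step is verification. The active agent's single edge and her admissible swap set are unchanged, her cost in every configuration is shifted by the same additive constant in the \textsc{Sum}-version (and by at most a small additive offset in the \textsc{Max}-version, coming from the diameter of the gadgets), so her unique best response is preserved and still leads to the next network in the cycle. The main obstacle is Step two: a carelessly attached pendant could create a new short swap target for a previously happy hub agent, or could shift a \textsc{Max}-distance just enough to unlock a move. The safeguard is to attach each gadget to a vertex that is not on any extremal distance-realizing path of the active agents in any of the four (resp.\ six) configurations, and to use the same gadget in every configuration so the global perturbation is symmetric; verifying that no new improving move appears then reduces to a finite case check, agent by agent and configuration by configuration, essentially identical to the checks already performed in the proofs of Theorems~\ref{thm_sumasymswap} and \ref{thm_maxasg_brcycle}.
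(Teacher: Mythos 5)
There is a genuine gap, and it sits at the heart of your Step one. You claim that any agent who is newly given ownership of a never-swapped bridge edge is ``trivially happy'' because swapping it would disconnect the network. That is false: if $u$ owns a bridge $uv$ and swaps it to $uw$, the network stays connected for every $w$ on $v$'s side of the bridge. The only edges whose owner genuinely cannot swap are pendant edges owned by the \emph{non-leaf} endpoint (there the far side is the single leaf). The moment you flip ownership of a pendant edge onto the leaf, that leaf's far side becomes the entire rest of the graph, so it may swap to \emph{any} vertex --- and in the source constructions the leaves hanging off peripheral vertices (e.g.\ $a_1,\dots,a_4$ attached to $a$ in Fig.~\ref{fig:asymswapbr}) would strictly improve by re-attaching to a more central vertex such as $d$. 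Your redistributed instance therefore has many new unhappy agents, the ``exactly one improving move per step'' structure collapses, and the sequence is no longer a best response cycle. Step two has a similar accounting problem: attaching a fresh pendant that owns its attaching edge does not reduce the number of edges the hub still owns, so it does not move you toward the one-edge-per-agent invariant; and the claim that the gadgets shift every agent's \textsc{Sum}-cost by the \emph{same} additive constant in every configuration is also wrong, since the distance from the moving agent to the gadget's attachment point changes across the cycle, which can re-rank her candidate swaps.

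The paper does not retrofit the earlier cycles at all. It builds two fresh unicyclic constructions (Fig.~\ref{fig:asym1edgecycle} and Fig.~\ref{fig:maxasym1edgecycle}) in which the one-edge-per-agent ownership pattern is designed in from the start (orienting each edge away from its owner gives a functional graph: one cycle with in-trees whose edges point toward it), and the group sizes are tuned --- e.g.\ $n_c = n_b + n_d + 1$ in the \textsc{Sum}-version --- so that exactly two designated agents alternate swaps forever while every other agent's single owned edge is verified to be optimally placed in every step. If you want to salvage your reduction idea, you would have to re-verify happiness of \emph{every} agent who gains an edge, which amounts to redoing the case analysis on a different instance; at that point you may as well design the instance directly, as the paper does.
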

\begin{proofof}{Theorem~\ref{thm_1edge_brcycle}, \textsc{Sum}-version}
The network which induces a best response cycle and the steps of the cycle are shown in Fig.~\ref{fig:asym1edgecycle}. Let $n_k$ denote the number of vertices having the form $k_j$, for some index~$j$. 
\begin{figure}[!h]
  \centering
  \includegraphics[width=\textwidth]{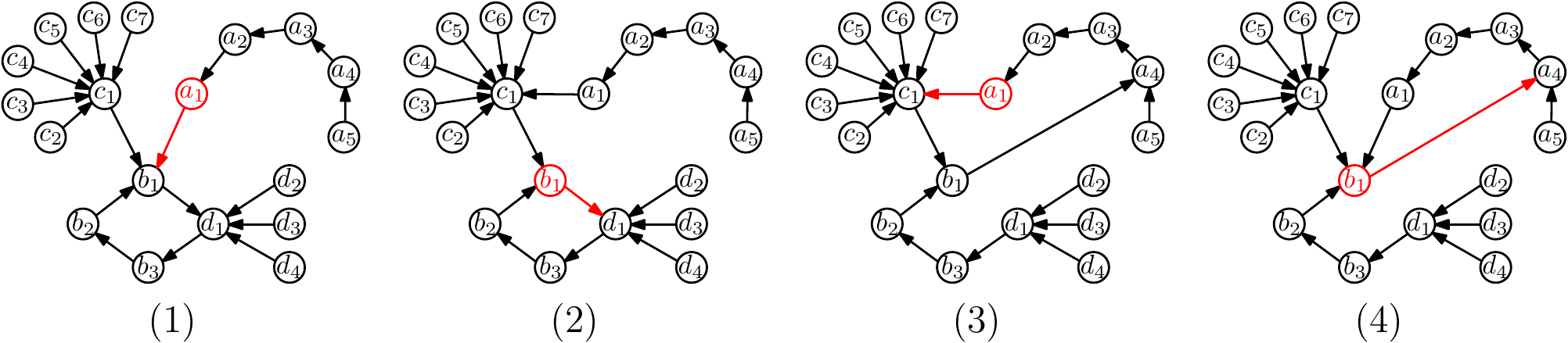}
  \caption{The steps of a best response cycle for the \textsc{Sum}-ASG where each agent owns exactly one edge.}
  \label{fig:asym1edgecycle}
\end{figure}

\noindent In the first step, depicted in Fig.~\ref{fig:asym1edgecycle}~(1), agent~$a_1$ has only one improving move, which is the swap from $b_1$ to $c_1$. This swap reduces agent~$a_1$'s cost by $1$, since $n_c = n_b + n_d +1$. 
After this move, shown in Fig.~\ref{fig:asym1edgecycle}~(2), agent~$b_1$ is no longer happy with her edge towards $d_1$, since by swapping towards $a_4$ she can decrease her cost by $2$. This is a best possible move for agent~$b_1$ (note, that a swap towards $a_3$ yields the same cost decrease). 
But now, in the network shown in Fig.~\ref{fig:asym1edgecycle}~(3), by swapping back towards vertex $b_1$, agent~$a_1$ can additionally decrease her distances to vertices $a_4$ and $a_5$ by $1$. This yields that agent~$a_1$'s swap from $c_1$ to $b_1$ decreases her cost by $1$. This is true, since all distances to $c_j$ vertices increase by $1$ but all distances to $b_i$ and $d_l$ vertices and to $a_4$ and $a_5$ decrease by $1$ and since we have $n_c = n_b + n_d + 1$. Note, that this swap is agent~$a_1$'s unique improving move.   
By construction, we have that after agent~$a_1$ has swapped back towards $b_1$, depicted in Fig.~\ref{fig:asym1edgecycle}~(4), agent~$b_1$'s edge towards $a_4$ only yields a distance decrease of $7$. Hence, by swapping back towards $d_1$, agent~$b_1$ decreases her cost by $1$, since her sum of distances to the $d_j$ vertices decreases by $8$. This swap is the unique improving move of agent~$b_1$ in this stage. Now the best response cycle starts over again, with agent~$a_1$ moving from $b_1$ to~$c_1$. 
\end{proofof}
\begin{proofof}{Theorem~\ref{thm_1edge_brcycle}, \textsc{Max}-version}
 Fig.~\ref{fig:maxasym1edgecycle} shows the steps of a best response cycle for the \textsc{Max} version in a network, where each agent owns exactly one edge. 
 \begin{figure}[!h]
  \centering
  \includegraphics[width=\textwidth]{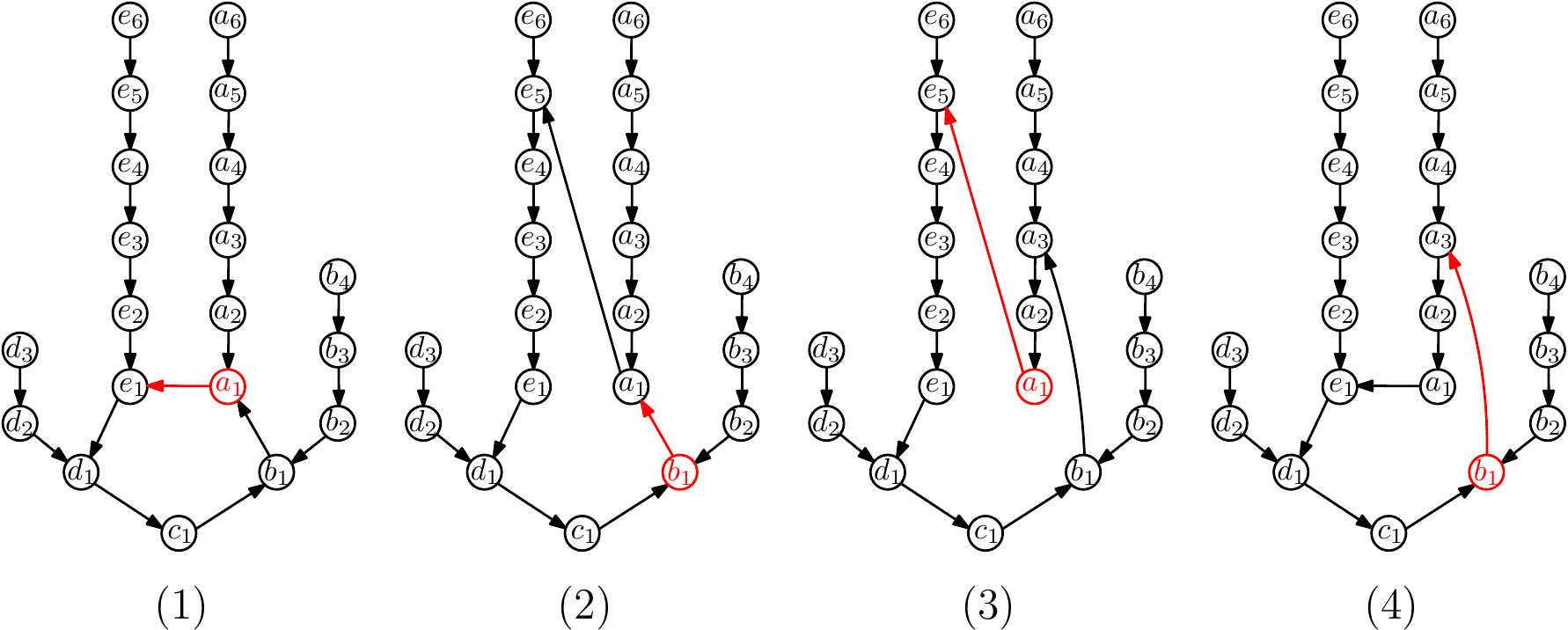}
  \caption{The steps of a best response cycle for the \textsc{Max}-ASG where each agent owns exactly one edge.}
  \label{fig:maxasym1edgecycle}
\end{figure}

In the first step of the cycle, shown in Fig.~\ref{fig:maxasym1edgecycle}~(1), agent $a_1$ can decrease her maximum distance from $6$ to $5$ by swapping from $e_1$ to one of the vertices $e_2,\dots,e_5$. Note, that all these swaps yield the same distance decrease of $1$ and, since $a_1$ has distance $5$ towards $a_6$ and swapping towards any of the $a_i$-vertices is obviously sub-optimal, no other swap can yield a larger cost decrease. By agent $a_1$ performing the swap towards $e_5$ we obtain the network in Fig.~\ref{fig:maxasym1edgecycle}~(2). 

Now, agent $b_1$ can improve her situation with a swap from $a_1$ to $a_2$ or to $a_3$. Both possible swaps reduce her maximum distance from $6$ to $5$. This is best possible: The cycle has length $9$ before agent $b_1$'s move, which implies that there are two vertices on the cycle which have distance $4$ to $b_1$. Observe, that agent $b_1$ must swap towards a vertex which has at most distance $4$ to vertex $a_6$ to reduce her maximum distance. Clearly, only one of the $a_i$ vertices, with $i\neq 1$, is possible. However, swapping away from $a_1$ must increase the cycle by at least $1$, which implies that after this swap agent $b_1$ must have at least one cycle-vertex in distance $5$. Hence, no swap can decrease agent $b_1$'s maximum distance by more than $1$. Let agent $b_1$ perform the swap towards $a_3$ and we end up with the network in Fig.~\ref{fig:maxasym1edgecycle}~(3).

Agent $a_1$ now finds herself sitting in a large cycle having maximum distance $7$ towards $d_3$ and distance $6$ to vertex $b_4$. By swapping from $e_5$ to one of the vertices $e_1,e_2,e_3$, agent $a_1$ can reduce her maximum distance to $6$. This is optimal, since all improving moves must swap towards a vertex having at most distance $5$ to vertex $d_3$ and agent $a_1$ cannot move to far away from vertex $e_6$. The vertices $e_1,e_2,e_3$ are the only vertices which satisfy both conditions. Let $a_1$ swap towards $e_1$ and we get the network depicted in Fig.~\ref{fig:maxasym1edgecycle}~(4).

In the last step of the best response cycle, we have agent $b_1$ with maximum distance $8$ towards vertex $e_6$. Clearly, agent $b_1$ wants to move closer to this vertex but this implies, that she must move away from vertex $a_6$. The only possible compromise between both distances is a swap either to $a_1$ or to $e_1$. Both these swaps yield a decrease of $b_1$'s maximum distance by $1$. By swapping towards vertex $a_1$, we end up with our starting configuration and the cycle is complete. 
\end{proofof}

\begin{remark}
 We can give best response cycles for both versions of the ASG for the case where every agent owns exactly two edges. We conjecture, that such cyclic instances also exist for all cases where every agent owns exactly $k$ edges, for any $k\geq 3$. In particular, it would be interesting if there is a generic construction which works for all $k\geq 1$. 
\end{remark}

\subsection{Empirical Study of the Bounded-Budget Version}\label{sec_asg_empirical}
We have conducted extensive simulations of the convergence behavior and the obtained results provide a sharp contrast to our mostly negative theoretical results for both versions of the ASG. Our experiments show for the bounded-budget version a surprisingly fast convergence in at most $5n$ steps under the max cost policy or by choosing the moving agents uniformly at random. Despite millions of trials we have not found any best response cycle in our experiments. This indicates that our negative results may be only very rare pathological examples. 

We first describe the experimental setup, then we will discuss the obtained results for the bounded-budget version of the ASG for the \textsc{Sum} and the \textsc{Max}-version of the distance-cost function. 
\subsubsection{Experimental Setup}\label{sec_asg_setup}
One run of our simulation can be described as follows: First we generate a random initial network, where every agent owns exactly $k$ edges. Then, for every step of the process, the move policy decides which agents is allowed to perform a best possible edge-swap. After the respective swap has happened, we again let the move policy decide which agent is allowed to move next. We count the number of steps until a stable network is found. Thus, the number of steps equals the number of performed moves.  

Computing a best possible edge-swap of an agent can be done in polynomial time by simply checking all possible edge-swaps and re-computing the cost. 

Under the max cost policy we calculate the agents' costs and check in descending order if the respective agent can perform an improving move. If we have found an unhappy agent, then we calculate the best possible edge-swap for this agent, breaking ties uniformly at random, and we let this agent perform the respective move. Then the process starts all over again until there is no unhappy agent left.   

Under the random policy we choose one agent uniformly at random and check if this agent can perform an improving move. If not, then we remove this agent from the set of candidates and we choose another agent uniformly at random from the remaining candidates. We proceed iteratively until we have found an unhappy agent or until no candidate is left. In the former case, we let this agent perform a best possible edge-swap and start all over again (with all agents being a possible candidate again). If the latter happens, then we stop. 

The initial network is generated as follows: We start with an empty graph $G$ on $n$ vertices. Then, to ensure connectedness, we create a random spanning tree among all $n$ agents as follows: We start with a uniformly chosen random pair of agents and insert the respective edge into $G$. The owner of this edge is chosen uniformly at random among its endpoints. We mark both vertices. Then, we iteratively choose one unmarked agent uniformly at random from the set of unmarked agents and one marked agent uniformly from the set of all marked agents and we insert the respective edge and mark the former agent. The edge-ownership is chosen uniformly at random with the constraint that no agent is allowed to own more than $k$ edges. If all agents are marked, then we have that $G$ is a spanning tree. Now we proceed inserting edges into $G$ as follows: First we mark all agents who already own $k$ edges. Then we iteratively choose one unmarked agent and one other agent uniformly at random and insert the edge with the 
first agent being its owner, if the edge is not already present in $G$. If the edge is present, then we randomly choose another suitable pair of agents. Again we mark agents having already $k$ edges. We stop, if there is no unmarked agent left.   
\subsubsection{Experimental Results and Discussion}
\paragraph{Results for the \textsc{Sum}-ASG:} 

Our obtained results for the \textsc{Sum}-ASG in the bounded-budget version can be found in Fig.~\ref{fig:exp_swap_sum}. We simulated 10000 runs for each configuration and the maximum of the observed convergence time for each configuration is plotted. Here a configuration consists of the number of agents in the initial network and the choice of the move policy.
\begin{figure}[t]
 \centering
  \includegraphics[width=0.495\textwidth]{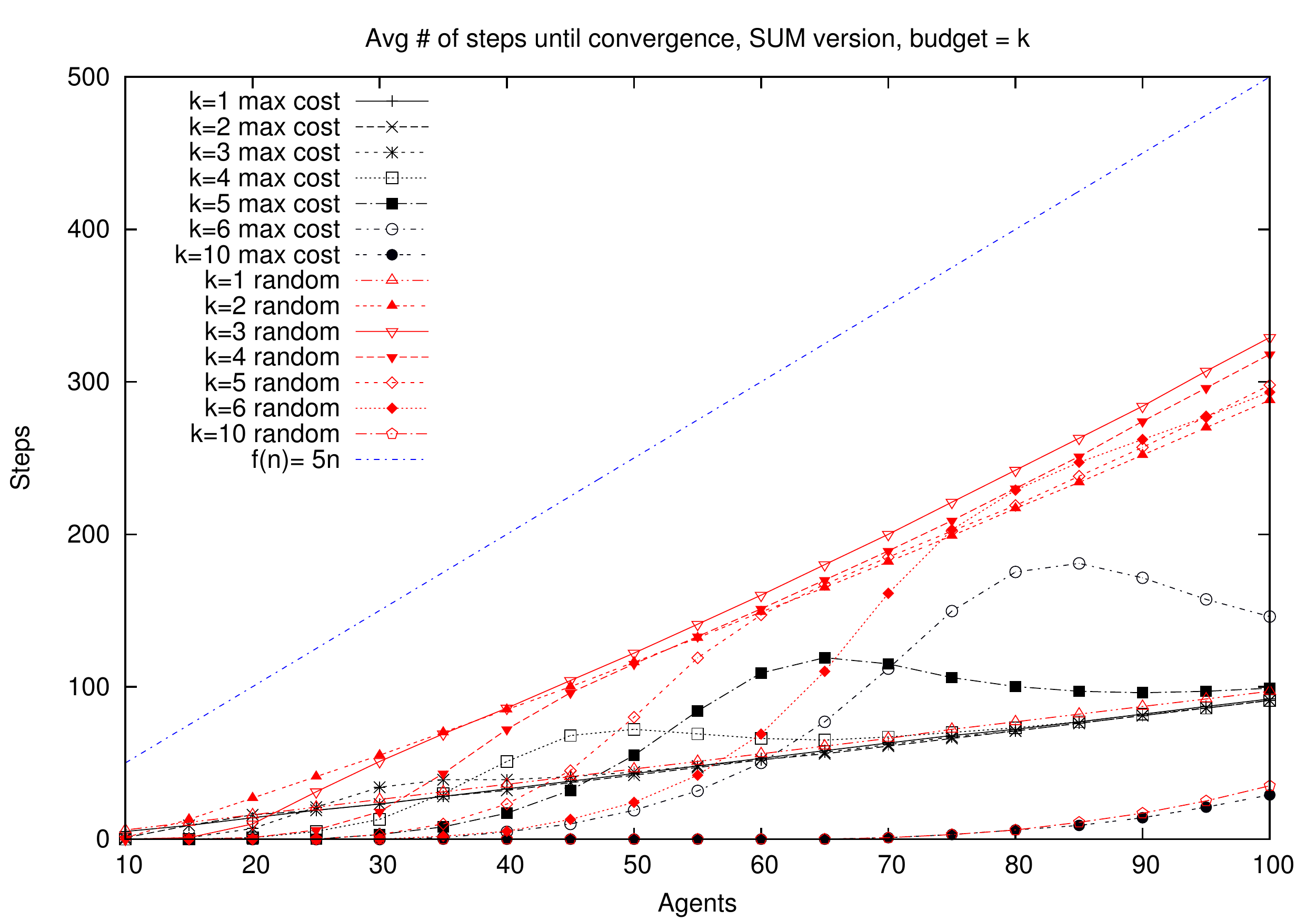}
  \includegraphics[width=0.495\textwidth]{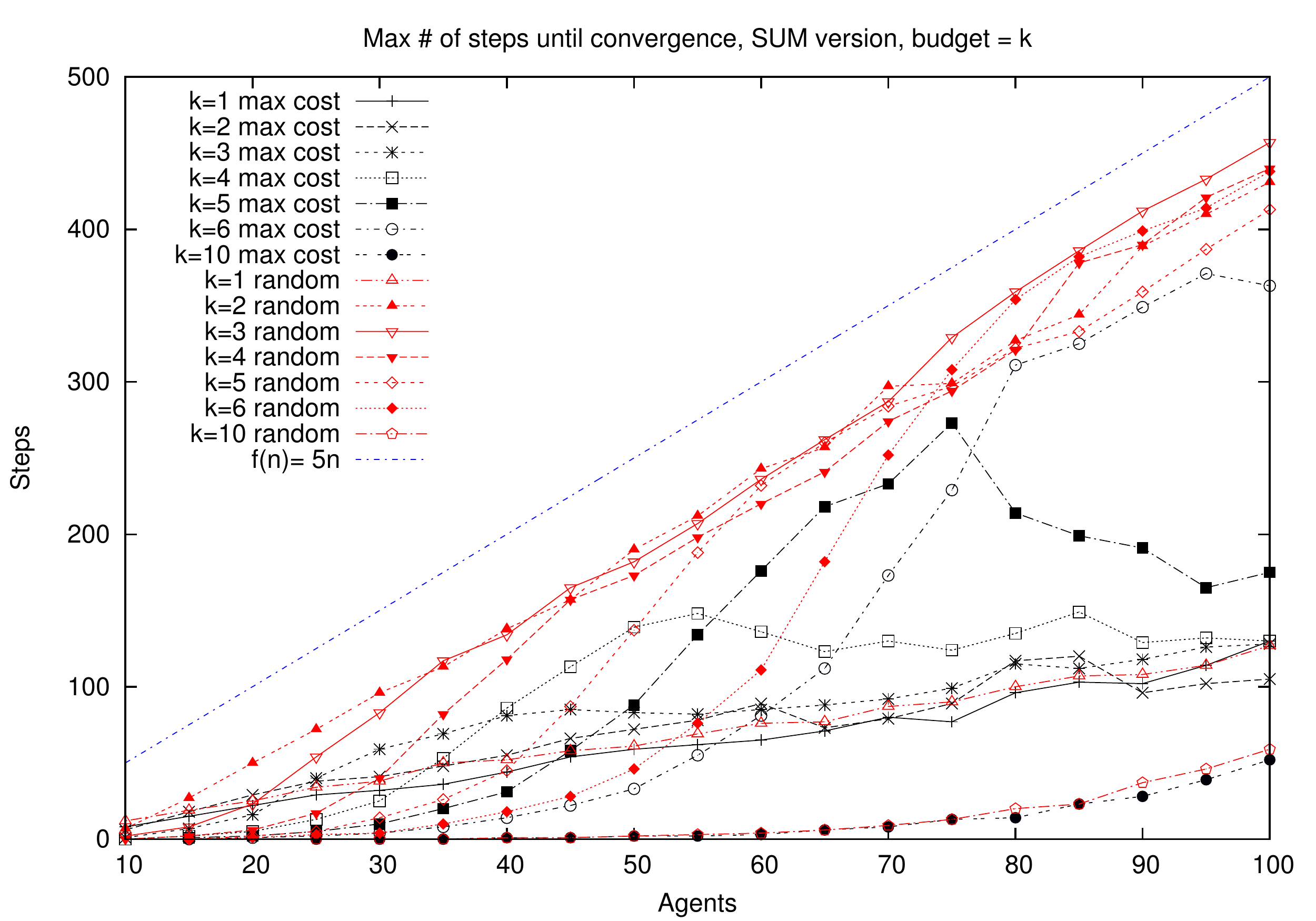}
  \caption{Experimental results for the \textsc{Sum}-ASG with budget $k$. The average number of steps needed for convergence is plotted on the left, the maximum number of steps needed for convergence is plotted on the right. Each point is the average/maximum over the number of steps needed for convergence of 10000 trials with random initial networks where each agent owns exactly $k$ edges.}
    \label{fig:exp_swap_sum}
\end{figure}
The results for the max cost policy are plotted in black, whereas the results for the random policy are shown in red. 

First of all, note that no run took longer than $5n$ steps, where $n$ is the number of agents, and that the max-cost policy yields faster convergence than the random policy. The only exceptions here are the random policy in the cases where $k=1$ and $k=10$, which shows roughly the same behavior than the max cost policy. For the case $k=10$ the number of agents seems to small to produce the difference. Indeed, for all other budgets we see that the plots for the random policy and the max cost policy are close together for small numbers of agents and start to separate as $n$ grows larger. Note, that for $k=1$ only roughly $n$ steps are needed for convergence under both move policies. This is to be expected for the max cost policy since the initial network is almost a tree and we have shown in Corollary~\ref{cor_mcbrd} that the \textsc{Sum}-ASG on trees converges in at most $n+\lceil n/2 \rceil -5$ steps.  

For $k>1$ the results are particularly interesting. Under the max cost policy our simulations reveal a rather curious behavior: The convergence time increases super linear until it peaks and then, for larger $n$, it converges to $n$. One possible explanation for this is the ratio of edges versus non-edges in the network. For small $n$ we have that the initial networks are very dense, which yields that agents have very short distances to each other. This implies, that moves yield a relatively low cost-decrease since only a few distances can be reduced, which implies that after a small number of steps, no such move is available and the convergence process stops. For large $n$ we have that the initial networks are very sparse, which implies that agents ``at the perimeter'' can achieve a large cost-decrease by performing a move. If only such agents move, then we have a sequence of moves which reduce a high number of individual distances and this leads to fast convergence. The slowest convergence time is achieved 
if the ratio of present edges over all possible edges is between $\tfrac{1}{7}$ and $\tfrac{1}{6}$. 

Under the random policy, we see a completely different behavior if $k>1$. For example, note the difference between the $k=2$ case of the random versus the max cost policy.  Here we have the expected behavior that the convergence time is strictly increasing for larger $n$. Interestingly, except for small $n$, the convergence time grows only linear in $n$. The super linear increase for small $n$ can be explained as under the max cost policy. The initial networks are too dense to admit a high number of best possible improving moves. For large $n$, the results can be explained as follows: in contrast to the max cost policy we have that the random policy often picks agents who already have a relatively central position in the network. Such agents can improve only slightly. Thus, moves of such agents only reduce a small number of individual distances, which explains why a lot of agents remain unhappy with their situation.

\paragraph{Results for the \textsc{Max}-ASG:} 

The results for the \textsc{Max}-ASG under both move policies can be found in Fig.~\ref{fig:exp_swap_max}. As in the \textsc{Sum}-version, we simulated 10000 runs for each configuration. The results for the max cost policy are shown in black whereas the results for the random policy are plotted in red.
\begin{figure}[!h]
 \centering
  \includegraphics[width=0.495\textwidth]{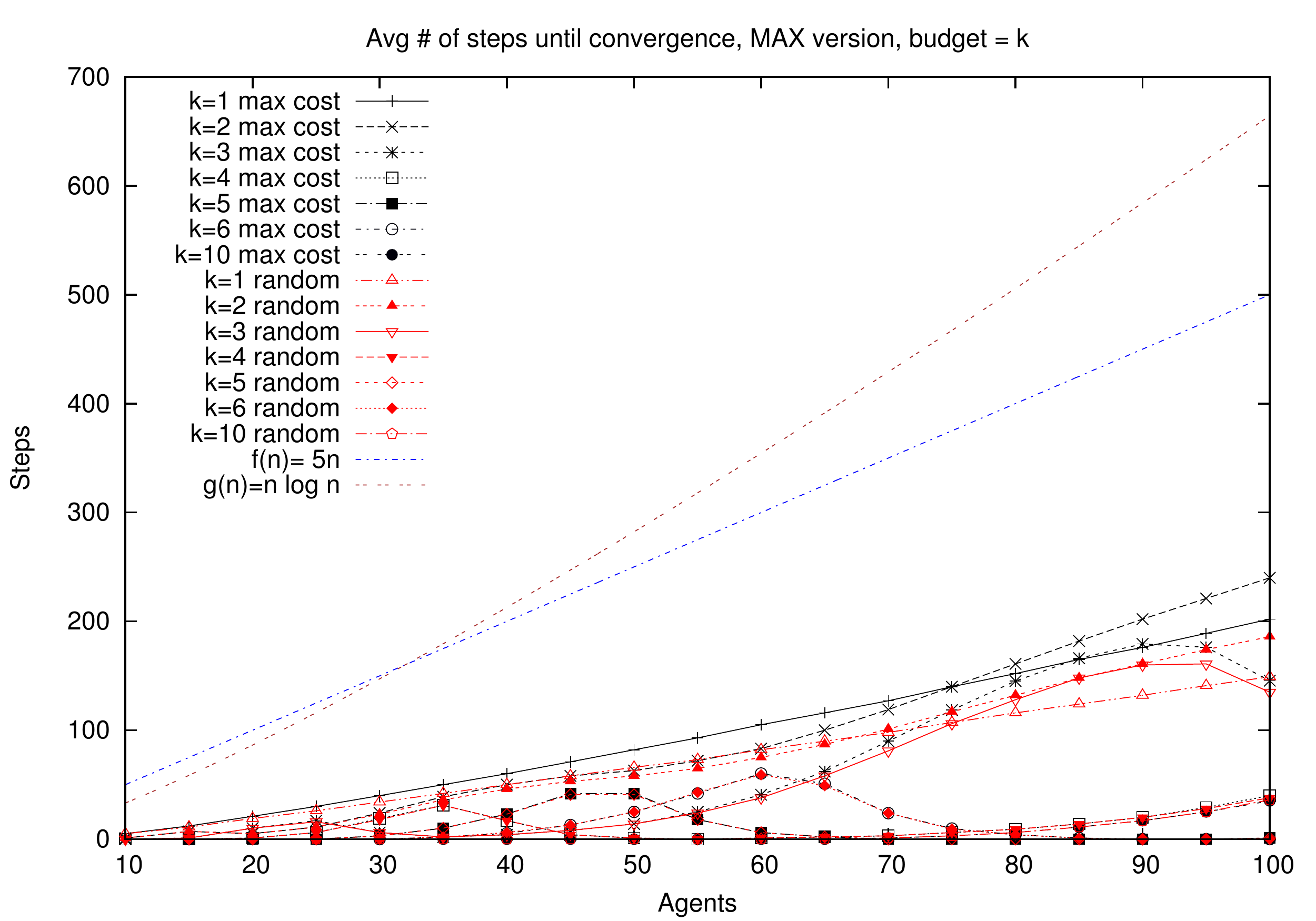} 
  \includegraphics[width=0.495\textwidth]{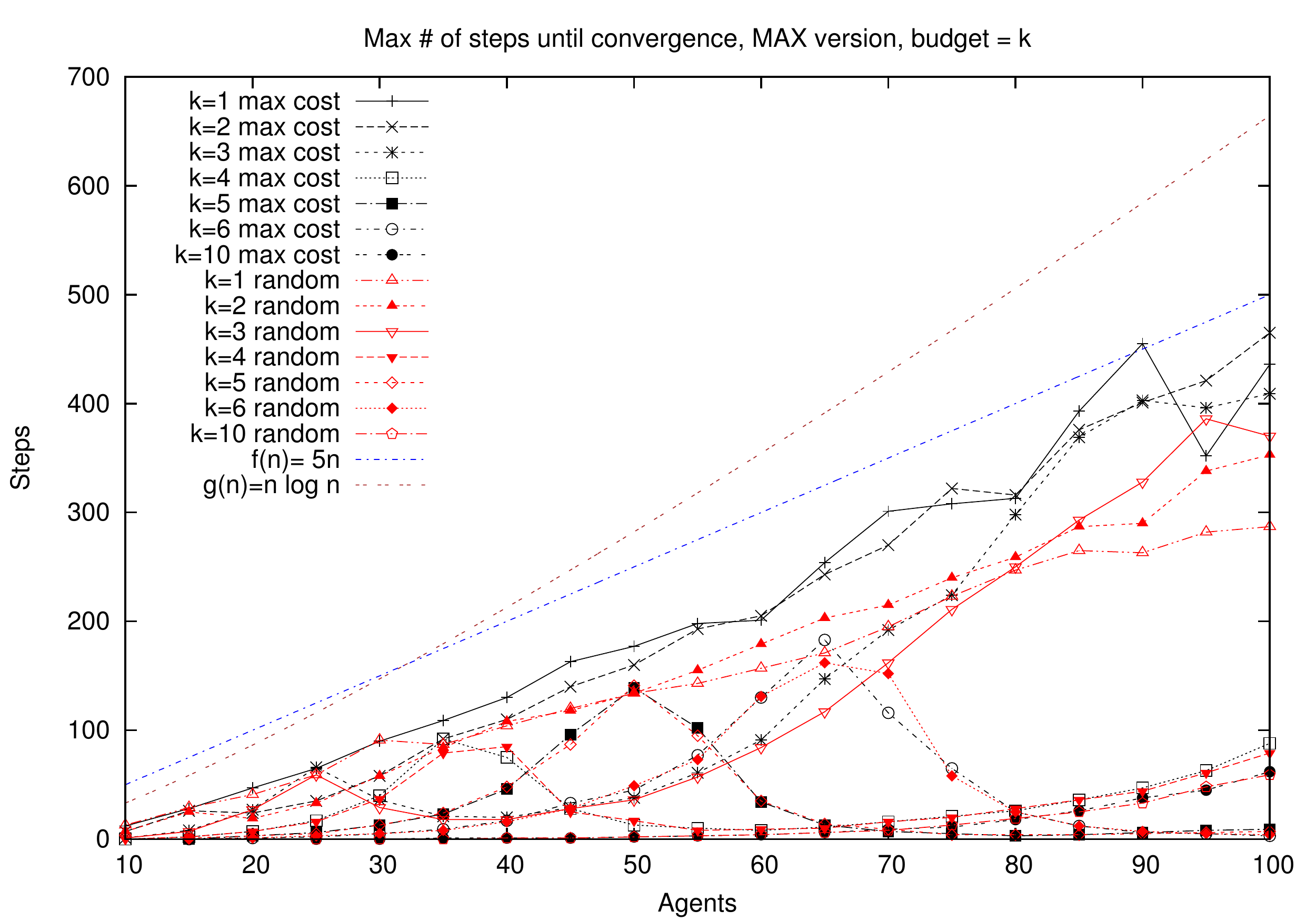}
  \caption{Experimental results for the \textsc{Max}-ASG with budget $k$. The average number of steps needed for convergence is plotted on the left, the maximum number of steps needed for convergence is plotted on the right. Each point is the average/maximum over the number of steps needed for convergence of 10000 trials with random initial networks where each agent owns exactly $k$ edges.}
    \label{fig:exp_swap_max}
\end{figure}

The plots show that, with one exception, every run of our simulations converged in less than $5n$ steps. Hence, the \textsc{Max}-version yields the same fast convergence to a stable network as in the \textsc{Sum}-version. However, there is a crucial difference: Here the different move policies do not yield a different convergence behavior. The curves for the random policy and the max cost policy for the same budget $k$ are very close to each other. For $k\geq 4$ they are almost indistinguishable. For smaller $k$ we see that the random policy slightly outperforms the max cost policy. The reason for the similar behavior under both move policies is that in the \textsc{Max}-version there are significantly more agents which all have maximum cost than in the \textsc{Sum}-version. After a small number of steps we even have the situation that a large fraction of all agents has maximum cost. Thus, for an agent we 
have that choosing randomly among the maximum cost agents and choosing randomly among all agents yields no significant difference in the probability of being chosen.  

Observe, that for $k=1$ the convergence time of the max cost policy grows slightly super linear and is well below $n\log n$. In this case the network is almost a tree and this explains why the convergence time is very close to our bound of $\Theta(n\log n)$ shown in Corollary~\ref{cor_mcbrd}. Interestingly, we have for both move policies that the convergence time decreases as the budget $k$ increases. Intuitively this is not surprising, since having more edges leads to a lower distance-cost for an agent and to a lower diameter in the network. However, agents may have many other agents in maximum distance and the respective shortest paths may not overlap. It is not clear how a higher budget can help in such a situation. 

The curves for $k\geq 3$ have a rather strange shape. They show a local peak right after the beginning and then they fall and much later rise again. We cannot explain this curious behavior but we conjecture that the reason is the density and the distance-distribution in the initial networks.
\paragraph{Further Remarks:}
We emphasize again, that among all these simulations we have never encountered a cyclic instance. Our cyclic constructions basically rely on the observation that it can happen that an improving move of one agent increases the costs of several agents simultaneously. Empirically this happens very often but has no severe consequences since whenever the other agents are allowed to move, they can compensate for their temporary cost increase. As we have shown, it requires a rather intricate sequence of such moves to achieve that the costs of all agents after several steps are the same as before. Such sequences are therefore very unlikely to occur.       

\section{Dynamics in (Greedy) Buy Games}
We focus on the dynamic behavior of the Buy Game and the Greedy Buy Game. Remember, that we assume, that each edge can be created for the cost of $\alpha>0$. 
\subsection{Convergence Results}\label{sec_bg}
We show that best response cycles exist, even if arbitrary strategy-changes are allowed. However, on the positive side, we were not able to construct best response cycles where only one agent is unhappy in every step. Hence, the right move policy may have a substantial impact in (Greedy) Buy Games. In contrast to this, we rule out this glimmer of hope if played on a non-complete host-graph.
\begin{theorem}\label{thm_buy}
 The \textsc{Sum}-(G)BG and the \textsc{Max}-(G)BG admit best response cycles. 
\end{theorem}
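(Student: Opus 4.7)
The plan is to construct, for each of the four games \textsc{Sum}-BG, \textsc{Max}-BG, \textsc{Sum}-GBG and \textsc{Max}-GBG, an explicit cyclic sequence of networks together with one designated best-response move per transition. A natural starting point is to reuse the best-response cycles already exhibited for the ASG in Theorem~\ref{thm_sumasymswap} and Theorem~\ref{thm_maxasg_brcycle}: these cycles consist entirely of edge-swaps, which do not change the number of edges owned by the moving agent, so the distance-cost analysis from those theorems carries over unchanged. What must be added is a verification that, for a suitably chosen constant $\alpha>0$, none of the \emph{new} options available in the (G)BG --- buying an edge, deleting an edge, or in the BG case arbitrary combinations of these --- yields a larger cost decrease than the prescribed swap.

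For the GBG, where an agent is restricted to a single buy, delete, or swap per step, the verification splits into three checks on every cycle network $G_i$. First, the prescribed swap is the best \emph{swap} of the moving agent; this is inherited from the ASG proofs. Second, no buy operation is improving, which I would force by taking $\alpha$ strictly larger than the maximum distance saving that adding any single new edge could produce in any of the finitely many networks of the cycle. Third, no delete operation is improving, which is guaranteed if every edge owned by the moving agent is \emph{essential} in the sense that its removal raises the owner's distance-cost by strictly more than $\alpha$. Both checks reduce to a finite case analysis and together cut out an explicit open interval of admissible values of $\alpha$.

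For the BG, the moving agent may switch to any strategy $S_u^* \subseteq V \setminus \{u\}$ in a single step, so arbitrary multi-operations must be ruled out. I would split this according to $|S_u^*|$. Strategies with $|S_u^*| > |S_u|$ incur at least one additional $\alpha$ in edge-cost, and since the cyclic networks are of constant size the total achievable distance saving is bounded by a constant; hence a sufficiently large $\alpha$ eliminates them. Strategies with $|S_u^*| < |S_u|$ are dominated by the single-edge delete analysis from the GBG part, combined with the same edge-cost/distance-cost accounting. What remains are multi-swaps with $|S_u^*| = |S_u|$; these form a finite set in each constant-size network and can be eliminated by direct inspection.

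The main obstacle is the BG case: the multi-swap alternatives must be checked by hand on each specific cycle instance, and the interval of admissible $\alpha$ values has to be calibrated consistently across all cycle networks. A further subtlety, reflected in the authors' remark that only cycles with several unhappy agents per step could be produced, is ensuring that the chosen moving agent in each such step really has a best response that stays inside the cycle rather than merely an improving one; this forces the verification to inspect, for every multiply-unhappy network in the cycle, the best response of every unhappy agent and confirm that the designated transition is among them.
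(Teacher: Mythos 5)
Your route is genuinely different from the paper's, and it has a gap at its load-bearing step. The paper does not lift the ASG cycles at all: it builds two new, purpose-built cycles (Fig.~\ref{fig:nashbr} with $7<\alpha<8$ for the \textsc{Sum}-version, Fig.~\ref{fig:maxnashbr} with $1<\alpha<2$ for the \textsc{Max}-version) in which the best responses along the cycle are themselves buys and deletions interleaved with swaps, and then verifies by hand that no multi-operation beats the designated single operation. Your plan instead keeps the swap-only ASG cycles and tries to neutralize the new buy/delete options by calibrating $\alpha$. The problem is that your two calibration requirements pull in opposite directions, and you never establish that the resulting interval is nonempty. To kill buys you push $\alpha$ above the largest distance saving any single new edge can achieve (for the moving agent this is at least the saving of the designated swap itself, since buying the swap's target edge without dropping the old one yields a supergraph of the post-swap network). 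To kill deletions you need the distance penalty of removing each owned edge to exceed $\alpha$ minus the designated swap's gain. Bridges are free, but the \textsc{Sum}-ASG cycle you want to reuse explicitly has the moving agent $f$ owning \emph{exactly one edge which is not a bridge}, so its deletion saves $\alpha$ against a finite distance penalty $\Delta$, and you need $\alpha-\Delta$ to stay below the designated swap's improvement. In the steps $G_2$ and $G_3$ of that cycle the designated swap improves the mover's cost by only $1$, so the window collapses to roughly $s-1\le\alpha\le\Delta+1$ where $s$ is the best buy's distance saving; whether this is nonempty is exactly the content of the proof and is not obvious --- indeed it is plausible that no admissible $\alpha$ exists for these instances. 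The same tension appears in the \textsc{Max}-ASG cycle, where every designated swap improves the mover by exactly $1$.

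A secondary point: you treat the remark about multiply-unhappy steps as an obstacle you must handle, but in fact it cuts the other way. The paper's ASG cycles have exactly one unhappy agent per step, and the authors state they were \emph{unable} to construct (G)BG best response cycles with that property; if your lifting went through cleanly it would produce such cycles (at least for the moving agent's alternatives), which should make you suspicious that the calibration fails on these particular instances. If you want to salvage the lifting idea, you would need either to exhibit the explicit nonempty $\alpha$-interval for the concrete ASG networks (checking every owned non-bridge edge and every candidate buy of every moving agent in every step), or to modify the ASG instances so that all owned edges of the moving agents are bridges --- at which point you are effectively designing a new cycle, which is what the paper does.
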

\begin{proofof}{Theorem~\ref{thm_buy}, \textsc{Sum}-version}
 We prove both statements by giving a best response cycle, where the best response of any moving agent consists of either buying, deleting or swapping one edge. The best response cycle $G_1,\dots,G_6$, for $7 < \alpha <8$, is depicted in Fig.~\ref{fig:nashbr}.
 \begin{figure}[!h]
  \centering
  \includegraphics[width=15cm]{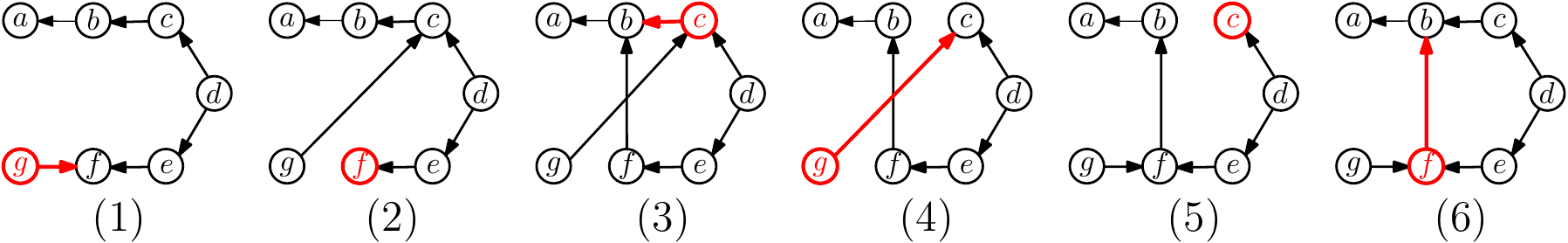}
  \caption{The steps of a best response cycle for the \textsc{Sum}-(G)BG for $7 < \alpha < 8$.}
  \label{fig:nashbr}
\end{figure}

\noindent We analyze the steps of the cycle and show, that the indicated strategy-change is indeed a best response move -- even if there are no restrictions on the admissible strategies. 

In network $G_1$, it is obvious, that agent $g$ is unhappy with her situation. The indicated swap $gf$ to $gc$ decreases agent $g$'s cost from $\alpha + 21$ to $\alpha + 15$. This is a best possible move, which can be seen as follows. Clearly, deleting her unique own edge would disconnect the network. Hence in all optimal strategies agent $g$ must purchase at least one edge. Among all strategies, where $g$ buys exactly one edge, that is, among all possible single edge-swaps, we have that buying an edge towards a vertex having minimum cost in $G_1 - g$ is optimal. Here, $G_1 - g$ is the network $G_1$ with vertex $g$ removed. Thus, swapping her edge towards vertex $c$ is optimal. Buying exactly $1<k\leq6$ edges yields cost of at least $k\alpha + k + 2(6-k) > 6k+12 \geq 24$, which is no improvement since $\alpha + 15 < 23$. After agent $g$ has performed her strategy-change we obtain network $G_2$.

In $G_2$ we claim that agent $f$ is unhappy and that her best possible move is to buy an edge towards vertex $b$. First of all, this is an improving move, since the edge $fb$ decreases her cost from $19$ to $11+\alpha$, which is a strict cost decrease since $\alpha < 8$. The target vertex $b$ is optimal, since connecting to $c$ yields the same cost and connecting to any other vertex yields a higher cost. Clearly, agent $f$ cannot delete or swap any edges. Furthermore, buying at least two edges yields cost of more than $19$, since $2\alpha > 14$ and there are six other vertices in $G_2$ to which $f$ must have distance of at least $1$. The edge purchase of agent $f$ leads us to network~$G_3$.

In network $G_3$ we claim that agent $c$ is unhappy and that her best possible move is to delete her edge towards $b$. Agent $c$ has cost $9+\alpha$ in $G_3$. Deleting edge $cb$ yields cost $16 < 9+\alpha$, since $\alpha > 7$. Clearly, no strategy which buys at least two edges can be optimal for agent $c$, since $6 + 2\alpha > 16$. On the other hand, swapping her unique edge away from $b$ must increase agent $c$'s cost since at least one distance increases to $3$. If agent $c$ deletes her edge $cb$, then we obtain network~$G_4$.

In $G_4$, we have that agent $g$ is in a similar situation as she was in $G_1$. Agent $g$ is again a leaf-vertex of a path of length $6$. Thus, by an analogous argument as for agent~$g$ in $G_1$, we have that the swap $gc$ to $gf$ is a best possible move for agent $g$ in $G_4$. This move leads us to network~$G_5$.

In network $G_5$ we have that agent $c$ is in a similar situation as agent $f$ in $G_2$. By analogous arguments, it follows that buying the edge towards $b$ is a best possible move of agent $c$ in $G_5$. This edge-purchase transforms $G_5$ into $G_6$.

Finally, in network $G_6$ we have that agent $f$ is in a similar situation as agent $c$ in $G_3$. Thus, by analogous arguments, we have that deleting her edge $fb$ is an optimal move for agent $f$ in $G_6$. This deleting transforms $G_6$ into $G_1$ and we have completed the cycle. 
\end{proofof}
\begin{proofof}{Theorem~\ref{thm_buy}, \textsc{Max}-version}
 We give a best response cycle, where in every step of the cycle the moving agent has a best response which consists of a greedy move, that is, the strategy-change is the addition, deletion or swap of one edge. The best response cycle $G_1,\dots,G_4$, for $1<\alpha < 2$, can be seen in Fig.~\ref{fig:maxnashbr}.
 \begin{figure}[!h]
  \centering
  \includegraphics[width=13cm]{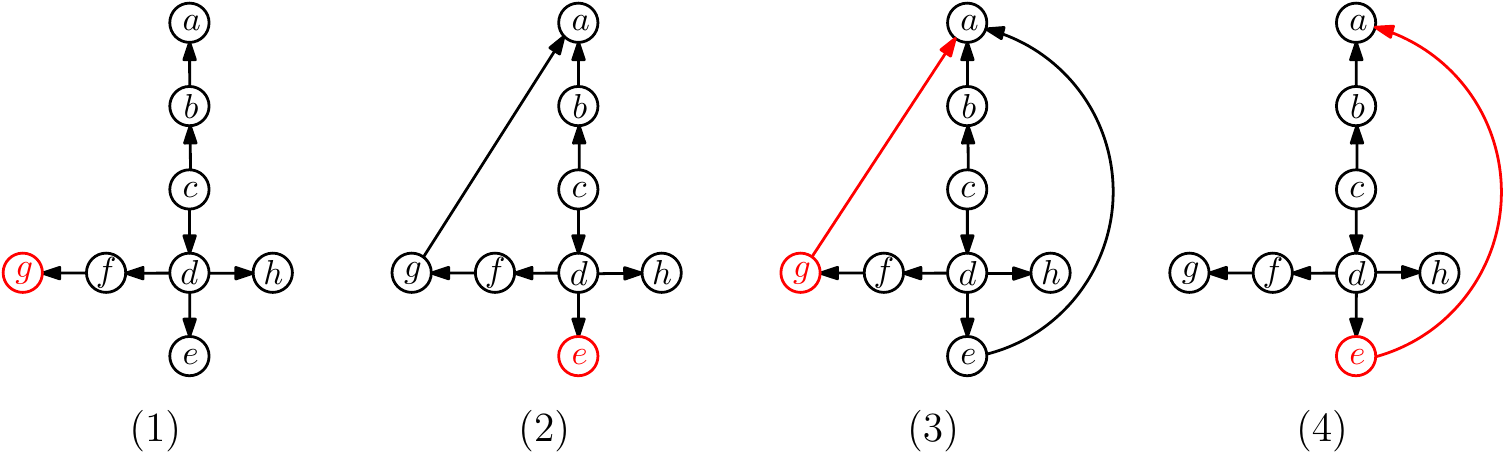}
  \caption{The steps of a best response cycle for the \textsc{Max}-(G)BG for $1 < \alpha < 2$.}
  \label{fig:maxnashbr}
\end{figure}

We show for each step of the cycle, that the indicated moving agent indeed performs a best response move which transforms the network of one step into the network of the next step of the cycle. 

In network $G_1$ we claim that agent $g$ is unhappy and that a best response move is to buy the edge $ga$. Clearly, agent $g$ cannot delete or swap any edges. Hence, it suffices to analyze all buy or multi-buy operations. Agent $g$ has cost $5$ in $G_1$. The purchase of edge $ga$ is an improving move, since with this yields a distance-cost of $3$ for agent $g$ and since $\alpha < 2$. Note, that agent $g$ must buy at least one edge to reduce her distance-cost in $G_1$. Furthermore, it is easy to see that with one additional edge a distance-cost of $3$ is best possible. Now, observe, that if agent $g$ buys more than one edge, than her distance-cost may decrease, but it cannot decrease by more than $1$ per edge. Since $\alpha > 1$, no strategy which buys at least two edges can yield strictly less cost than $3+\alpha$. The indicated move of agent $g$ transforms $G_1$ into $G_2$.

In $G_2$, agent $e$, having cost $4$, is unhappy with her situation. Buy buying the edge $ea$, agent $e$ can decrease her distance-cost to $2$. Since $\alpha < 2$, it follows that this is a improving move. Note, that agent $e$ cannot delete or swap any edges and that distance-cost of $2$ is optimal, unless agent $e$ buys $5$ edges, which clearly is too expensive. Hence, buying the edge $ea$ is a best response move for agent $e$ and this move leads to network $G_3$. 

In network $G_3$, we have that agent $g$, having cost $3+\alpha$, is unhappy. By deleting her own edge $ga$, agent $g$ can achieve a cost of $4$, which is strictly less than $3+\alpha$, since $\alpha > 1$. We claim that deleting edge $ga$ is a best response move for agent $g$. If agent $g$ swaps her unique own edge, than she cannot achieve a distance-cost of less than $3$. Thus, no swap can be an improving move. If agent $g$ buys at least one additional edge, then such a purchase may decrease agent $g$'s distance-cost by $1$ per edge, but since $\alpha >1$, this cannot outperform her current strategy in $G_3$. Thus, deleting edge $ga$ is the only improving move of agent $g$ and, thus, must be her best response move. This move transforms network $G_3$ in to $G_4$.

Finally, in network $G_4$ we have that agent $e$, having cost $3+\alpha$, is unhappy. Deleting her edge $ea$ yields a cost of $4$, which is strictly less than $3+\alpha$. Swapping this edge cannot decrease her distance-cost below $3$, which rules out any edge-swaps. If agent $e$ buys at least one additional edge, then she can reduce her distance-cost by at most $1$ per additional edge. Clearly, no such strategy may yield less cost than $3+\alpha$ and, thus, cannot be an improving move. Hence, deleting her edge $ea$ is her unique improving move, which must be her best response move. This move transforms $G_4$ into $G_1$. 
\end{proofof}

If we restrict the set of edges which can be build, then we get the worst possible dynamic behavior. In this case there is no hope for convergence if agents are only willing to perform improving moves. 
\begin{corollary}\label{cor_bg_host}
The \textsc{Sum}-(G)BG and the \textsc{Max}-(G)BG on general host graphs is not weakly acyclic. 
\end{corollary}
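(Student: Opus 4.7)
The plan is to mimic the argument used for Corollary~\ref{cor_asg_host}: we reuse the best response cycles $G_1,\dots,G_6$ and $G_1,\dots,G_4$ constructed in the proof of Theorem~\ref{thm_buy} as our initial states, and we choose a non-complete host graph $H$ whose non-edges are precisely the ``alternative'' targets that the moving agent might use to deviate from the cycle. The goal is to make every move that the cycle specifies remain available while simultaneously killing every other improving move. Since the cycle is closed under improving moves in the restricted game, no sequence of improving moves starting at any $G_i$ can ever reach a stable network, which is exactly the statement of not being weakly acyclic.

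For the \textsc{Sum}-version I would inspect the proof of Theorem~\ref{thm_buy} (\textsc{Sum}-case) step by step. The only places in which the moving agent has another improving option (that does not already appear as part of the cycle) are: in $G_2$, agent $f$ could equally well buy an edge to $c$ instead of to $b$; in the analogous states $G_4,G_5,G_6$ similar symmetric alternatives exist. In each case I would record the finite list of alternative target vertices and delete exactly those edges from the host graph. Then for each state $G_i$ of the cycle I would re-check the three exhaustive options enumerated in the original proof (single-edge swap, single-edge buy, single-edge delete, and multi-edge operations): the deleted edges remove precisely the unused improving alternatives, while the cycle move itself is untouched, so the agent is left with a unique improving move that closes the cycle back into itself.

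For the \textsc{Max}-version I would run the same recipe on the cycle $G_1,\dots,G_4$ of the \textsc{Max} part of Theorem~\ref{thm_buy}. The alternative improving moves to suppress are the alternative single-edge buys of $g$ in $G_1$ and of $e$ in $G_2$ that achieve the same distance-cost $3$ (respectively $2$), together with analogous swap-alternatives for the delete-steps in $G_3$ and $G_4$; multi-edge buys were already ruled out in the proof of Theorem~\ref{thm_buy} by the choice $1<\alpha<2$, so they remain ruled out independently of $H$. Deleting the listed edges from the complete graph gives a host graph $H$ on which, in every $G_i$, the cycle move is the unique improving move available to any agent.

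The main obstacle, and the step that requires care, is the bookkeeping of showing that the host-graph restriction does not accidentally create a \emph{new} improving move for some previously happy agent, nor destroy the improving character of the cycle move itself. Fortunately, the happy agents in the original proof were shown to be happy by comparisons against the cost in the \emph{complete} host graph, and restricting the host graph can only shrink their strategy space, so happiness is preserved; and every edge used inside the cycle is by construction left intact in $H$. Once these two sanity checks are carried out, the existence of the described $H$ immediately yields that from the initial state $G_1$ no sequence of improving moves leads to a stable network, establishing the corollary in both versions.
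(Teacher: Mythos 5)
There is a genuine gap in your argument, and it lies in the ``sanity check'' you rely on at the end. You claim that all non-moving agents ``were shown to be happy by comparisons against the cost in the complete host graph'' in the proof of Theorem~\ref{thm_buy}, so that restricting the host graph preserves their happiness. But that proof establishes only that the \emph{indicated moving agent's} move is a best response in each $G_i$; it never shows that the other agents are happy. Indeed, the paper explicitly states just before Theorem~\ref{thm_buy} that the authors ``were not able to construct best response cycles where only one agent is unhappy in every step'' on the complete host graph. Consequently, if you only delete the moving agent's alternative targets (such as $\{f,c\}$ in $G_2$), other agents retain improving moves in various states of the cycle, and a sequence of improving moves could leave the cycle through one of them and potentially reach a stable network. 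Non-weak-acyclicity requires that \emph{every} improving move of \emph{every} unhappy agent in every reachable state leads back into the cycle, so your construction as described does not establish the corollary.

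The paper's proof avoids this by going in the opposite direction: instead of starting from the complete graph and removing a short list of edges, it takes the host graph to be essentially minimal --- for the \textsc{Sum}-version, $H$ is $G_1$ augmented by only the two edges $bf$ and $cg$ that the cycle actually uses, and for the \textsc{Max}-version, $G_1$ plus $ag$ and $ae$. This drastic restriction simultaneously kills the buying options of all agents at once, so that in every step exactly one agent is unhappy and has exactly one improving move. Your approach could in principle be repaired by additionally cataloguing and deleting the improving moves of every other agent in every state, but that requires a full happiness analysis on the complete host graph that neither you nor the original proof of Theorem~\ref{thm_buy} carries out; the minimal-host-graph construction is what makes the verification tractable.
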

\begin{proof}

 \textbf{\textsc{Sum}-version:} We use the best response cycle $G_1,\dots,G_6$, shown in Fig.~\ref{fig:nashbr} with $7<\alpha <8$ and set the host-graph $H$ to the graph $G_1$ augmented by the two additional edges $bf$ and $cg$. With this host graph, we have that in every step of the cycle exactly one agent is unhappy and this agent has exactly one possible improving move. It follows, that starting with network $G_1$ on host-graph $H$ there is no sequence of improving moves which leads to a stable network.  
 
 \textbf{\textsc{Max}-version:} We use the best response cycle $G_1,\dots,G_4$ in Fig.~\ref{fig:maxnashbr} with $1<\alpha <2$ and we let the host-graph $H$ be the graph $G_1$ with the additional edges $ag$ and $ae$. It follows that in every step of the cycle, exactly one agent is unhappy and this agent has exactly one improving move. 
\end{proof}

\subsection{Empirical Study of Greedy Buy Games}\label{sec_gbg_empirical}
We give empirical results for the convergence time for both versions of the GBG. Note, that a best response for both versions of the GBG can be computed in polynomial time, whereas this problem is well-known~\cite{Fab03,MS10} to be NP-hard for the BG. Our results show a remarkably small number of steps needed for convergence in these games, which indicates that distributed local search is a practical method for selfishly creating stable networks. For the \textsc{Sum}-GBG no run took longer than $7n$ steps to converge, whereas for the \textsc{Max}-version we always observed less than $8n$ steps until convergence. Moreover, as in the simulations for the ASG, despite several millions of trials we did not encounter a cyclic instance. This indicates that such instances are rather pathological and may never show up in practice. 

As for our experiments in the ASG, we will first describe the experimental setup and then we summarize the obtained results for the convergence time of the \textsc{Sum}-GBG and the \textsc{Max}-GBG.
\subsubsection{Experimental Setup}\label{sec_exp_setup}
Our simulations for the GBG have a similar setup as the experiments presented in Section~\ref{sec_asg_setup}. One run of our simulations consists of the generation of a random initial network, then we employ the max cost or the random policy until the process converges to a stable network. See Section~\ref{sec_asg_setup} for a description of these move policies. We measure the number of steps needed for this convergence to happen.

We focus on the GBG, since it is well-known~\cite{Fab03} that in the BG computing a best response for an agent is NP-hard. In contrast, for the GBG this can be done in polynomial time by simply checking the best possible edge-deletion, edge-swap and edge-addition and re-computing the incurred cost, see~\cite{L12}. If at least two of these operations yield the same cost decrease, then we prefer deletions before swaps before additions. Such ties are very rare and we have obtained similar results by changing this preference order. 

The initial networks are generated analogously to the ASG, but this time we do not have to enforce the budget-constraint. Starting from an empty graph on $n$ vertices we first generate a random spanning tree to enforce connectedness of our networks. Then we insert edges uniformly at random until the desired number of edges is present. Note that we do not allow multi-edges. The ownership of every edge is chosen uniformly at random among the endpoints. In order to investigate the impact of the density of the initial network on the convergence time, we fix the number of edges in the initial network to be $n$, $2n$ and $4n$, respectively. The impact of the edge-cost parameter $\alpha$ is investigated by setting $\alpha$ to $n/10$, $n/4$, $n/2$ and $n$, respectively. Demaine et al.~\cite{De07} argue that this is the most interesting range for $\alpha$, since implies that the average distance is roughly on par with the creation cost of an edge.
\subsubsection{Experimental Results and Discussion}
We have simulated both the \textsc{Sum}-GBG and the \textsc{Max}-GBG. For each configuration we computed $5000$ runs. Here a configuration is determined by the number of agents, the number of edges $m$ in the initial network, the choice for $\alpha$ and the choice for the move policy. For sake of presentation, our plots only contain the results for $m=n$ and $m=4n$ and $\alpha=n$, $\alpha=n/4$ and $\alpha=n/10$.
\paragraph{Results for the \textsc{Sum}-GBG:}
Our results for the \textsc{Sum}-GBG can be found in Fig.~\ref{fig:exp_greedy_sum}. It can be seen that the convergence time grows roughly linear in $n$ for all configurations, which implies that these processes scale very well.  
\begin{figure}[bh]
 \centering
  \includegraphics[width=0.495\textwidth]{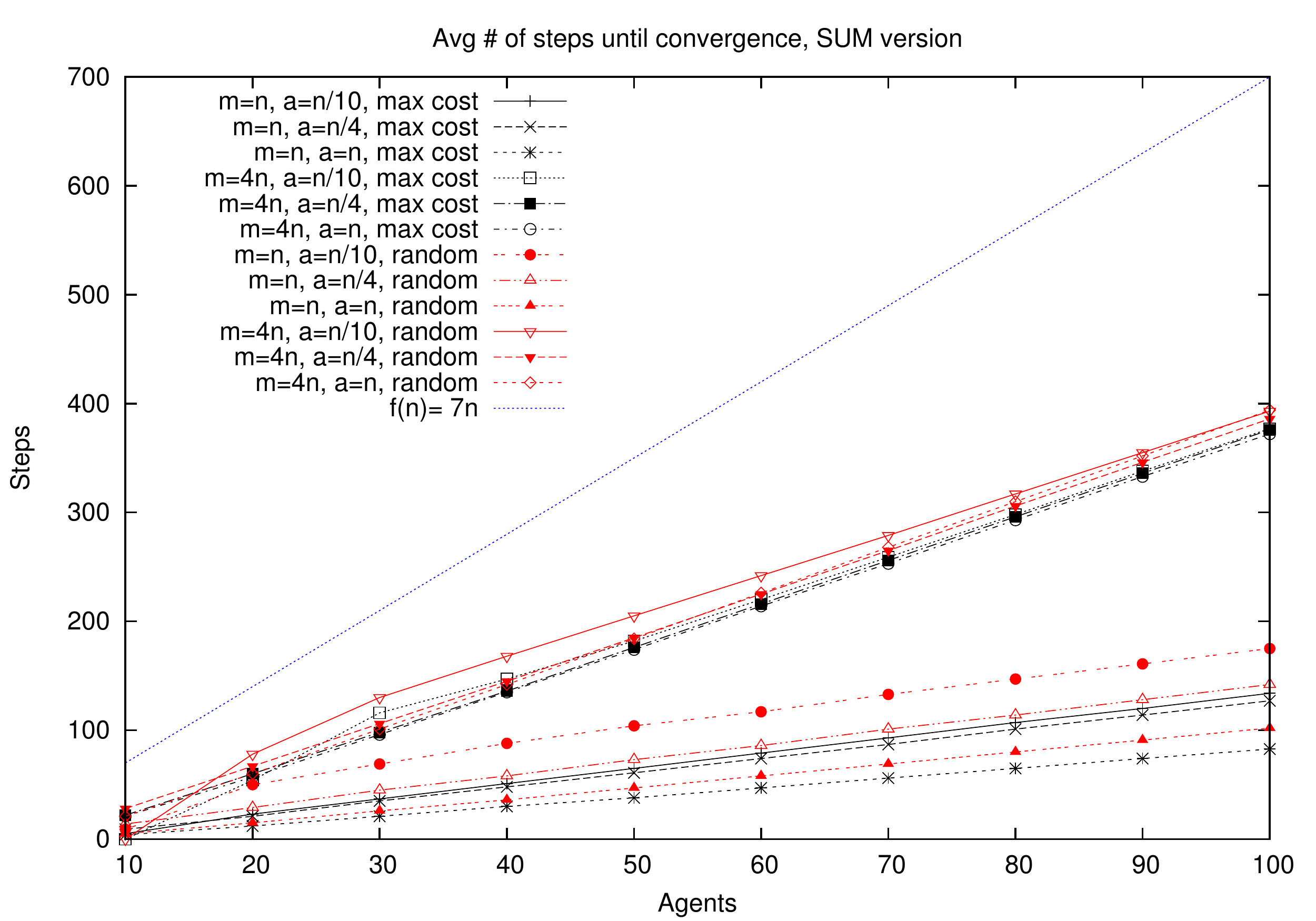}
  \includegraphics[width=0.495\textwidth]{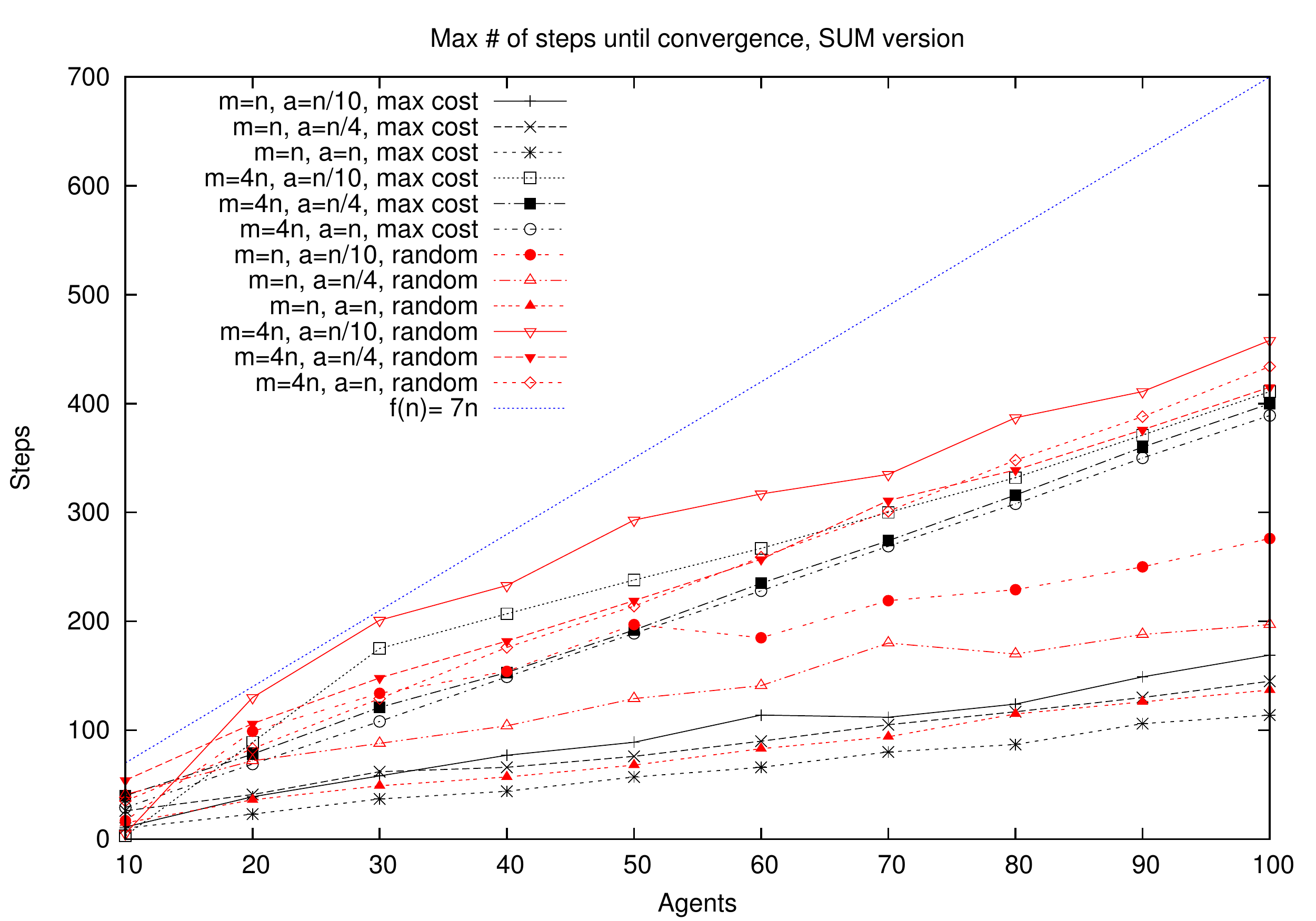}
  \caption{Experimental results for the \textsc{Sum}-GBG. The average number of steps needed for convergence is plotted on the left, the maximum number of steps needed for convergence is plotted on the right. Each point is the average/maximum over the number of steps needed for convergence of 5000 trials with random initial networks having $m$ edges and $\alpha = a$.}
    \label{fig:exp_greedy_sum}
\end{figure}
 
Similarly as for the ASG in the \textsc{Sum}-version, we have that the max cost policy outperforms the random policy. However, since the respective curves look similar, there seems to be no qualitative difference between both policies. For the cases where the initial network has $4n$ edges we have that the gap between the max cost policy and the random policy is even smaller. The cases where $m=2n$, which are not shown here, fit in well with this observation. This can be explained as in the ASG: The max cost policy favors agents ``at the perimeter'' whose best response moves decrease a high number of individual distances if these moves turn out to be swaps or additions. In contrast, the random policy picks agents which occupy a more central position in the network with higher probability. If such agents happen to swap or add an edge, then this decreases a smaller number of individual distances. This also explains why the gap becomes smaller, when the number of edges in the initial network becomes larger. 
With 
more edges to start with, it follows that the max cost agents have a more central position.     

Interestingly, the number of edges in the initial network seems to have an impact on the convergence time, since all curves for $m=4n$ are well above the respective curves for $m=n$. In the cases where $m=2n$, we have that the respective curves lie between the shown curves. The reason for this may be the relatively high value for $\alpha$ compared to the diameter of the resulting networks. We have not found any stable network having a diameter larger than $4$, which implies for our values of $\alpha$ that almost all stable networks happened to be stars. Since stars have $n-1$ edges, clearly $m-(n-1)$ deletion-steps happen during the convergence process. However, note that the convergence time is well above $m-(n-1)$, which shows that swap and add operations also occur fairly often. A typical sample trajectory for the case of $m=4n$ with $\alpha = n/4$ under the random policy looks as follows: First there is a phase with mostly deletions. Then this phase is followed by a phase where mostly swap and some buy 
and deletion operations occur. This is 
followed by a final phase where some swaps and mostly deletions happen. The first phase seems to be due to the fact that by our generation process the number of owned edges in the initial network varies more or less strongly between all agents and that $\alpha$ is relatively high. Agents having too many edges first try to get rid of them until their cost is in a range, where swaps or additions can outperform deletions, that is, where the distance-cost starts to dominate the edge-cost of that agent. The latter seems to explain what happens in the second phase. Here agents are mainly trying to minimize their distance-cost. After that, a high number of individual distances has decreased, which explains why in the last phase some swaps and a lot of deletions occur. Clearly, with less edges to start with, the second phase starts much earlier and the last phase is shorter as well. Under the max cost policy we observed that the first phase and the second phase are slightly shorter than under the random policy. In 
the first phase almost all operations are deletions, whereas under the random policy we see more swaps in the first phase. Interestingly the second phase under the max cost policy consists almost entirely of swap operations. Under the random policy we see more deletions and more add-operations in the second phase. Curiously, we generally see more add-operations under the random policy than under the max cost policy. This is counter-intuitive, since at least in the second phase agents with high cost have high distance-cost and therefore add-operations should be appealing for them. However, these small observations can serve as an additional explanation why the max cost policy outperforms the random policy.

The choice of $\alpha$ also has an influence on the convergence time. We see that a smaller $\alpha$ generally yields a higher number of steps needed for convergence. Again, our additional results for $\alpha=n/2$ confirm this observation. With smaller $\alpha$, we have that the length of the first phase decreases and the length of the second phase increases. It seems that agents buy more edges in the second phase which then leads to a longer last phase which may explain the overall increase in convergence time.

To clarify the influence of different starting topologies, we have compared three different types of initial networks. The results can be found in Fig.~\ref{fig:exp_compare_sum}. In the \texttt{random} setting, we focus on the initial networks with $n$ vertices and $n$ edges as described in Section~\ref{sec_exp_setup}. In the random line, or \texttt{rl}, setting we generate the initial networks as follows: first a path having $n$ vertices is created and then we choose the ownership of each edge uniformly at random among its endpoints. In the directed line, or \texttt{dl}, setting we generate a path having $n$ vertices and then the edge ownership is chosen such that all edges point in the same direction, that is, that the edge-ownership forms a directed path.    
\begin{figure}[!h]
 \centering
  \includegraphics[width=0.495\textwidth]{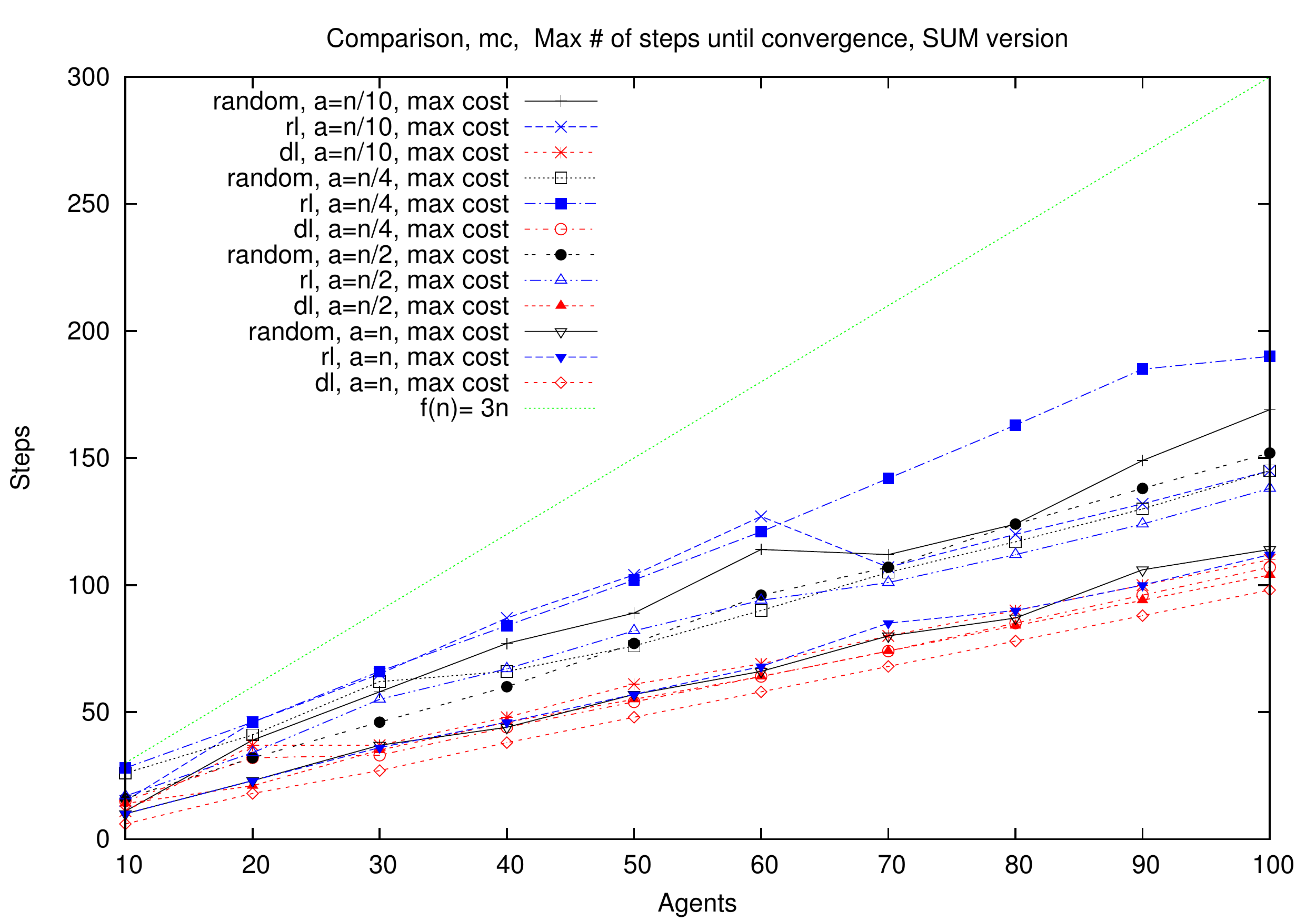}
  \includegraphics[width=0.495\textwidth]{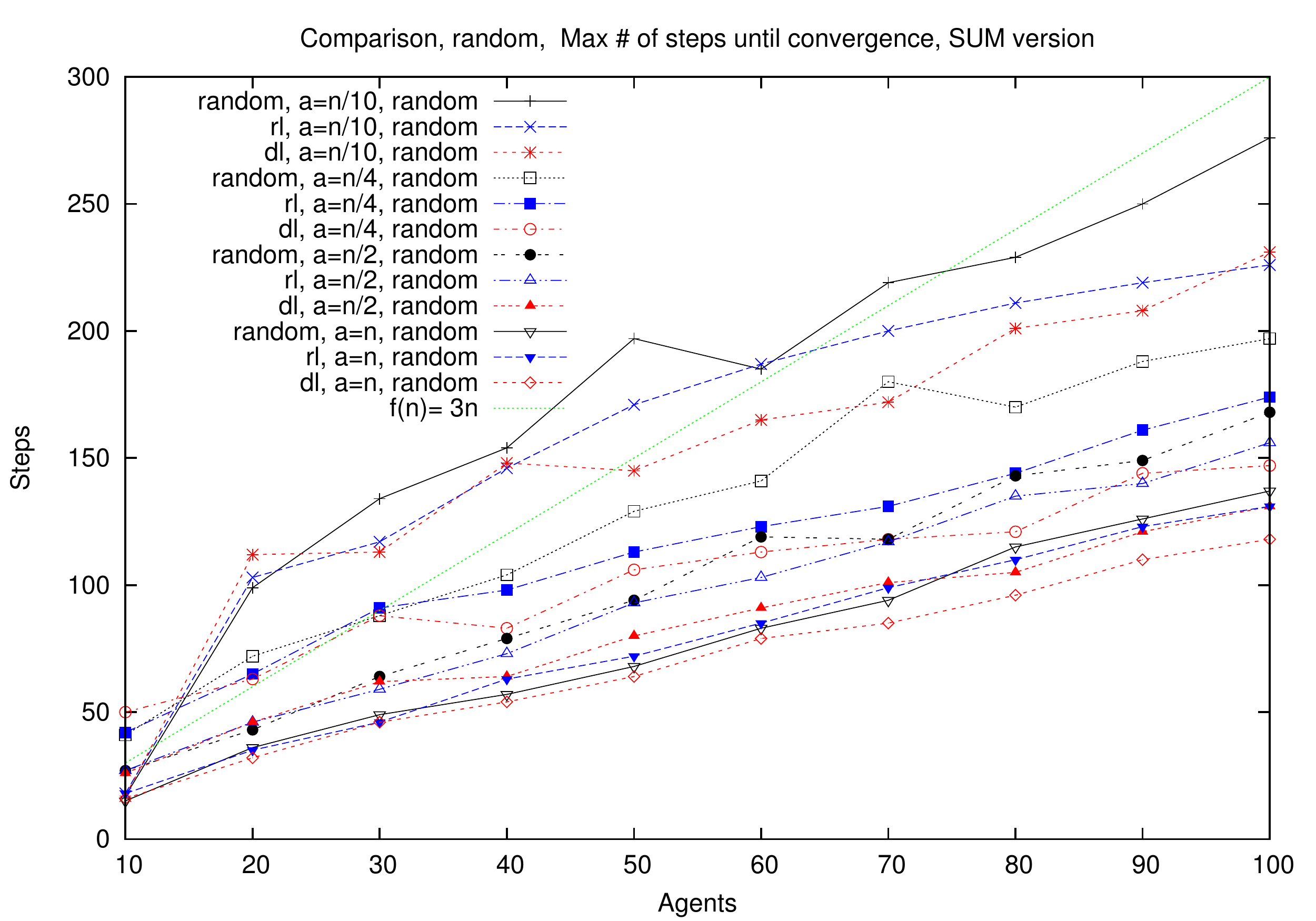}
  \caption{Comparison of different starting topologies for the \textsc{Sum}-GBG.
  The maximum number of steps needed for convergence under the max cost policy is plotted on the left, the maximum number of steps needed for convergence under the random policy is plotted on the right. Each point is the maximum over the number of steps needed for convergence of 5000 trials where the initial setting is \texttt{random}, \texttt{rl} or \texttt{dl} and $\alpha = a$.}
    \label{fig:exp_compare_sum}
\end{figure}

Generally we find that the specific topology of the initial network has only marginal impact on the number of steps needed for convergence in the \textsc{Sum}-GBG. As shown in Fig.~\ref{fig:exp_compare_sum}, the convergence time differs roughly by a factor of at most $2$. We expected that the \texttt{random} setting is faster than the \texttt{rl} setting and that \texttt{rl} is faster than \texttt{dl}. The reason for our expectation was that the initial networks in the \texttt{random} setting are more star-like than in \texttt{rl} or \texttt{dl} (which actually is the opposite of star-like) and being star-like seems to be closer to the typical shape of a stable network. However, our simulations show exactly the opposite behavior for all configurations. Under both move policies we see that \texttt{dl} is faster than \texttt{rl} and \texttt{random}. One possible explanation is that due to the high diameter of the initial network in the \texttt{dl} setting a move by some agent, especially in the first rounds, 
decreases the social cost by a larger amount than in the \texttt{rl} or \texttt{random} setting. Maybe this initial decrease is large enough to reduce the number of moves at the end of the process.  

Another interesting point is the comparison of the two move policies. We find that the max cost policy outperforms the random policy, independently of the initial setting. But there are differences:
Under the max cost policy we see that the \texttt{dl} setting is the fastest, independently of the edge-price $\alpha$. The convergence time in the other settings depends stronger on $\alpha$ and they are very similar. The reason for this may be that only a few moves by max cost agents in the \texttt{rl} suffice to obtain a network which is very similar to the initial network in the \texttt{random} setting. Under the random policy, we again see for each configuration that \texttt{dl} is faster than \texttt{rl}, which is faster than \texttt{random}. However, we also see that the convergence time depends stronger on $\alpha$ than under the max cost policy. It seems that the value of $\alpha$ has a stronger influence than the respective initial setting. This is not surprising since the max cost policy favors agents whose move decreases the social cost by a rather large amount which leads to faster convergence.

 \paragraph{Results for the \textsc{Max}-GBG:} 
 
 The results of our simulations for the \textsc{Max}-GBG are shown in Fig.~\ref{fig:exp_greedy_max}. We generally find the same behavior as for the \textsc{Sum}-version.
 \begin{figure}[!h]
 \centering
  \includegraphics[width=0.495\textwidth]{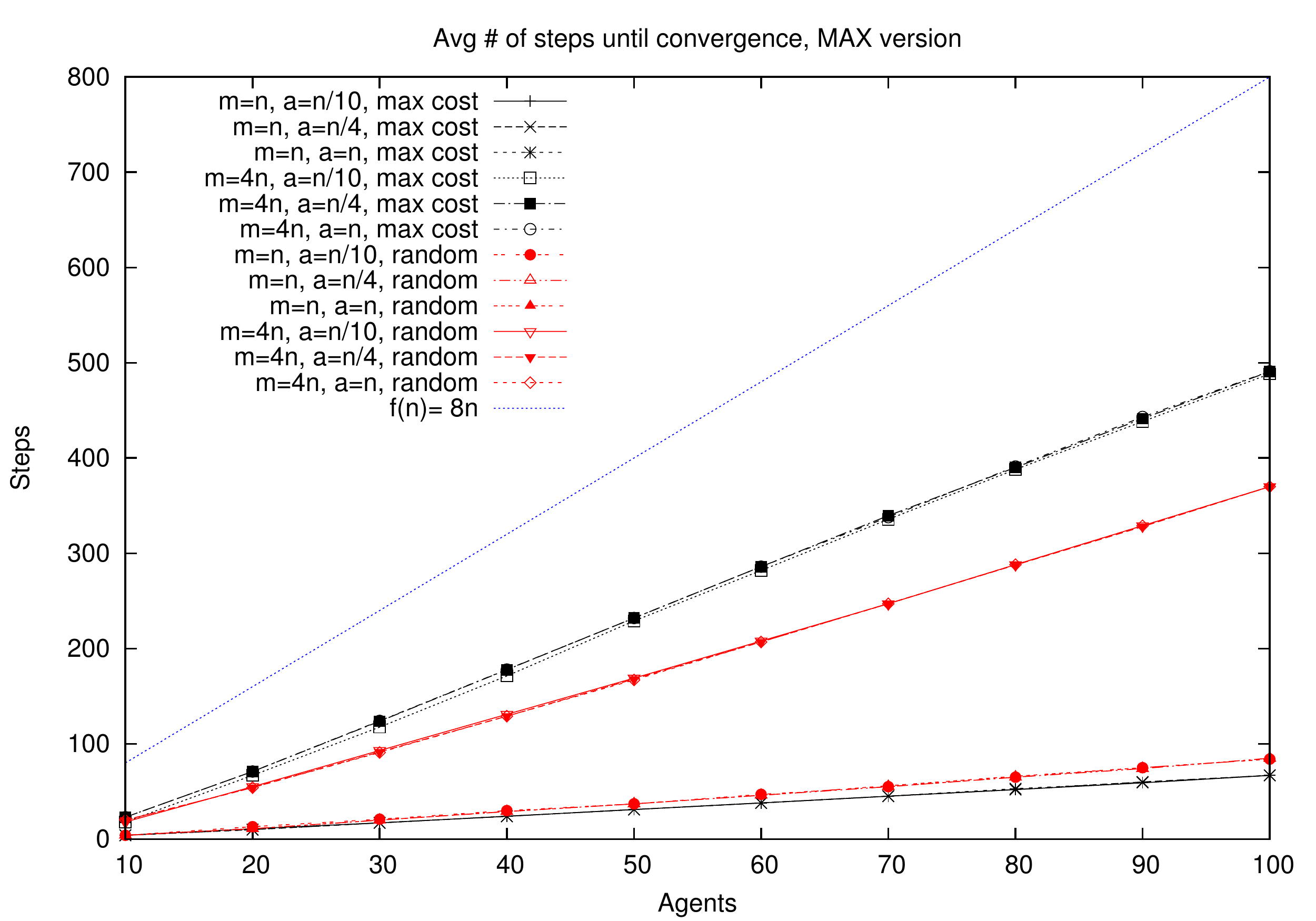}
  \includegraphics[width=0.495\textwidth]{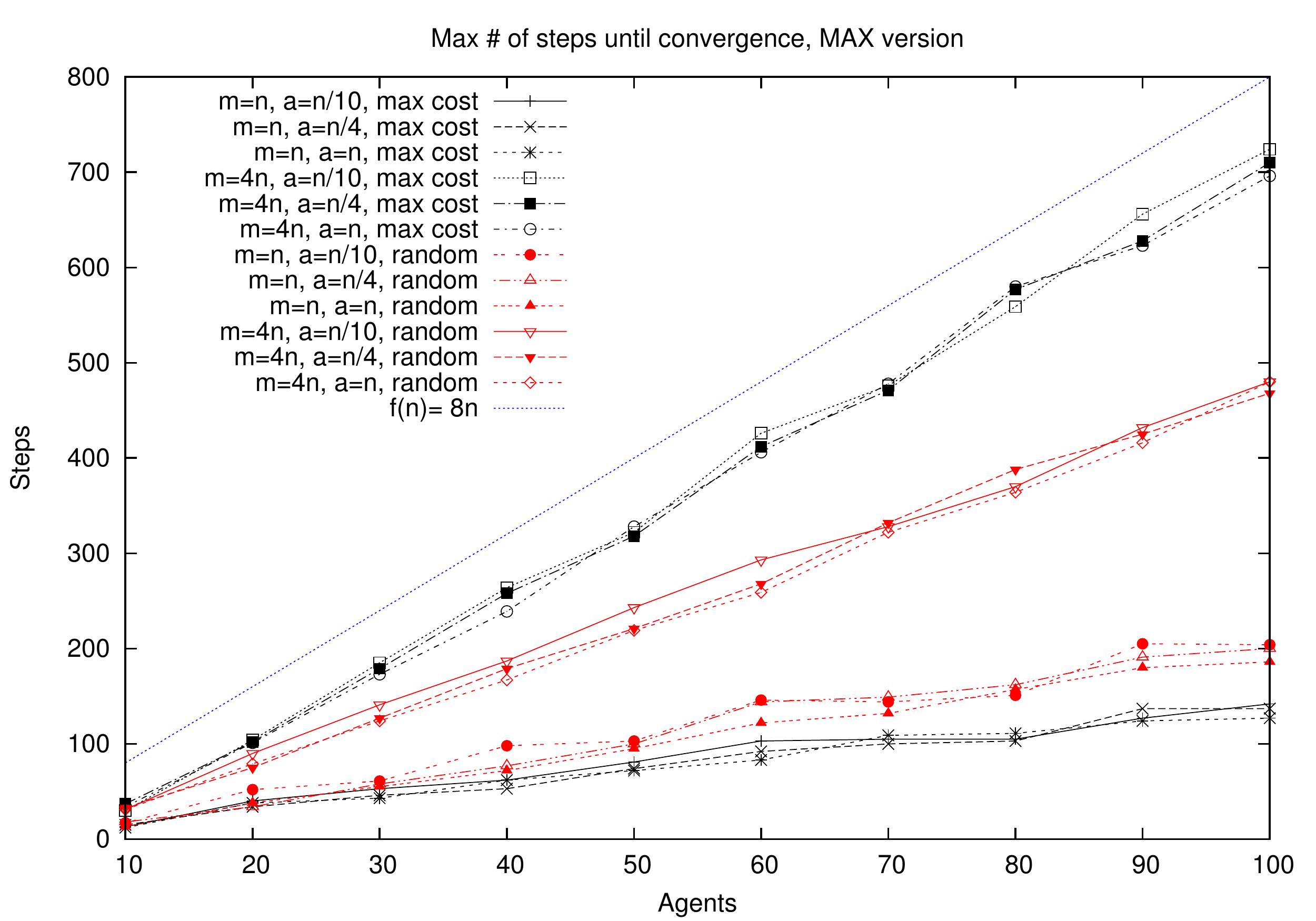}
  \caption{Experimental results for the \textsc{Max}-GBG. The average number of steps needed for convergence is plotted on the left, the maximum number of steps needed for convergence is plotted on the right. Each point is the average/maximum over the number of steps needed for convergence of 5000 trials with random initial networks having $m$ edges and $\alpha = a$.}
    \label{fig:exp_greedy_max}
\end{figure}
The convergence time grows linear in $n$ but, at least for the cases where $m=4n$ it takes longer than the respective configurations in the \textsc{Sum}-version. 

One difference is, that the choice of $\alpha$ seems to have less impact on the convergence time than in the \textsc{Sum}-version. Furthermore, for the cases where $m=4n$ and $m=2n$ (omitted in the plot) we see that the process under the max cost policy is slower than under the random policy. Thus, we see the opposite behavior under both move policies.  

A typical sample trajectory for $m=4n$, with $\alpha = n/4$ under the random policy looks much like the trajectory for the \textsc{Sum}-version, but it seems that the final phase is missing. The first phase contains almost exclusively deletions and the second phase contains mostly swaps and some deletions. Add-operations are exceptionally rare, which seems to be due to the high value of $\alpha$. Interestingly, under the max cost policy a typical trajectory looks different. Here we have that the second phase is much longer and almost all moves in this phase are swaps. This may explain why the max cost policy induces a slower convergence towards a stable network. Clearly, in the \textsc{Max}-version the cost of an agent is mostly determined by her edge-cost, which explains why first a series of deletions and then mostly swap operations can be seen.  

As for the \textsc{Sum}-version, we considered different initial topologies. Fig.~\ref{fig:exp_compare_max} shows a comparison of the three settings \texttt{random}, \texttt{rl} and \texttt{dl}, which we have introduced above.
\begin{figure}[!h]
 \centering
  \includegraphics[width=0.495\textwidth]{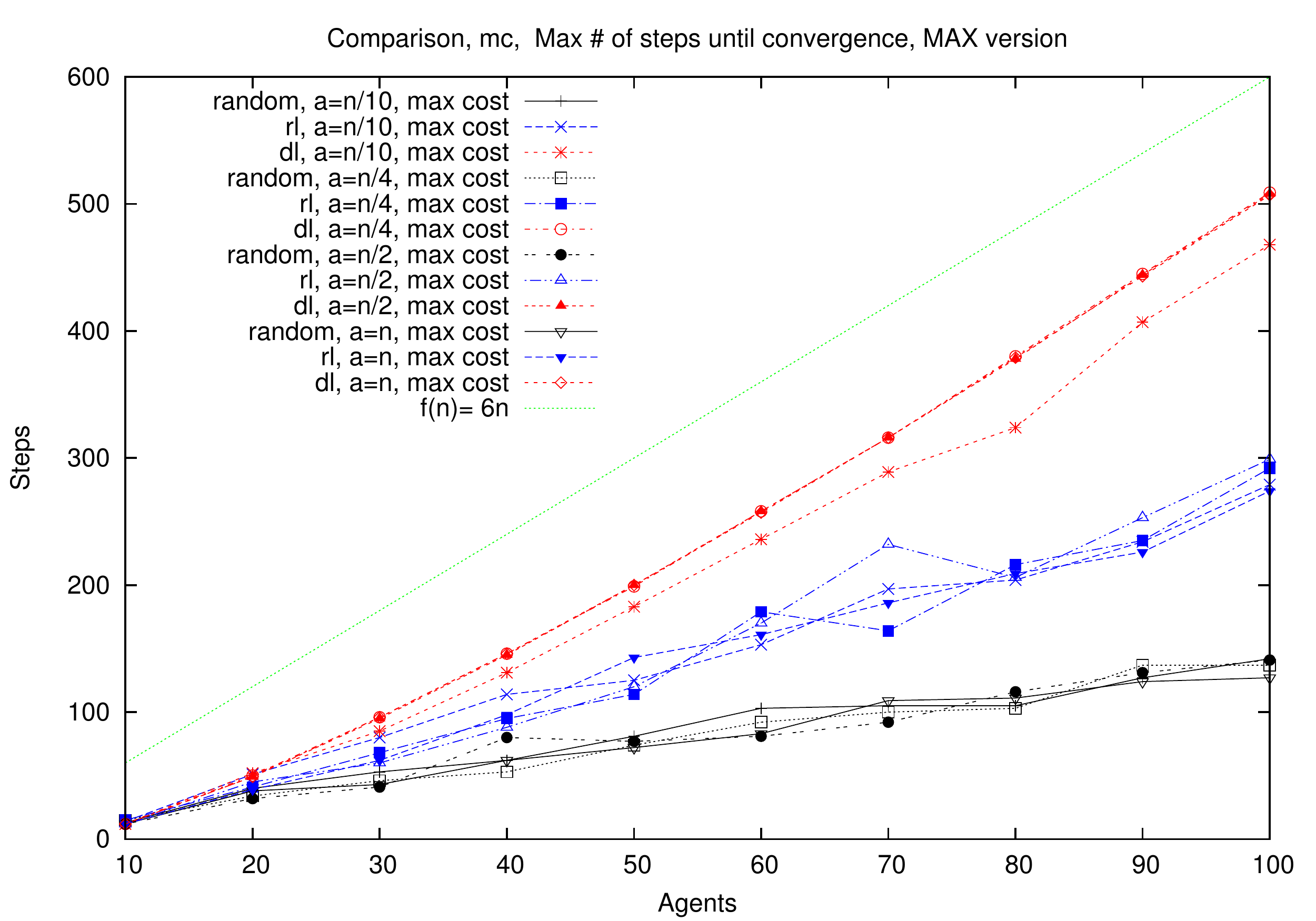}
  \includegraphics[width=0.495\textwidth]{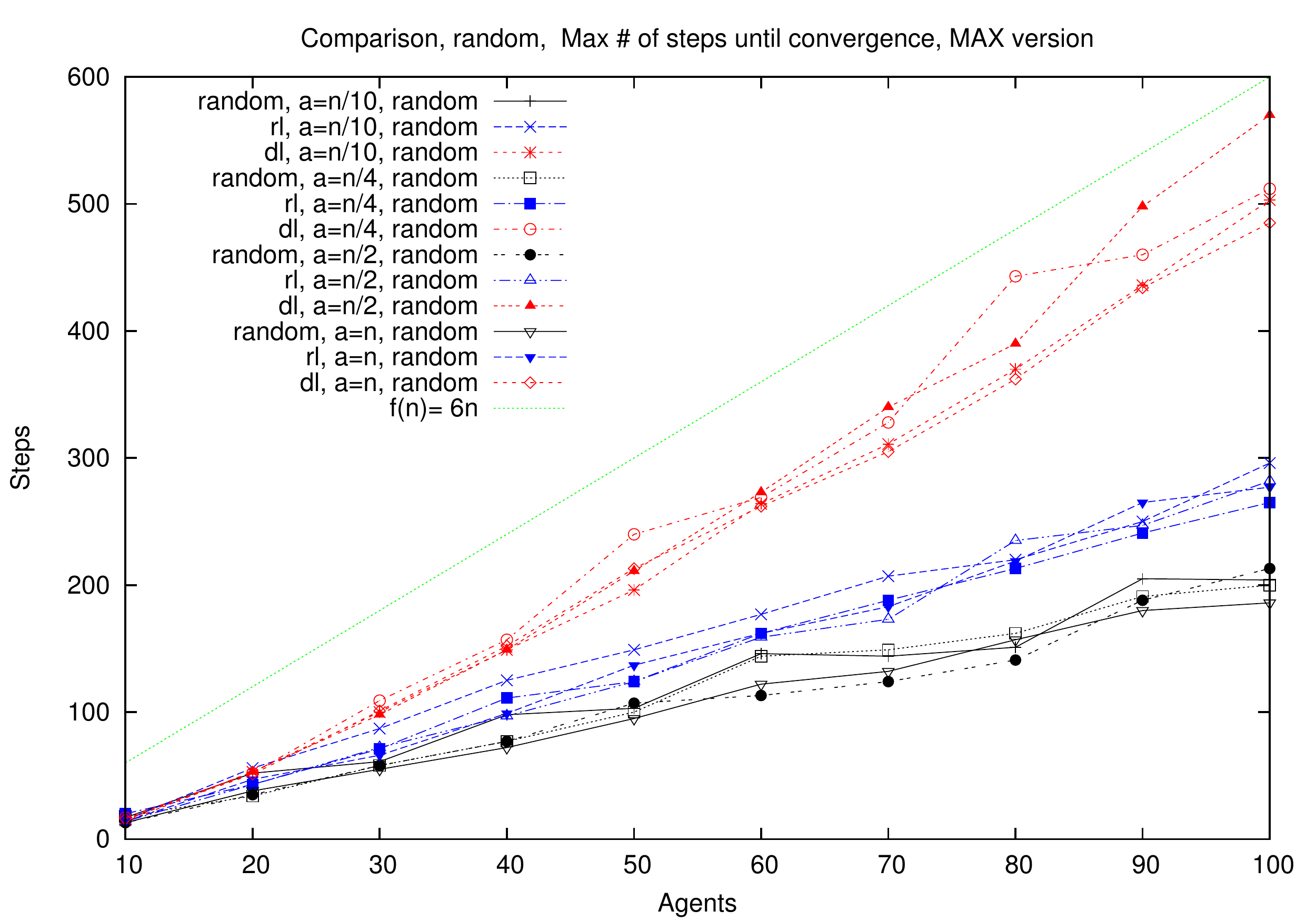}
  \caption{Comparison of different starting topologies for the \textsc{Max}-GBG.
  The maximum number of steps needed for convergence under the max cost policy is plotted on the left, the maximum number of steps needed for convergence under the random policy is plotted on the right. Each point is the maximum over the number of steps needed for convergence of 5000 trials where the initial setting is \texttt{random}, \texttt{rl} or \texttt{dl} and $\alpha = a$.}
    \label{fig:exp_compare_max}
\end{figure}

The comparison shows a stronger impact of the starting topology than in the \textsc{Sum}-version. In the \textsc{Max}-version we find that the convergence times differ by at most a factor of $5$. Interestingly, here we find the expected outcome that under both move policies the \texttt{random} setting is faster than \texttt{rl} and \texttt{dl} and that \texttt{rl} is faster than \texttt{dl}. Furthermore, the edge-price $\alpha$ has almost no influence on the convergence time compared to the influence of the initial topology. Both move polices yield almost the same convergence times. As argued above, this is not surprising, since we have that under the max cost policy a few moves suffice to obtain a network where a large fraction of agents has maximum cost and then we are almost in the random policy scenario. 

\paragraph{Further Remarks:} As for the ASG, we have not encountered a cyclic instance for the \textsc{Sum}-GBG or the \textsc{Max}-GBG in any of our several million trials. This indicates that non-convergent initial networks may be very rare. It is not impossible that the random policy or even the max cost policy may guarantee convergence. However, as our result for the GBG on general host-graphs shows, there may be initial networks on the complete host-graph for which no sequence of improving moves leads to a stable network. Finding such initial networks seems to be very challenging, but - in contrast - proving guaranteed convergence for some move policy seems even more challenging.

\section{Dynamics in Bilateral Buy Games with Cost-Sharing}\label{sec_bilateral_equal_split}
We consider ``bilateral network formation'', as introduced by Corbo and Parkes~\cite{CP05}, which we call the \emph{bilateral equal-split BG}. This version explicitly models that bilateral consent is needed in order to create an edge, which is a realistic assumption in some settings. The cost of an edge is split equally among its endpoints and edges are build only if \emph{both} incident agents are willing to pay half of the edge-price. This model implicitly assumes coordination among coalitions of size two and the corresponding solution concept is therefore the pairwise Nash equilibrium, which can be understood as the minimal coalitional refinement of the pure Nash equilibrium. The authors of~\cite{CP05} show that this solution concept is equivalent to Meyerson's proper equilibrium~\cite{M97}, which implies guaranteed convergence if the agents repeatedly play best response strategies against \emph{perturbations} of the other players' strategies, where costly mistakes are made with less probability. We show 
in 
this section that these perturbations are necessary for achieving convergence by proving that the bilateral equal-split BG is not weakly acyclic in the \textsc{Sum}-version and that it admits best response cycles in the \textsc{Max}-version. Interestingly, the first result is stronger than the result for the \textsc{Sum}-(G)BG, which yields the counter-intuitive observation, that sharing the cost of edges can lead to worse dynamic behavior.  

We want to avoid that agents are forced to buy edges. Hence, we assume that the initial network of the network creation process is connected. This assumption may be removed by enforcing a penalty for every disconnected agent, as proposed by Fabrikant et al.~\cite{Fab03} and analyzed by Brandes et al.~\cite{BHN08}. 

The creation of an edge in bilateral equal-split BGs requires coordination of the two incident agents, whereas the deletion of an edge is a unilateral move of one agent. Formally, the edge-cost term in the cost function of an agent changes slightly, since in this version an agent has to pay half of the price for \emph{every} incident edge. 
Let $G$ be any network and let $N_G(u)$ be the set of neighbors of agent $u$ in $G$. Let $G'$ be the network induced by a strategy change of agent $u$ in network $G$ and let $N_{G'}(u)$ be the set of agent $u$'s neighbors in $G'$. Thus, agent $u$ changes her strategy from $N_G(u)$ to $N_{G'}(u)$. Such a strategy change is \emph{feasible} if and only if $c_G(v) \geq c_{G'}(v)$ for all $v \in N_{G'}(u) \setminus N_G(u)$. 
Hence, agent $u$ can perform a move from strategy $S_u$ to $S_u'$ if and only if in the network induced by $S_u'$ \emph{all} agents involved in the creation of \emph{new} edges selfishly agree to pay their cost-share. If this is not the case for some agent $x \in S_u'\setminus S_u$, then we say that agent $x$ \emph{blocks} agent $u$'s move from $S_u$ to $S_u'$.

\begin{theorem}\label{thm_sum_equal_split}
 The \textsc{Sum} bilateral equal-split Buy Game is not weakly acyclic.
\end{theorem}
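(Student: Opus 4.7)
The natural approach is to exhibit an initial network $G_0$ together with a finite collection $\mathcal{F}$ of networks containing $G_0$ with two properties: (i) from every $G \in \mathcal{F}$, every feasible improving move leads to another network in $\mathcal{F}$, and (ii) no $G \in \mathcal{F}$ is a pairwise Nash equilibrium. Condition (ii) guarantees that every $G \in \mathcal{F}$ has at least one unhappy agent, so the process never stops inside $\mathcal{F}$, while condition (i) means it can never leave $\mathcal{F}$ either. Taken together this shows that starting from $G_0$ \emph{no} sequence of improving moves terminates in a stable network, which is exactly the failure of weak acyclicity.

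A good starting point for the construction is the best response cycle used in the proof of Theorem~\ref{thm_buy} for the \textsc{Sum}-(G)BG, modified in two ways. First, each edge cost $\alpha$ is now split equally, so I would need to reparameterize $\alpha$ so that the moves along the cycle remain strictly improving for the moving agent under the halved edge share. Second, and more importantly, the unilateral cycle had escape moves (to other strategies yielding equal or strictly smaller cost) that must now be killed. In the bilateral setting this is accomplished by exploiting the consent rule: an ``escape'' edge-purchase $uv$ is blocked whenever $v$'s cost would strictly increase after the purchase. I would therefore pad the cycle networks with enough additional structure (typically paired leaves or auxiliary subtrees) so that each potential escape partner $v$ is a vertex whose cost strictly rises from any new incident edge, thereby vetoing the move.

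The proof would then proceed as follows. First, fix $G_0$ and the admissible range of $\alpha$. Second, describe $\mathcal{F}$ as the cyclic orbit of $G_0$ under the intended sequence of improving moves. Third, for each $G \in \mathcal{F}$ and each agent $u$, enumerate all candidate strategy changes $S_u \to S_u^\ast$: for any candidate that strictly decreases $c_G(u)$, either verify that it coincides with the intended cycle move (so it lands in $\mathcal{F}$), or show that the creation of some new edge $uv \in S_u^\ast \setminus N_G(u)$ would strictly increase $c_G(v)$ and is therefore blocked. Since deletions and pure in-neighborhood changes are unilateral, these are treated by direct cost calculation. Fourth, confirm that the designated mover in each $G \in \mathcal{F}$ is indeed unhappy and has a feasible improving move, so $\mathcal{F}$ contains no stable network.

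The main obstacle is step three: in the Buy Game an agent's new strategy can be an arbitrary subset of $V \setminus \{u\}$, so I must rule out not just single-edge swaps and additions but arbitrary multi-edge strategy changes. Using the halved edge cost this actually becomes slightly easier than in the unilateral version, because buying $k$ new edges costs $k\alpha/2$ plus the consent of $k$ partners, and by choosing $\alpha$ in a suitable window I can make even a single extra edge too expensive once the cycle-intended move has happened; simultaneously, the bilateral veto kills the remaining borderline multi-edge candidates. The art will be in tuning the network so that this bookkeeping closes for all agents in all states of $\mathcal{F}$ simultaneously, which is why a small, symmetric construction adapted from the unilateral cycle is the most promising route.
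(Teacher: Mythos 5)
Your overall strategy is exactly the one the paper uses: exhibit a finite family $\mathcal{F}$ of networks, closed under all feasible improving moves, none of which is stable, so that from the initial network no improving sequence can ever terminate. You also correctly identify the two mechanisms that make this work in the bilateral setting -- the consent rule vetoing any edge purchase that would strictly raise the new neighbor's cost, and a choice of $\alpha$ that makes extra edges unprofitable -- and you correctly note that deletions remain unilateral and must be handled by direct cost comparison. The one substantive caveat is that the entire content of this theorem is the explicit witness, and your proposal stops at the blueprint. Moreover, the paper's witness is not a padded version of the $7$-vertex unilateral cycle from Theorem~\ref{thm_buy} (which uses $7<\alpha<8$); it is a fresh $11$-vertex construction with six leaves and $10<\alpha<12$ (Fig.~\ref{fig:sum_equalsplit_br}), whose three-state cycle begins with two unilateral edge deletions, continues with bilateral edge purchases by any of three unhappy agents, and closes with a four-edge strategy change -- and several states have more than one unhappy agent, so $\mathcal{F}$ must be closed under \emph{all} of their improving moves, not just a single intended trajectory. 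Whether a direct modification of the unilateral cycle can be made to close under all this bookkeeping is precisely the part you have not shown, so as written the argument is a plausible plan rather than a proof.
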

\begin{proof}{Theorem~\ref{thm_sum_equal_split}}
 We prove the statement by giving a cyclic sequence of networks $G_0,\dots,G_2$, where all unhappy agents in network $G_i$ only have feasible improving moves which lead to a network which is isomorphic to network $G_{i+1 \bmod 3}$. It follows that starting from network $G_0$ as initial network \emph{no} sequence of improving moves leads to a network where all agents are happy. 
 \begin{figure}[!h]
  \centering
  \includegraphics[width=13cm]{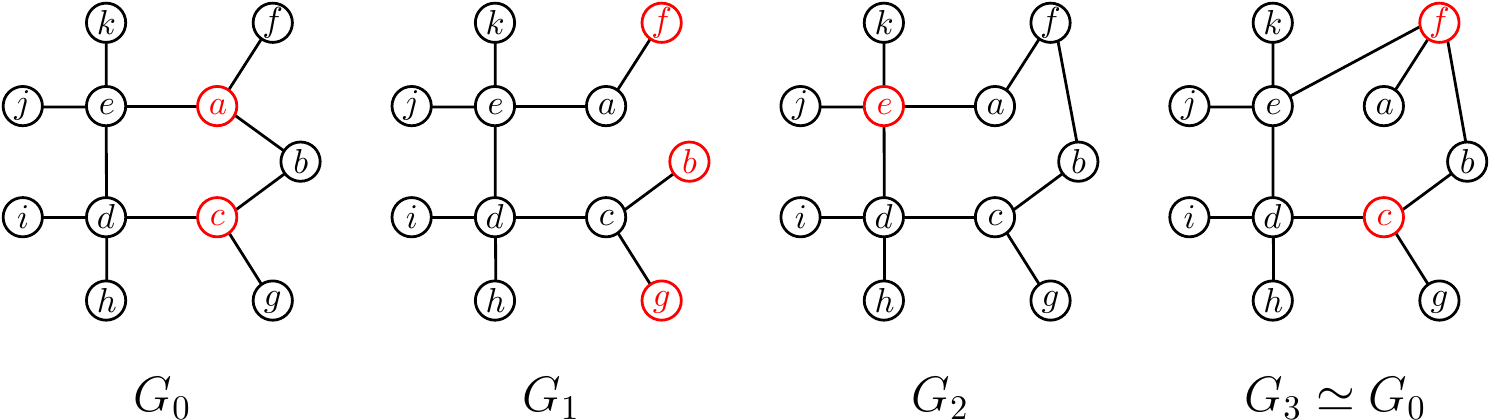}
  \caption{The cyclic sequence of networks $G_0,G_1,G_2$ for $10 < \alpha < 12$ for the \textsc{Sum} bilateral equal-split Buy Game. Unhappy agents are highlighted.}
  \label{fig:sum_equalsplit_br}
\end{figure}
 
 \noindent Now we analyze the states of the cycle. The networks $G_0,G_1$ and $G_2$ are shown in Fig.~\ref{fig:sum_equalsplit_br}. We assume that $10<\alpha<12$ holds. 
 
 \textbf{Network $G_0$:} We show that only agent $a$ and $c$ are unhappy in $G_0$ and that their unique improving move is the removal of their edge towards agent $b$, respectively. After this move, we obtain a network which is isomorphic to network $G_1$.
 
 Consider the leaf agents $f,g,h,i,j$ and $k$. Clearly, none of them can delete an edge since this would disconnect the network. Next, swapping their unique edge involves another agent who must be willing to accept another leaf neighbor, but, since $\tfrac{\alpha}{2}>1$, no agent will do this. It follows, that a leaf agent must buy at least two edges if she wants to improve. Among those edges, there cannot be edges towards other leaf agents. This is true because the network $G_0$ has diameter $4$ and no vertex has more than two neighboring leaves. At best, such an edge can decrease the distance towards the other leaf by $3$, towards the neighbor of this leaf by $1$ and possibly towards the other leaf connected to this neighbor by $1$, which yields at most a distance decrease of $5 < \tfrac{\alpha}{2}$. All non-leaf agents have distance at most $3$ towards any leaf agent and distance at most $2$ towards any non-leaf agent. It follows that no non-leaf vertex will accept any edge offered by a leaf agent, 
independently of which other edges this leaf agent buys. This implies that no leaf agent can perform any strategy change to decrease her cost. 
 
 Now we focus on agent $d$ and $e$. By symmetry, both agents face the same situation and therefore we restrict our attention to agent $d$. Her cost in $G_0$ is $4\tfrac{\alpha}{2}+17$. Clearly, agent $d$ cannot remove her edge towards $h$ or $i$ and not both edges to $c$ and $e$, since this would disconnect the network. Removing edge $dc$ or $de$ alone increases her distance-cost by $7$ or $14$, respectively. Since $\tfrac{\alpha}{2} < 7$, these moves are not improving for agent $d$. Thus, agent $d$'s best possible strategy buys at least three edges including the edges $dh$ and $di$. Among all strategies of agent $d$ which buy exactly three edges, the strategy $\{a,h,i\}$ is optimal, since $a$ has minimum distance-cost in the network $G_0 - \{d,i,h\}$. Note, that the strategy $\{a,h,i\}$ is better than agent $d$'s current strategy $\{c,e,h,i\}$ and it is the only strategy involving three edges with this property. 
 However, the move from $\{c,e,h,i\}$ to $\{a,h,i\}$ will be blocked by agent $a$, since she currently 
has cost $3\tfrac{\alpha}{2} + 20$ and with agent $d$'s new strategy her cost increases to $4\tfrac{\alpha}{2} + 17$. Agent $d$'s current strategy is best possible among all strategies which buy four edges. This is easy to see, since the edges $dh$ and $di$ are forced and the other two edges should connect to the vertices of a $2$-median-set in the graph $G_0-\{d,h,i\}$. There are two such sets: $\{c,e\}$ and $\{b,e\}$. Thus, the strategy $\{b,e,h,i\}$ yields the same cost as strategy $\{c,e,h,i\}$ for agent~$d$. We claim that no strategy buying more than four edges can outperform agent $d$'s current strategy. Any such strategy buys at least five edges and no edge will connect to a leaf of $G_0$. The strategy $\{a,c,e,h,i\}$ is the best strategy using five edges and yields cost $5\tfrac{\alpha}{2} + 15$. With six edges there is only one strategy which yields cost of $6\tfrac{\alpha}{2} + 14$. Thus, agents $d$ and $e$ cannot perform an improving move in $G_0$.
 
 Next, we show that agent $b$ is happy in network $G_0$. Clearly, agent $b$ has to buy at least one edge, since otherwise the network is disconnected. Among all these strategies, the strategies $\{d\}$ and $\{e\}$, which both yield cost $\tfrac{\alpha}{2} + 25$, are optimal. Note, that both outperform agent $b$'s current strategy $\{a,c\}$ having cost $2\tfrac{\alpha}{2} + 22$. However, in both cases, agent $b$ cannot change towards the new strategy, since the respective new neighbor will block this change. Consider $b$'s strategy change from $\{a,c\}$ to $\{d\}$. Before the change, agent $d$ has cost $4\tfrac{\alpha}{2}+ 17$, after the change $d$'s cost is $5\tfrac{\alpha}{2} + 16$, which is strictly larger. Note that no other strategy of agent $b$ which buys exactly one edge outperforms her current strategy. By deleting her edge to $a$ or $c$ her distance-cost increases by $11$. Furthermore, connecting to a leaf is clearly worse than agent $b$'s current strategy. By the same reasoning as above, agents $d$ 
and $e$ will block every new strategy of $b$ which connects to them. Since connecting to a leaf is clearly sub-optimal, we are left with $b$'s current strategy $\{a,c\}$. Thus, agent $b$ cannot perform an improving move. 
 
 Agents $a$ and $c$ are unhappy and, by symmetry, we will focus on agent $a$, who has cost $3\tfrac{\alpha}{2}+20$ in $G_0$. Clearly, all optimal strategies of agent $a$ must buy the edge towards $f$ and at least one additional edge. The best possible strategy using two edges is $\{d,f\}$, since $d$ is a $1$-median vertex of the graph $G_0 - \{a,f\}$. But, agent $e$ would block this strategy, since this strategy-change yields cost $5\tfrac{\alpha}{2} + 15 > 4\tfrac{\alpha}{2}+17$ for agent $e$. There are two other strategies using two edges and which outperform agent $a$'s current strategy. These strategies are $\{c,f\}$ and $\{e,f\}$. Moving from $\{b,e,f\}$ to $\{c,f\}$ is not possible for agent $a$, since this move would be blocked by agent $c$, whose cost would change from $3\tfrac{\alpha}{2}+20$ to $4\tfrac{\alpha}{2}+18$. The move from $\{b,e,f\}$ to $\{e,f\}$ is possible, since the move only consists of the deletion of an edge, which is a unilateral move. 
 This move decreases agent $a$'s cost from $3\tfrac{\alpha}{2}+20$ to $2\tfrac{\alpha}{2} + 25$, which is indeed an improvement since $\tfrac{\alpha}{2}>5$. Now we are left to show, that agent $a$ has no other improving moves. Clearly, since connecting towards leaf agents cannot outperform $a$'s current strategy, we can ignore all other strategies which buy two edges. For all larger strategies the same holds true. We have already shown that agent $c$ blocks any connection attempts of agent $a$. By analogous reasoning, the same is true for agents $d$ and $e$. It follows, that there are no other possible improving moves for agent $a$. Hence, we have that $a$'s move from $\{b,e,f\}$ to $\{e,f\}$ is her only improving move. By symmetry the same holds for agent $c$'s move from $\{b,d,g\}$ to $\{d,g\}$. In both cases we obtain a network which is isomorphic to network $G_1$.
 
 \textbf{Network $G_1$:} We show that in network $G_1$ only the agents $b,f,g$ are unhappy and that any improving move of one of those agents leads to a network which is isomorphic to network~$G_2$. 
 
 Analogous to the discussion above, we have that no non-leaf agent is willing to accept an edge from agents $h,i,j$ or $k$ and that any edge towards another leaf agent does not yield a distance decrease which is high enough to compensate the edge-price. Hence, these agents cannot perform a strategy-change. 
 
 Agents $d$ and $e$ are in a very similar situation than in $G_0$ and, since they cannot buy less edges, it is easy to see, that they cannot perform an improving move. Agent $a$ is happy, since she has just performed her only improving move which transformed network $G_0$ into network $G_1$. If $a$ would have another improving move, then this would contradict the uniqueness of her move in $G_0$. 
 
 Agent $c$ cannot delete her edges to $b$ and $g$ and has to buy at least three edges. Her best possible strategy using three edges would connect to $e$ instead of $d$, but $c$ cannot move towards this strategy, since agent $e$ would block it. Connecting to $a$ instead of $d$ is clearly worse. Any strategy which uses four edges cannot outperform agent $c$ current strategy. 
 Since agent $e$ refuses an edge from $c$, there are only agent $a$ or leaf agents left. Again, connecting to leaf agents yields a cost increase. Connecting to agent $a$ yields a decrease in distance-cost of $4$ but, since $\tfrac{\alpha}{2} > 4$, it follows that the strategy $\{a,b,d,g\}$ is worse than agent $c$'s current strategy. More than $4$ edges would involve an edge towards a leaf agent and therefore we conclude that agent~$c$ cannot improve on her current strategy $\{b,d,g\}$. 

We are left with agents $b,f$ and $g$. Agents $b$ and $g$ face a similar situation an we will focus on agent $b$ having cost $\tfrac{\alpha}{2} + 33$. Clearly, agent $b$ has to buy at least one edge and, among all strategies using exactly one edge, agent $b$'s current strategy $\{c\}$ is the only possible strategy, since all other vertices would block an edge-swap from agent $b$ to them. Hence, we consider possible strategies of agent $b$, which buy at least two edges. Agents $d$ and $e$ will refuse any edge from $b$, independently of the number of other edges bought by agent $b$. Thus, $d$ and $e$ cannot be involved in any feasible improving strategy of agent $b$. Next, by buying an edge towards $h,i,j$ or $k$ agent $b$ cannot decrease her distance-cost more than $4 < \tfrac{\alpha}{2}$, which rules out that $h,i,j$ or $k$ are involved in an improving strategy. We are left with vertices $a,c,f$ and $g$ as possible targets for edges from $b$ and we have that $b$'s strategy has to choose at least two of them. 
With these restrictions it is easy to see, that buying at least three edges is too expensive for agent $b$. Hence, we focus on all possible strategies using exactly two edges. Clearly, strategy $\{a,c\}$ is the best possible among them, but it requires that agents $a$ and $c$ are willing to accept $b$'s edge. For agent $c$ this is true, since in $G_1$ this edge is already present and because $bc$ is a bridge. Unfortunately for $b$, agent $a$ will block the move from $\{c\}$ to $\{a,c\}$, since $a$ was the agent who unilaterally decided to remove the edge towards $b$ in network $G_0$. The same is true for a move towards strategy $\{a,g\}$. We are left with the two strategies $\{c,f\}$ and $\{f,g\}$. After moving to strategy $\{f,g\}$ agent $b$ has cost $2\tfrac{\alpha}{2}+28$, which, since $\tfrac{\alpha}{2}>5$, is higher than agent $b$'s current cost of $\tfrac{\alpha}{2}+33$. 
Finally, the move towards strategy $\{c,f\}$ will be an improving move, since this strategy yields cost $2\tfrac{\alpha}{2}+25 < \tfrac{\alpha}{2}+33$. Furthermore, the move from $\{c\}$ to $\{c,f\}$ will not be blocked by agent $f$, since this move decreases agent $f$'s cost from $\tfrac{\alpha}{2} + 34$ to $2\tfrac{\alpha}{2} + 26$. Moreover, the move will not be blocked by agent $c$, since the move strictly decreases agent $c$'s cost. Thus, we have that the move from $\{c\}$ to $\{c,f\}$ is agent $b$'s unique feasible improving move. By symmetry, the same is true for agent $g$'s move from strategy $\{c\}$ to strategy $\{c,f\}$. 

Now we consider agent $f$. By analogous reasoning there is no feasible improving move for agent $f$ towards a strategy which uses exactly one or more than two edges. We are left with analyzing all possible strategies using exactly two edges. Similar to the situation of agent $b$, only the agents $a,b,c$ and $g$ are possible targets for $f$'s edges. The strategy $\{a,c\}$ looks the most promising for agent $f$, but it will be blocked by agent $c$, since $c$'s decrease in distance-cost caused by this move is only $4<\tfrac{\alpha}{2}$. The same is true for strategy $\{b,c\}$ or $\{g,c\}$. Moreover, strategy $\{b,g\}$ yields higher cost then $f$'s current strategy. Hence, we are left with the strategies $\{a,b\}$ and $\{a,g\}$. Both yield cost $2\tfrac{\alpha}{2} + 26 < \tfrac{\alpha}{2} + 33$, since $\tfrac{\alpha}{2} < 7$. A move from $\{a\}$ to $\{a,b\}$ or $\{a,g\}$ is feasible, since agent $b$'s (or $g$'s) cost changes from $\tfrac{\alpha}{2} + 31$ to $2\tfrac{\alpha}{2} + 25$, which is a strict cost 
decrease since $\frac{\alpha}{2}< 6$. Furthermore, it is easy to see that agent $a$'s cost strictly decreases if $f$ moves to strategy $\{a,b\}$ or $\{a,g\}$, which implies that $a$ will not block such a move. 

Observe, that all possible improving moves by agents $b,f$ or $g$ lead to a network which is isomorphic to~$G_2$.

\textbf{Network $G_2$:} We show that in network $G_2$ only agent $e$ is unhappy and that her unique feasible improving move leads to a network which is isomorphic to network $G_0$.

We consider the leaf agents $g,h,i,j$ and $k$ first. Clearly, they cannot delete their unique edge and no agent would accept an edge if a leaf agent performs an edge swap. Thus, they must buy at least two edges if they want to outperform their current strategy. Analogous to the discussion above, no edge towards another leaf agent can be part of an improving strategy. Thus, it remains to show that no strategy change towards a combination of at least two edges to non-leaf agents is a feasible improving move for $g,h,i,j$ or $k$. The agents $d$ and $e$ will not accept any new edge from a leaf agent since they have only one non-leaf agent in distance $3$ and any leaf agent is in distance at most $3$. It follows that by accepting such an edge agent $d$ or $e$ can only hope for a distance decrease of $3<\tfrac{\alpha}{2}$. The same is true for agents $a$ and $c$. Agent $a$ has one non-leaf agent in distance $3$ and one leaf-agent, which is $g$, in distance $4$. Thus, by accepting an edge coming from a leaf agent, 
agent $a$'s distance-cost can decrease by at most $4<\tfrac{\alpha}{2}$. For agent $c$ the situation is similar, but $c$ does not have a leaf agent in distance $4$, which implies that by accepting an edge from a leaf agent only a distance decrease of $3$ is possible. We are left with agents $b$ and $f$, which both have one non-leaf agent in distance $3$ and two leaf agents in distance $4$. Hence, by accepting an edge from a leaf agent they can possibly reduce their distance towards this leaf by $3$, towards its neighbor by $1$ and towards the other leaf in distance $4$ by $1$. In total the best possible distance decrease for agents $b$ or $f$ is $5 < \tfrac{\alpha}{2}$. It follows, that no leaf agent can perform a feasible improving move.

Next, we show that agent $d$ cannot improve on her current strategy $\{c,e,h,i\}$ which yields cost of $4\tfrac{\alpha}{2} + 17$. Clearly, agent $d$ must buy the edges towards $h$ and $i$ and at least one more edge to connect to the network $G_2 - \{d,h,i\}$. The best possible strategy using three edges connects to a $1$-median vertex of $G_2 - \{d,h,i\}$. There are two such vertices: $a$ and $f$. Both strategies $\{a,h,i\}$ and $\{f,h,i\}$ yield cost $3\tfrac{\alpha}{2} + 25$ which is higher than agent $d$'s current cost. Hence, no strategy using exactly three edges can outperform $d$'s current strategy. The optimal strategy using $4$ edges must connect towards the vertices which form a $2$-median-set in the graph $G_2 - \{d,h,i\}$. The $2$-median-problem in $G_2-\{d,h,i\}$ has two solutions: $\{c,e\}$, $\{b,e\}$. Thus, agent $d$'s current strategy is an optimal strategy using four edges. Now let us look at possible strategies for agent $d$ which use more than four edges. The best strategy using five edges 
is $\{c,
e,f,h,i\}$ and yields cost $5\tfrac{\alpha}{2} + 15 > 4\tfrac{\alpha}{2} + 17$. Clearly, the best strategy using six edges is worse. Thus, agent $d$ cannot perform any improving move.

Now, let us consider agent $c$ having cost $3\tfrac{\alpha}{2}+20$ with her current strategy $\{b,d,g\}$. Agent $c$ must buy the edge towards $g$ and at least one more edge to ensure connectedness of the network. Her best possible strategy using exactly two edges is $\{e,g\}$, since $g$ is the unique $1$-median-vertex of $G_2-\{c,g\}$. However, the move from $\{b,d,g\}$ towards $\{e,g\}$ will be blocked by agent $e$ since this move increases her cost from $4\tfrac{\alpha}{2}+18$ to $5\tfrac{\alpha}{2}+16$. The two second best strategies using two edges are $\{a,g\}$ and $\{d,g\}$ which both yield cost $2\tfrac{\alpha}{2}+26 > 3\tfrac{\alpha}{2}+20$ since $\tfrac{\alpha}{2}<6$. Hence, no strategy using two edges can be a feasible improving strategy for agent $c$. For all other strategies using more than two edges we have that they cannot contain an edge towards $e$. This is true since in $G_2$ agent $e$ only has two agents in distance $3$, which implies that by accepting an edge from $c$ agent $e$ can only 
hope for a distance decrease by $3<\tfrac{\alpha}{2}$. It follows that $c$'s best possible strategy using three edges must connect to agent $d$ and $g$. The best choice for the third edge is agent $f$, but the move from $\{b,d,g\}$ to $\{d,f,g\}$ will be blocked by agent $f$ since this move changes her cost from $2\tfrac{\alpha}{2} + 26$ to $3\tfrac{\alpha}{2} + 21$ which is a strict cost increase, since $5<\tfrac{\alpha}{2}$. Clearly, the third edge cannot connect to a leaf agent of $G_2$. Hence, we are left with $c$'s current strategy and the strategy $\{a,d,g\}$, which yield the same cost. It follows, that $c$'s current strategy is the best possible feasible strategy using three edges. If agent $c$ would buy more than three edges, then, since $e$ is not available and leaf agents are not attractive as well, the best such strategy connects to $d$ and $g$ and chooses two targets from the set $\{a,b,f\}$. It is easy to see that such a strategy yields higher cost than $c$'s current strategy. Moreover, if $c$ 
buys 
more than four edges, then the situation gets even worse. Thus, we have that agent $c$ cannot perform a feasible improving move.

Agent $b$ has cost $2\tfrac{\alpha}{2}+25$ in network $G_2$. Agent $b$ cannot move towards a strategy using exactly one edge, since the removal of the edge $bf$ increases $b$'s distance-cost by $6 > \tfrac{\alpha}{2}$ and removing the edge $bc$ yields an increase in distance-cost by $16$. Furthermore, no other agent than $c$ or $f$ would accept an edge from agent $b$ if this edge is $b$'s unique edge.  Thus, we have that agent $b$ has to buy at least two edges to outperform her current strategy. Note, that agents $d$ and $e$ will refuse to accept any edge offered by agent $b$, since they have at most two agents in distance $3$ an can only hope for a distance decrease by $3$ from such an edge. By buying an edge towards a leaf agent, agent $b$ can only hope for a distance decrease by $5<\tfrac{\alpha}{2}$, since $b$ has two agents in distance $4$ and their common neighbor is in distance $3$. Agent $a$ will refuse an edge from $b$, since this edge must yield a distance decrease for $a$ by at least $6$ but this 
is only possible if $b$ simultaneously buys edges towards $c,g,h$ and $i$, which clearly is not an improving strategy for agent $b$. Hence, agent $b$ only has agents $c$ and $f$ available as targets and connecting to both is $b$'s current strategy. It follows that agent $b$ has no feasible improving move. 

Agent $f$ has cost $2\tfrac{\alpha}{2} + 26$. We first show that $f$ has no feasible improving strategy using one edge. Removing edge $fa$ or $fb$ increases $f$'s distance-cost by $11$ or $6$, respectively. Hence, no such removal yields a cost decrease. Furthermore, it is easy to see that no other agent than $a$ and $b$ would accept an edge from $f$ if $f$ buys no other edges. By an analogous argument as for agent $b$, it follows that no edge towards a leaf agent is beneficial for $f$ and that no non-leaf agent other than $a$ and $b$ would accept an edge from $f$. Thus, agent $f$ cannot perform any improving strategy change. 

Next, we show that agent $a$, having cost $2\tfrac{\alpha}{2}+23$, cannot move towards an improving strategy. No move towards a strategy using one edge can be feasible and yield a cost decrease. Removing the edge $ae$ or $af$ yields an increase in distance-cost by $18$ or $6$, respectively, which implies that agent $a$ would not improve. Moreover, no agent other than $e$ and $f$ would accept an edge from $a$ if $a$ buys no other edges. For all strategies which buy more than one edge, we have that agent $d$ would refuse to accept any edge coming from $a$, since $d$ only has one agent in distance $3$ and this implies that $d$ could only gain $2$ in distance-cost. Moreover, no edge towards a leaf agent can be part of an improving strategy for $a$, since $a$ has only one leaf in distance $4$ and could decrease her distance-cost by at most $4$ by such an edge. Hence, all edges of $a$ must connect to vertices $b,c,e$ or $f$ and it is obvious, that in $a$'s best possible strategy the edge towards $e$ should by 
contained. With this restriction it follows that $a$'s best possible strategy must use exactly two edges, since the strategies $\{b,c,e\}$, $\{c,e,f\}$, $\{b,e,f\}$ and $\{b,c,e,f\}$ are clearly more expensive than $a$'s current strategy. The strategy $\{c,e\}$ outperforms $a$'s current strategy, but the move from $\{e,f\}$ to $\{c,e\}$ will be blocked by agent $c$ since this move increases her cost from $3\tfrac{\alpha}{2}+20$ to $4\tfrac{\alpha}{2}+18$. Analogously, agent $a$'s move from $\{e,f\}$ to $\{b,e\}$ will be blocked by agent $b$ since this move increases her cost from $2\tfrac{\alpha}{2}+25$ to $3\tfrac{\alpha}{2}+21$. Thus, we have that agent $a$ cannot perform a feasible improving move. 

Finally, we show that agent $e$, having cost $4\tfrac{\alpha}{2}+18$, is unhappy in $G_2$. Clearly, agent $e$ must buy the edges towards $j$ and $k$ and at least one additional edge. Since $c$ is the unique $1$-median vertex in $G_2-\{e,j,k\}$, we have that $\{c,j,k\}$ is agent $e$'s best possible strategy which buys three edges. This strategy outperforms $e$'s current strategy, but the move from $\{a,d,j,k\}$ to $\{c,j,k\}$ will be blocked by agent $c$ since this move increases her cost from $3\tfrac{\alpha}{2}+20$ to $4\tfrac{\alpha}{2}+17$. The second best strategies using exactly three edges, $\{d,j,k\}$ and $\{b,j,k\}$, both yield cost $3\tfrac{\alpha}{2}+24$ for agent $e$. Since $\tfrac{\alpha}{2}<6$, this implies that these strategies yield higher cost for agent $e$. It follows that no move to a strategy with three edges can be feasible and improving for agent $e$. The best possible strategy using four edges must connect to the vertices in a $2$-median-set of $G_2-\{e,j,k\}$. This set is $\{d,f\}$ and 
it is 
unique, hence we have that $\{d,f,j,k\}$ is agent $e$'s best possible strategy using four edges. Note that this strategy yields cost $4\tfrac{\alpha}{2}+17$ which implies that it outperforms agent $e$'s current strategy. Furthermore, the move from $\{a,d,j,k\}$ towards $\{d,f,j,k\}$ is feasible, since this move decreases agent $f$'s cost from $2\tfrac{\alpha}{2}+26$ to $3\tfrac{\alpha}{2}+20$. This is indeed a strict decrease, since $\tfrac{\alpha}{2}<6$. We have found a feasible improving strategy change for agent $e$. In the following we will show that this is the only feasible improving strategy change. The second best strategies using exactly four edges are $\{a,d,j,k\}$, which is $e$'s current strategy and $\{b,d,j,k\}$. Both yield the same cost for agent $e$. It follows, that there are no other possible improving strategies using four edges. Strategies using more than four edges cannot outperform agent $e$'s current strategy. This can be seen as follows. With $e$'s current strategy we have that there 
are 
two agents in distance $3$ and both do not share a common neighbor. The best possible situation with more than four edges would be to have five vertices in distance $1$ and the rest in distance $2$. This yields a cost of $5\tfrac{\alpha}{2}+15 > 4\tfrac{\alpha}{2}+18$. Thus, we have shown that agent $e$ can perform exactly one feasible improving strategy change and this move transforms network $G_2$ into a network (labeled $G_3$ in Fig.~\ref{fig:sum_equalsplit_br}) which is isomorphic to $G_0$.
\end{proof}

For the \textsc{Max}-version, we can show a slightly weaker result.
\begin{theorem}\label{thm_max_equal_split}
 The \textsc{Max} bilateral equal-split Buy Game admits best response cycles.
\end{theorem}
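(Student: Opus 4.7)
The plan is to mimic the explicit-construction strategy used for Theorem~\ref{thm_sum_equal_split} and Theorem~\ref{thm_buy}, but adapted to the coarser, integer-valued structure of the \textsc{Max} distance cost. I would exhibit a small network $G_1$ and a sequence of networks $G_1, G_2, \ldots, G_k$ such that $G_{k+1 \bmod k}$ is obtained from $G_i$ by a best response move of exactly one agent, for a carefully chosen range of $\alpha$. In the \textsc{Max} setting individual cost changes come in jumps of size $1$ per edge on a longest path, so I expect that a suitable construction works for a small non-integer window such as $1 < \alpha < 2$, analogous to the Max-(G)BG cycle in Fig.~\ref{fig:maxnashbr}.

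Concretely, I would try to recycle the topology from the proof of Theorem~\ref{thm_buy} for the \textsc{Max}-(G)BG (the four-state cycle with agents buying and deleting a single edge $ga$ and $ea$). The cost-sharing rule changes the edge-cost term from $\alpha$ to $\alpha/2$ for each incident edge, so I would first rescale the critical inequalities and check that the key comparisons (distance-cost drops of $1$ or $2$ against an edge share of $\alpha/2$) still force the same direction of play. The new wrinkle is that an edge purchase now requires consent of the other endpoint. In the original cycle, the partners of $g$ and $e$ in the purchased edges are ``central'' vertices whose eccentricity actually drops by the addition of the edge; I would verify that in each purchase step the prospective partner strictly decreases (or at least does not increase) her maximum distance so the move is feasible. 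For deletions, which are still unilateral, no consent check is needed.

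The verification would then proceed step-by-step through $G_1, \ldots, G_k$ exactly as in the proofs of Theorem~\ref{thm_sum_equal_split} and Theorem~\ref{thm_buy}: for each state, identify every unhappy agent, enumerate strategies by cardinality $|S_u^*|$ (keeping only $0$, $1$, $2$, and possibly $3$ bought edges since $\alpha/2 > 1$ still makes large purchases unattractive), rule out all non-indicated strategies either by cost or by a blocking agent, and confirm that the claimed move is the unique best response. Because the construction needs to be a best response cycle but not necessarily induced by a single unhappy agent per step, I have some freedom: if more than one agent is unhappy in some $G_i$, I need only verify that the choice I pick is a best response and that the resulting network is the next state of the cycle.

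The main obstacle I anticipate is the blocking condition. In the unilateral \textsc{Max}-(G)BG proof, the moving agent only needed the purchase to be profitable for herself; here a purchase is available only if the partner weakly gains, which can destroy cycles that work in the unilateral model. I expect to have to enlarge the gadget slightly (e.g., attaching a few pendant leaves to the intended purchase partner) so that gaining one new neighbor strictly shortens that partner's longest path and hence makes her willing to accept the edge. A secondary difficulty is making sure no non-moving agent in any $G_i$ has a feasible improving strategy that would escape the cycle; this requires checking that every candidate purchase by a spectator agent is blocked or dominated, which, as in Theorem~\ref{thm_sum_equal_split}, accounts for most of the case analysis.
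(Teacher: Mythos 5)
Your plan is aimed in exactly the right direction -- the paper's proof is precisely an explicit four-state best response cycle with an $\alpha$-window chosen so that the per-agent edge share lies strictly between two consecutive integers ($2<\alpha<4$, i.e.\ $1<\tfrac{\alpha}{2}<2$), verified state by state with the blocking condition doing most of the work. You also correctly identify the one place where the unilateral \textsc{Max}-(G)BG gadget breaks under cost-sharing: a prospective purchase partner only consents if her eccentricity drops by at least $2$ (since $\tfrac{\alpha}{2}>1$ and eccentricities are integers), and in the Fig.~\ref{fig:maxnashbr} gadget the central partner $a$ gains too little, so the direct rescaling you start from would have its key purchases blocked.

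The genuine gap is that the proof of an existence statement of this kind \emph{is} the gadget, and you never produce one; you only say you ``expect to have to enlarge the gadget slightly.'' That enlargement is where all the difficulty sits, and the paper does not in fact obtain its cycle by decorating the \textsc{Max}-(G)BG network: it uses a different eight-vertex network (Fig.~\ref{fig:max_equalsplit_br}) with a buy--delete--delete--buy pattern (agent $a$ buys $ae$, agent $c$ deletes $cd$, agent $e$ deletes $ea$, agent $c$ buys $cd$) rather than the buy--buy--delete--delete pattern of Fig.~\ref{fig:maxnashbr}, and the purchase partners are arranged so that accepting the offered edge drops their eccentricity from $4$ to $2$. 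Until you exhibit a concrete network and carry out the per-state enumeration (including verifying that every spectator's candidate improvement is blocked or unprofitable, and that the two consent checks go through), the argument establishes only that such a construction is plausible, not that it exists.
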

\begin{proof}{Theorem~\ref{thm_max_equal_split}}
The four steps $G_1,\dots, G_4$ of the best response cycle are depicted in Fig.~\ref{fig:max_equalsplit_br}. We assume that $2 < \alpha < 4$ holds.
 \begin{figure}[!h]
  \centering
  \includegraphics[width=12cm]{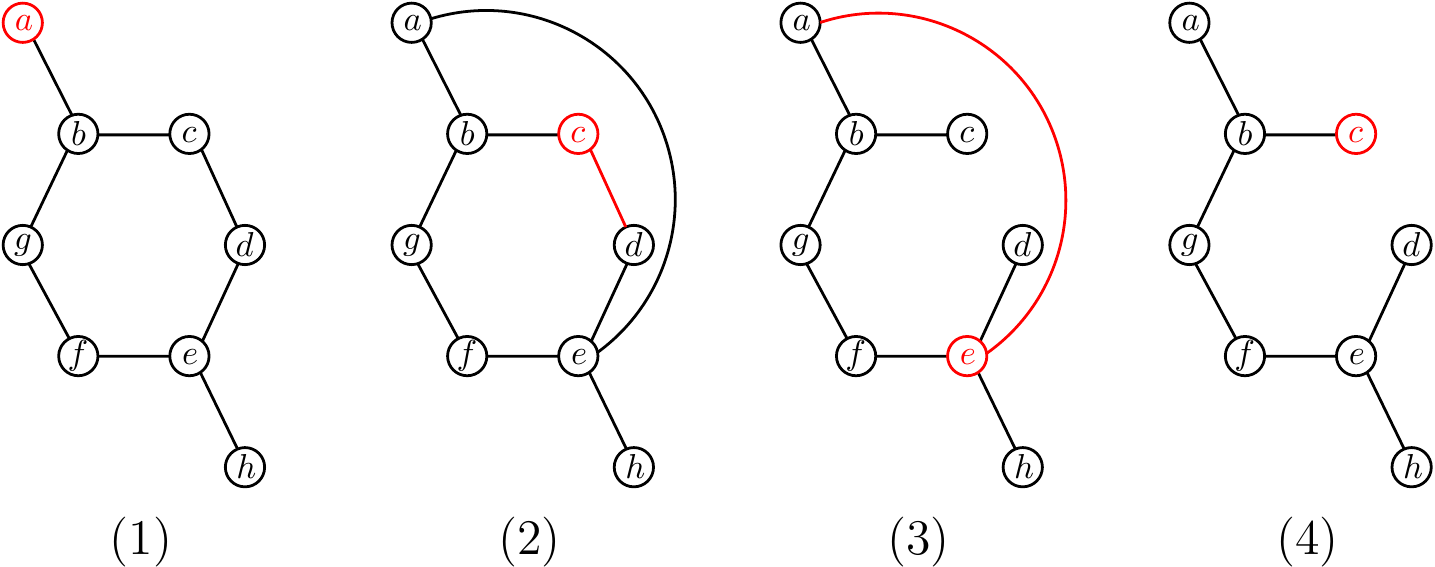}
  \caption{The steps of a best response cycle for $2 < \alpha < 4$ for the \textsc{Max} bilateral equal-split Buy Game.}
  \label{fig:max_equalsplit_br}
\end{figure}

\noindent In network $G_1$, shown in Fig.~\ref{fig:max_equalsplit_br}~(1), we claim that agent $a$, having cost $\tfrac{\alpha}{2}+5$ is unhappy with her situation and that buying the edge $ae$ is the best possible feasible strategy change for her. With strategy $\{b,e\}$ agent $a$ has cost $2\tfrac{\alpha}{2}+2 < \tfrac{\alpha}{2}+5$, since $\tfrac{\alpha}{2}<2$. Furthermore, agent $a$'s move from $\{b\}$ to $\{b,e\}$ strictly decreases agent $e$'s cost from $3\tfrac{\alpha}{2}+4$ to $4\tfrac{\alpha}{2}+2$.

Observe, that agent $a$ cannot remove any edge. By swapping her unique edge, agent $a$ can possibly achieve distance-cost of $4$, but we have $2\tfrac{\alpha}{2}+2 < \tfrac{\alpha}{2}+4$. Thus, no strategy using one edge can outperform her move from $\{b\}$ to $\{b,e\}$. Note, that by buying less than seven edges, the best possible distance-cost agent $a$ can hope for is $2$. Since the strategy $\{b,e\}$ already achieves this, it follows that agent $a$'s move from $\{b\}$ to $\{b,e\}$ is indeed a best possible strategy change. This move transforms network $G_1$ into network $G_2$.

In network $G_2$, shown in Fig.~\ref{fig:max_equalsplit_br}~(2), we claim that agent $c$'s best possible feasible strategy change is the removal of edge $cd$. By removing this edge, agent $c$'s cost changes from $2\tfrac{\alpha}{2}+3$ to $\tfrac{\alpha}{2}+4$, which is a strict decrease since $\tfrac{\alpha}{2}>1$. Among all strategies which buy one edge, the strategy $\{e\}$ would be optimal for agent $c$ since this is the only strategy which yields distance-cost of $3$. However, a move from $\{b,d\}$ to $\{e\}$ will be blocked by agent $e$, whose cost changes from $4\tfrac{\alpha}{2}+2$ to $5\tfrac{\alpha}{2}+2$. It follows that distance-cost of $4$ is best possible if agent $c$ buys only one edge. Among all strategies using two edges, the strategy $\{b,e\}$ is the only one which yields distance-cost $2$. But, as we have already seen, agent $e$ would block agent $c$'s move from $\{b,d\}$ to $\{b,e\}$. It follows that $c$'s current strategy is the best possible among all strategies which buy two edges. By 
buying more than two and less than seven edges, agent $c$ can only hope for distance-cost $2$, but even with three edges, this yields higher cost than her current strategy. Buying seven edges is clearly too expensive. Hence, we have that the removal of edge $cd$ is a feasible and best possible strategy change for agent $c$ and this change transforms $G_2$ into $G_3$.

Agent $e$ is unhappy in network $G_3$, shown in Fig.~\ref{fig:max_equalsplit_br}~(3). We show that $e$'s best possible feasible strategy change is the removal of edge $ea$. This move decreases agent $e$'s cost from $4\tfrac{\alpha}{2}+3$ to $3\tfrac{\alpha}{2}+4$, which is a strict decrease since $1<\tfrac{\alpha}{2}$. In any improving strategy agent $e$ must buy an edge towards $d$ and $h$ and at least one additional edge. Clearly, agent $e$'s best possible strategies using three edges is $\{b,d,h\}$ and $\{d,g,h\}$, since $b$ and $g$ is are the $1$-center vertices of $G_3-\{d,e,h\}$. Both strategies outperform agent $e$'s removal of $ea$, but a move from $\{a,d,f,h\}$ to $\{b,d,h\}$ will be blocked by agent $b$ whose cost will be increased from $3\tfrac{\alpha}{2}+3$ to $4\tfrac{\alpha}{2}+2$ and a move from $\{a,d,f,h\}$ to $\{d,g,h\}$ will be blocked by agent $g$, since her cost increases from $2\tfrac{\alpha}{2}+3$ to $3\tfrac{\alpha}{2}+2$. Furthermore, strategies $\{b,d,h\}$ and $\{d,g,h\}$ are the 
only strategies using three edges, which yield a distance-cost of $3$ for agent $e$. The same reasoning applies for strategies which buy more than three edges. In this case agents $b$ and $g$ would refuse to accept any edge from agent $e$, since they can only hope for a distance-cost of $2$, which is not low enough to compensate for the additional edge-cost. It follows, that agent $e$'s move from $\{a,d,f,h\}$ to $\{d,f,h\}$ is her best possible feasible strategy change. This move transforms network $G_3$ into network $G_4$.

In network $G_4$, shown in Fig.~\ref{fig:max_equalsplit_br}~(4), we claim that agent $c$ is unhappy and that her best possible feasible strategy change is the move from strategy $\{b\}$ to $\{b,d\}$, which decrease her cost from $\tfrac{\alpha}{2}+5$ to $2\tfrac{\alpha}{2}+3$, which is indeed a strict decrease since $\tfrac{\alpha}{2}<2$. Any strategy change towards a strategy using only one edge, that is, any edge swap by agent $c$, will be blocked by the other involved agent. This is true, since no agent has agent $c$ as her unique agent in maximum distance. It follows, that $c$ has to move to a strategy which buys at least two edges, if she wants to outperform her current strategy. The best possible strategy with two edges is $\{b,e\}$, since this set is the unique $2$-median-set in the graph $G_4-\{c\}$ which yields a distance-cost of $2$ for agent $c$. Unfortunately for agent $c$ the move from $\{b\}$ to $\{b,e\}$ will be blocked by agent $e$, since this move would increase her cost from $3\tfrac{\alpha}
{2}+4$ to $4\tfrac{\alpha}{2}+3$. No other strategy using two edges yields distance-cost of $2$ for agent $c$, which implies, that she has to buy more than $2$ edges to outperform her move to $\{b,d\}$. It is easy to see, that with at least three edges and less than seven edges at best a distance-cost of $2$ is possible for agent $c$, but this does not suffice to compensate the higher edge-cost. Furthermore, buying seven edges is clearly too expensive. Thus, the indicated move from $\{b\}$ to $\{b,d\}$ is a best possible feasible move for agent $c$. This move transforms $G_4$ into $G_1$ and we have completed the best response cycle.
\end{proof}




\bibliographystyle{plain}
\bibliography{dynamics}


\end{document}